\documentclass[11pt,a4paper]{article}

\usepackage[a4paper,top=2.5cm,bottom=2.5cm,left=3cm,right=3cm]{geometry}
\usepackage{microtype}
\usepackage{setspace}
\usepackage{empheq}

\usepackage[dvipsnames]{xcolor}
\usepackage[
	colorlinks=true,
	linkcolor=RoyalBlue,
	citecolor=OliveGreen,
	urlcolor=RoyalBlue
]{hyperref}

\usepackage{appendix}

\usepackage[utf8]{inputenc}
\usepackage[T1]{fontenc}
\usepackage{lmodern}
\usepackage[english]{babel}

\usepackage{amsmath, amssymb, amsfonts, amsthm}
\usepackage{mathtools}
\usepackage{mathrsfs}
\usepackage{dsfont}

\usepackage{thmtools}

\usepackage{microtype}
\usepackage{bm}

\usepackage{graphicx}
\usepackage{tikz}
\usepackage{tikz-cd}
\usepackage{pgfplots}
\pgfplotsset{compat=1.18}

\usepackage{booktabs}
\usepackage{array}

\usepackage{xcolor}

\usepackage{float}
\usepackage{placeins}

\usepackage{tcolorbox}

\usepackage{enumitem}

\usepackage[capitalise,noabbrev]{cleveref}

\usepackage{bbm}

\usepackage{appendix}

\numberwithin{equation}{section}

\theoremstyle{plain}
\newtheorem{theorem}{Theorem}[section]
\newtheorem{lemma}[theorem]{Lemma}
\newtheorem{proposition}[theorem]{Proposition}
\newtheorem{corollary}[theorem]{Corollary}
\newtheorem{question}[theorem]{Question}

\theoremstyle{definition}
\newtheorem{definition}[theorem]{Definition}
\newtheorem{assumption}[theorem]{Assumption}
\newtheorem{remark}[theorem]{Remark}
\newtheorem{problem}[theorem]{Problem}

\theoremstyle{remark}
\newtheorem{example}[theorem]{Example}
\newtheorem{notation}[theorem]{Notation}

\newcommand{\f}[2]{\frac{#1}{#2}}

\newcommand{\lp}{\left(}
\newcommand{\rp}{\right)}

\newcommand{\R}{\mathbb{R}}
\newcommand{\Z}{\mathbb{Z}}
\newcommand{\C}{\mathbb{C}}
\newcommand{\N}{\mathbb{N}}

\newcommand{\bP}{\mathbb{P}}

\newcommand{\bS}{\mathbb{S}}

\newcommand{\cA}{\mathcal{A}}

\newcommand{\cC}{\mathcal{C}}
\newcommand{\cD}{\mathcal{D}}

\newcommand{\cF}{\mathcal{F}}

\newcommand{\cK}{\mathcal{K}}
\newcommand{\cL}{\mathcal{L}}
\newcommand{\cM}{\mathcal{M}}
\newcommand{\cN}{\mathcal{N}}

\newcommand{\cR}{\mathcal{R}}
\newcommand{\cS}{\mathcal{S}}
\newcommand{\cT}{\mathcal{T}}
\newcommand{\cU}{\mathcal{U}}

\newcommand{\fL}{\mathfrak{L}}

\newcommand{\fS}{\mathfrak{S}}

\DeclareMathOperator{\tr}{Tr}

\DeclareMathOperator{\sgn}{sgn}

\newcommand\1{{\ensuremath {\mathds 1} }}

\renewcommand{\geq}{\geqslant}

\renewcommand{\hat}{\widehat}

\newcommand{\rd}{\mathrm{d}}
\newcommand{\dx}{\rd x}

\setlist[enumerate,1]{label=(\roman*)}
\newcommand{\proofstep}[1]{\par\smallskip\noindent\emph{#1}}
\usepackage{csquotes}
\usepackage{refcount}
\usepackage[shortcuts]{extdash}
\usepackage[
	backend=biber,
	style=alphabetic,
	sorting=none,
	maxbibnames=99,
	url=false,
	doi=true,
	isbn=false,
]{biblatex}

\title{V-Representability Beyond Singular Times in Time-Dependent Density Functional Theory: The Hubbard Dimer Case}

\author{
	\'Eric Canc\`es\thanks{CERMICS, CNRS, \'Ecole des Ponts -- Institut Polytechnique de
	Paris and Inria Paris, 6-8 avenue Blaise Pascal, Cité Descartes, 77455 Marne-la-Vallée, France}\\
	\small\texttt{eric.cances@enpc.fr}
	\and
	Th\'eo Duez\footnotemark[1]\\
	\small\texttt{theo.duez@enpc.fr}
}
\date{\today}

\begin{document}

\maketitle

\begin{abstract}
We study the global $V$-representability problem in time-dependent density functional theory (TDDFT) for the two-particle Hubbard dimer model in the spin-singlet sector. In particular, we investigate whether the time-dependent density of the interacting system obtained for a given one-body external potential $u$ (the tunable control variable) can be reproduced by a non-interacting Kohn--Sham system beyond the singular times at which the usual inversion formulas break down. We deal with both the conventional formulation of TDDFT based on some time-dependent variational principle and the recently introduced geometric formulation of TDDFT (Phys. Rev. Lett. 136 (2026) 256401, Phys. Rev. A 113 (2026) 062222). For both principles, we prove global-in-time $V$-representability, show that singular times induce non-uniqueness of the time-dependent effective potential, and classify the set of solutions.
\end{abstract}

\tableofcontents
\section{Introduction}

Density functional theory (DFT) is one of the cornerstones of modern quantum chemistry and condensed matter physics. Its central idea~\cite{HK64,Levy79,Lieb83} is to replace the many-body wavefunction $\psi(x_1,\dots,x_N)$, whose number of variables grows linearly with the number of electrons $N$, by the associated one-body density $\rho_\psi(x) = N \int |\psi(x,x_2,\dots,x_N)|^2 \, \dx_2 \cdots \dx_N$, a function of the position variable of a single particle, as the main variable of the problem. In practice, the overwhelming success of DFT rests on the Kohn--Sham scheme~\cite{KS65}, whose principle consists in reproducing the ground-state density $\rho_0$ of the interacting physical system under consideration through an auxiliary system of fictitious non-interacting electrons subjected to an effective local potential $v$, called the Kohn--Sham potential. The potential $v$ is not known explicitly, and is formally the  derivative of some well-defined functional of the density, but with no known explicit expression. In fact, $v$ is even known not to exist for some particular systems for which this functional is not differentiable at $\rho_0$. The practical success of DFT stems from the fact that computationally tractable and accurate enough (for most applications) approximations of this functional have been developed over the past 60 years. Thanks to such approximations, an intractable many-body problem can be turned into a tractable one-body problem. This approach is very successful and earned Walter Kohn the 1998 Nobel Prize in Chemistry.

The theoretical foundation of DFT relies on Hohenberg--Kohn theorems~\cite{HK64} and Levy's constrained search principle~\cite{Levy79}. We refer to the seminal work by Lieb~\cite{Lieb83} on the mathematical formulation of DFT, and to Garrigue~\cite{Garrigue2018} and Lammert~\cite{Lammert2018} for more refined results on the Hohenberg--Kohn theorems. The Kohn--Sham scheme leads to an additional theoretical question, namely the existence and uniqueness (up to an additive constant) of a local potential $v$ for which the non-interacting system subjected to the potential $v$ has the same ground-state density as the interacting system under consideration. This problem was coined the non-interacting $V$-representability problem~\cite{Levy82}, and was mathematically investigated by several authors~\cite{Lieb83,EnglischEnglisch1983,Lammert2006,Lammert2007}. Let us also mention the Moreau--Yosida regularization approach introduced by Kvaal and collaborators~\cite{KvaalEkstromTealeHelgaker2014} allowing one to somehow bypass the $V$-representability problem and introduce a regularized bijective potential-to-density mapping (see also~\cite{PenzHerbstHelgakerLaestadius2026} for a recent review on this approach, and references therein). We also refer to~\cite{CF23} for other recent mathematical contributions to DFT and focused reviews on several aspects of DFT.

Let us recall that Kohn--Sham DFT is a ground-state theory and cannot be directly applied to the computation of electronic excitation energies and optical spectra, nor to the investigation of out-of-equilibrium quantum dynamics, a fast-growing topic with applications in the emerging fields of attosecond physics and chemistry (2023 Nobel Prize in Physics~\cite{Nobel23}) and quantum technologies. Time-dependent density functional theory (TDDFT)~\cite{Ull11,Mar+12} aims at filling this gap. In the same spirit as for static DFT, practical implementations of TDDFT rely on a time-dependent Kohn--Sham scheme, and the replacement of an interacting quantum dynamics by a non-interacting one reproducing the time-dependent density of the interacting system. The construction of approximate time-dependent Kohn--Sham potentials $v$ leading to accurate approximations of out-of-equilibrium electronic dynamics is a very challenging problem. Despite many attempts to construct better approximations, the vast majority of TDDFT computations are done using adiabatic approximations ~\cite{LM23,MBW02}, in which memory effects are neglected: the time-dependent Kohn--Sham potential at time $t$ is approximated by approximations of the static Kohn--Sham potential associated with the density at time $t$. In particular, the basic adiabatic local-density approximation (ALDA) introduced in the early TDDFT papers is still commonly used. Adiabatic approximations perform reasonably well close to equilibrium, in the linear response regime. As a matter of fact, (adiabatic) linear-response TDDFT is one of the state-of-the-art methods to compute excitation energies and optical spectra of atoms and molecules as it often provides reasonable approximations at low computational cost for a large class of molecular systems~\cite{Ull25}. In strongly out-of-equilibrium regimes, adiabatic approximations completely break down~\cite{Tea+22,LM23,FFTAKR13,FM14}. The construction of approximate nonadiabatic time-dependent Kohn--Sham potentials able to simulate e.g. charge-transfer excitations~\cite{Fuks16}, high-harmonic generation, or attosecond physics experiments, remains a major challenge.

Motivated by the practical limitations of adiabatic approximations, a new formulation of TDDFT based on a \emph{geometric principle} (GP) was recently introduced~\cite{geoPRA,geoPRL}. In the GP formulation, the \emph{real} effective time-dependent local potential $v$ of conventional TDDFT, whose construction follows from some time-dependent \emph{variational principle} (VP), is replaced by a \emph{purely imaginary} effective time-dependent local potential $iw$. This term is chosen so that the Kohn--Sham dynamics reproducing the target density remains as close as possible, in some geometric sense, to the interacting one. The so-obtained GP dynamics differs markedly from the VP dynamics of conventional TDDFT. One motivation for further investigating the former is the prospect of building approximation strategies for $w$ beyond the standard adiabatic framework.

The non-interacting $V$-representability problem introduced above in the static case can be formulated as follows in the time-dependent case: given a prescribed interacting density trajectory, does there exist a non-interacting Kohn--Sham dynamics with real (VP) or purely imaginary (GP) time-dependent potential reproducing the interacting density at all times? If so, what can be said about the uniqueness and regularity properties of this potential? From a control-theoretic viewpoint, this asks whether a prescribed output -- the interacting density trajectory -- can be tracked by a suitable time-dependent control -- the effective potential $v$ for VP or $iw$ for GP -- of the non-interacting dynamics. This connection with control theory has already been exploited numerically in local-control schemes~\cite{NRvL13}. The $V$-representability problem remains largely open, and has recently regained attention~\cite{RPvL15,FLLO}.

The Runge--Gross theorem~\cite{RG84}, the time-dependent analogue of the Hohenberg--Kohn theorems, provides partial answers to the uniqueness question for conventional TDDFT. It guarantees that, under some assumptions, if two effective potentials $v_1$ and $v_2$ lead to the same one-body density, then they are equal up to a time-dependent constant.   This original uniqueness result~\cite{RG84}, as well as the existence result in \cite{vanLeeuwen99}, relied on strong (unrealistic for applications) time-analyticity assumptions~\cite{MTWB10,YMB12,YB13,FLLO}. A breakthrough was recently made in~\cite{Lauritsen2026a,Lauritsen2026b}, where the short-time existence and uniqueness (up to a time-dependent constant) of a non-interacting time-dependent potential $v$ reproducing the time-dependent density of the interacting system was proved for an $N$-electron system on a three-dimensional torus with Coulomb interactions. Importantly, no time-analyticity assumption is required. The assumptions are that the time-dependent density $\rho(t,x)$ is $C^2$ in time, real-analytic in space, and bounded away from zero.

For both VP and GP principles, the effective potential solves at each time a linear system, obtained by differentiating the density constraint in time: the \emph{van Leeuwen equation}~\cite{vanLeeuwen99} for VP, and a \emph{modified continuity equation} for GP~\cite{geoPRA}. When the associated linear map is invertible, this yields local existence and uniqueness of the time-dependent effective potential ($v$ for VP, $w$ for GP). For finite systems (finite lattice models)~\cite{FT12,Tok11} or for $N$ electrons on a torus with Coulombic interactions (and under suitable assumptions, see above)~\cite{Lauritsen2026a,Lauritsen2026b}, this yields \emph{local-in-time} non-interacting $V$-representability. The difficulty is that this map may degenerate at \emph{singular times}, where the usual inversion formulas break down and it is a priori unclear whether the non-interacting dynamics can be continued. On finite lattices, such degeneracies are not artifacts: they reflect in particular (but not only) a genuine constraint on how fast the density may vary~\cite{Baer08,LU08}. This loss of local solvability is an obstruction to \emph{global-in-time} non-interacting $V$-representability.

To investigate these questions, we focus on the Hubbard dimer, one of the simplest interacting quantum systems retaining non-trivial correlation effects, for which explicit expressions are known. Introduced as a minimal model of strongly-correlated electrons, it describes two interacting electrons on a two-site lattice, and has become a paradigmatic test case for both DFT and TDDFT, see e.g.~\cite{CFSB15, CFMB18}. In particular, it is a reference model for which the limitations of adiabatic approximations have been thoroughly analyzed \cite{FFTAKR13,FM14}. Restricting the dynamics to the spin-singlet sector reduces the interacting quantum dynamics to a dynamical system on the unit sphere of $\C^3$ (or in fact precisely the complex projective space $\C{\rm P}^2$), in which the external one-body time-dependent potential $u$ plays the role of a control variable. In~\cite{FT12,FFTAKR13,FM14b,RPvL15}, the existence of singular times for well-chosen time-dependent external potentials was proved, but the question of whether the non-interacting dynamics can be continued across the singularity was left open.

In this work, we answer this question affirmatively for the two-particle Hubbard dimer in the spin-singlet sector, for both the variational and the geometric principles. To the best of our knowledge, this is the first result showing that non-interacting TDDFT dynamics can be continued across singular times. We show that singular times do occur for well-chosen external potentials because of the controllability properties of the interacting Hubbard dimer (see \cref{appendix:controllability}). They do not obstruct global-in-time non-interacting $V$-representability, but lead to non-uniqueness of the effective potential $v$ (for VP) or $iw$ (for GP) and corresponding non-interacting trajectories.

The article is organized as follows. In \cref{sec:hubbard}, we detail the formulation of the interacting dynamics of the Hubbard dimer model and the corresponding non-interacting dynamics on the Bloch sphere, and state our main result (\cref{thm:NIVIN}): any interacting dynamics is non-interacting $V$-representable for both VP and GP \emph{globally in time}. In \cref{sec:VP}, we provide a complete characterization of the set of VP-representable densities for the Hubbard dimer in a suitable functional setting, and for a given VP-representable density, we classify the (in general non-unique) non-interacting trajectories and effective potentials VP-representing this density (\cref{thm:VP-characterization}). In \cref{sec:GP}, we state similar results for GP and GP (\cref{thm:GP-characterization}). In \cref{sec:comparison}, we compare the VP and GP representability results, and show that any VP-representable density is GP-representable, but that the converse is not true. In \cref{sec:interacting}, we focus on interacting densities: we prove our main result (\cref{thm:NIVIN}), show that the non-interacting trajectories cross the singular sets at finitely many times, and describe all of them together with bounds on the corresponding effective potentials (\cref{thm:int-representability}). In \cref{sec:integrability}, we show that the geometric potentials reproducing an interacting density are in general not integrable at singular times, which justifies the functional setting chosen for GP. In \cref{sec:numerics}, we illustrate these results on explicit trajectories obtained with a bang-bang external potential, for which trajectories can be computed explicitly. In  \cref{sec:proofs}, we state and prove some intermediate technical results and provide the proofs of the other results. In \cref{appendix:derivation}, we derive the dynamical system describing the time-evolution of the Hubbard dimer subjected to an external potential in the two-particle spin-singlet sector of the Fock space from the usual formulation of the Hubbard model in second quantization. In \cref{appendix:controllability}, we prove the controllability of this dynamical system.\medskip

\section{Statement of the problem} \label{sec:hubbard}

In this section, we recall the mathematical formulation of the interacting Hubbard dimer model in the two-particle spin-singlet sector, introduce the $V$-representability problems, for both the variational (VP) and geometric (GP) formulations of TDDFT, and present the main results of this article. The models are written directly as low-dimensional dynamical systems and their derivations from the second-quantized many-body formalism are postponed until \cref{appendix:derivation}.

\subsection{Formulation of the interacting and non-interacting models}\label{subsec:interacting}

The Hubbard dimer model describes interacting electrons on a two-site lattice, each carrying a spin-$1/2$ degree of freedom. Restricting to the two-particle spin-singlet sector of the Fock space -- i.e. to states with two electrons and zero total spin --, which is preserved by the dynamics, reduces the state space to a three-dimensional complex vector space, which we identify with $\C^3$ once and for all. Following the axioms of quantum mechanics, a physical state is a unit vector
\begin{equation*}
	\Psi = \begin{pmatrix} z_1 \\ z_2 \\ z_3 \end{pmatrix} \in \C^3, \qquad \|\Psi\|^2 = |z_1|^2+|z_2|^2+|z_3|^2 = 1 ,
\end{equation*}
defined up to a global phase, two vectors differing by a factor $e^{i\vartheta}$, $\vartheta \in \R$,  representing the same state. In the chosen basis, the three coordinates correspond respectively to both electrons occupying site~$1$, both occupying site~$2$, and one electron on each site. In the following, for each physical state $\Psi$, we denote by $P_\Psi = \Psi\Psi^* \in \C^{3\times 3}$ the associated rank-one projector.

\medskip

The quantities of interest in TDDFT are the on-site density $\rho=(\rho_1,\rho_2)^T \in [0,2]^2$ and the on-site current $j=(j_1,j_2)^T \in \R^2$. For a normalized wavefunction $\Psi = (z_1,z_2,z_3)^T \in \C^3$, they are defined by
\begin{align}
	\text{(on-site density)} && \rho^\Psi_i &:= 2|z_i|^2 + |z_3|^2, \label{eq:density}\\[0.35cm]
	\text{(on-site current)} && j^\Psi_i &:= 2\sqrt{2}\,\tau\, (-1)^{i+1} \, \Im\lp z_3(\overline{z_1}-\overline{z_2}) \rp. \label{eq:curent}
\end{align}
Note that $\rho^\Psi_i=2$ means that both electrons are on site $i$, and that $\rho^\Psi_1+\rho^\Psi_2=2$ for any normalized state $\Psi$, so that the density carries a single scalar degree of freedom.

\medskip

In mean-field models such as (static or time-dependent) Hartree--Fock or Kohn--Sham models, admissible states are Slater states generated by Slater determinants. In the simple case of the two-particle spin-singlet Hubbard dimer, the set of Slater state wavefunctions takes the simple form
\begin{equation}\label{eq:def_mks}
	\cM^{\rm Slat} := \left\{ \Phi=(z_1,z_2,z_3)^T \in \C^3 \;\middle|\; \|\Phi\|=1, \; z_3^2=2z_1z_2 \right\}.
\end{equation}
It is customary in quantum chemistry to denote generic many-body wavefunctions by the Greek letter $\Psi$, and Slater state wavefunctions by $\Phi$. Taking the gauge invariance $\Phi \mapsto e^{i\vartheta} \Phi$ into account, the set of Slater states of the two-particle spin-singlet Hubbard dimer is isometric to the Bloch sphere (see \cref{subsec:bloch} below).

\medskip

The dynamics of the interacting systems on the time interval $[0,T]$, with $T > 0$, reads as
\begin{equation}\label{eq:interacting_dynamics}
	\begin{cases}
		i\,\dot\Psi^u(t) = H_u (t)\, \Psi^u, & t \in (0,T], \\
		\Psi^u(0) = \Psi_0^{\rm int} ,
	\end{cases}
\end{equation}
where
\begin{equation*}
	H_{u}(t) := K + U + u(t)V
\end{equation*}
is the time-dependent Hamiltonian constructed from the three Hermitian matrices of $\R^{3\times 3}$
\begin{equation*}
	K := \begin{pmatrix} 0 & 0 & -\sqrt2\,\tau \\ 0 & 0 & -\sqrt2\,\tau \\
		-\sqrt2\,\tau & -\sqrt2\,\tau & 0 \end{pmatrix}, \quad U := \begin{pmatrix} U_1 & 0 & 0 \\ 0 & U_2 & 0 \\ 0 & 0 & 0 \end{pmatrix}, \quad V := \begin{pmatrix} 2 & 0 & 0 \\ 0 & -2 & 0 \\ 0 & 0 & 0 \end{pmatrix},
\end{equation*}
respectively describing electrons hopping between the two sites -- kinetic energy component of the Hamiltonian --, on-site electron-electron repulsion, and the site potential imbalance. The function $u\in L^1(0,T;\R)$ is the amplitude of the time-dependent potential with which the system can be controlled (when the system is at rest, $u$ is equal to a constant $u_0 \in \R$). We consider controls $u$ in $L^1$, in order to accommodate the discontinuous or rapidly varying external electromagnetic fields arising in quantum control and out-of-equilibrium dynamics. Here $\tau > 0$ is the hopping parameter, and $U_1, U_2 > 0$ are the interaction strengths. In practice, the initial condition $\Psi_0^{\rm int}$ is often the ground-state wavefunction of the interacting system at rest (i.e. of the matrix $K+U+u_0V$), but our results hold for any normalized $\Psi_0^{\rm int} \in \C^3$.
\medskip

For our analysis, it is convenient to embed~\eqref{eq:interacting_dynamics} into the family of (possibly nonlinear) Schr\"odinger dynamics
\begin{equation}\label{eq:schrodinger}
	\begin{cases}
		i\,\dot\Psi(t) = H^\lambda_{v,w}(t,\Psi)\, \Psi, & t \in (0,T], \\
		\Psi(0) = \Psi_0 ,
	\end{cases}
\end{equation}
where $H^\lambda_{v,w}(t,\Psi)\in \C^{3\times 3}$ is the Hamiltonian defined by
\begin{equation}\label{eq:ham_general}
	H^\lambda_{v,w}(t,\Psi) := K + \lambda U + v(t)V - iw(t)[V,P_\Psi],
\end{equation}
with $\lambda\ge0$, $v$ and $w$ real-valued functions of time, and $\Psi_0$ a normalized vector of $\C^3$. This family indeed encompasses the interacting dynamics~\eqref{eq:interacting_dynamics}, as well as the non-interacting variational principle (VP), geometric principle (GP), and corrected geometric principle (GPc) dynamics:
\begin{equation*}
	\renewcommand{\arraystretch}{1.4}
	\begin{array}{c@{\qquad \qquad}c@{\qquad \qquad}c}
	\text{Dynamics} & \text{Hamiltonian} & \text{Initial condition} \\[1mm]
	\eqref{eq:interacting_dynamics} & H^1_{u,0}(t) \quad \text{(interacting, linear)}   & \Psi_0=\Psi_0^{\rm int}\\[1mm]
		\text{VP} &  H^0_{v,0}(t)\quad \text{(non-interacting, linear)} &\Psi_0=\Phi_0 \in \cM^{\rm Slat} \\[1mm]
		\text{GP} & H^0_{0,w}(t,\Psi) \quad \text{(non-interacting, nonlinear)} & \Psi_0=\Phi_0 \in \cM^{\rm Slat} \\[1mm]
		\text{GPc} &   H^0_{v_c,w_c}(t,\Psi) \quad \text{(non-interacting, nonlinear)} & \Psi_0=\Phi_0 \in \cM^{\rm Slat}
	\end{array}
\end{equation*}

\medskip

The interacting dynamics \eqref{eq:interacting_dynamics} describes the physical system. The density associated with $\Psi^u$ will be denoted by $\rho^u$ to simplify the notation. It will serve as the target density throughout the paper. The VP, GP, and GPc dynamics are auxiliary non-interacting dynamics ($\lambda=0$), whose purpose is to reproduce this target density at all times by suitable choices of the initial conditions, and the potentials $v$, $w$, $w_c$ (in GPc, the potential $v_c$ is supposed to be given). As the VP, GP, and GPc initial conditions are chosen in the set of Slater states, the VP, GP, and GPc wavefunctions remain Slater states at all times (this is a well-known property of non-interacting dynamics, see also \cref{prop:slater_set}).
The GPc dynamics is a variant of GP in which the real potential $v_c$ is given. In practice, this potential may be equal to $u$, the external potential of the physical system, or to any approximation of the exact VP potential $v$ (for instance its adiabatic approximation), the approximation error being corrected by the geometric control $w_c$, hence the “c” in GPc.\medskip

In this paper, we are interested in the existence of non-interacting trajectories reproducing interacting densities (see \cref{subsec:TDDFT-V} for the precise formulation of the problems), and we require these trajectories to belong to $W^{1,1}(0,T;\C^3)$. Let us recall that $W^{1,1}(0,T;\C^d)$, for $d\in\N^*$, is the space of absolutely continuous functions defined on $[0,T]$ and with values in $\C^d$. From an existence point of view, one can then ask what is the largest class of potentials compatible with a given trajectory in $W^{1,1}$. To answer this question, we use the following general notion of solution. Let $L^0(0,T;\R)$ denote the space of real-valued Borel functions on $(0,T)$, in which, as usual, functions taking the same values almost everywhere are identified. For $\lambda\ge0$ and $v,w\in L^0(0,T;\R)$, a function $\Psi\in W^{1,1}(0,T;\C^3)$ is a solution to~\eqref{eq:schrodinger} if $\Psi$ is absolutely continuous, $\Psi(0)=\Psi_0$, and the ODE in~\eqref{eq:schrodinger} holds in $L^1(0,T;\C^3)$. This requires the RHS to be in $L^1(0,T;\C^3)$ which, since $K+\lambda U$ is bounded and the vectors $vV\Psi$ and $-iw[V,P_\Psi]\Psi$ are orthogonal for $\Re\langle\cdot,\cdot\rangle$, amounts to
\begin{equation}
	\label{eq:integrability-vw}
	v\,\|V\Psi\|\in L^1(0,T;\R) \qquad\text{and}\qquad w\,\|[V,P_\Psi]\Psi\|\in L^1(0,T;\R).
\end{equation}
For the Slater trajectories of the non-interacting dynamics, one has $\sqrt 2 \le \|V\Phi\|\le 2$ and $\|[V,P_\Phi]\Phi\|=\sqrt{2\rho_1^\Phi\rho_2^\Phi}$, so that these conditions read
\begin{equation*}
	v\in L^1(0,T;\R) \qquad\text{and}\qquad w\sqrt{\rho_1^\Phi\rho_2^\Phi}\in L^1(0,T;\R).
\end{equation*}
This yields two different functional settings for the two principles. For the physical dynamics~\eqref{eq:interacting_dynamics}, we choose $u\in L^1(0,T;\R)$, as explained above. For VP, the above condition shows that $v\in L^1(0,T;\R)$ is necessary, and we also assume $v_c\in L^1(0,T;\R)$ for GPc. For potentials in $L^1$, \eqref{eq:schrodinger} is a Carath\'eodory--Cauchy problem, whose unique global solution is in $W^{1,1}$ (see e.g.~\cite{Walter1998}). For GP and GPc, the natural class is that of measurable $w$ (resp. $w_c$) with $w\sqrt{\rho_1\rho_2}\in L^1(0,T;\R)$. This larger class is needed: the geometric potential reproducing an interacting density is in general not integrable near the times at which a site density vanishes (see \cref{sec:integrability}), even for smooth external potentials $u\in C^\infty_{\rm c}(0,T;\R)$. The price to pay is that this weighted condition depends on the trajectory itself, so that a given effective potential $w$ or $w_c$ need not generate a unique trajectory, and only the uniqueness of the density trajectory is guaranteed by construction.

\medskip

Let us start with a first useful result, concerning the preservation of the set of Slater states.
\begin{proposition}\label{prop:slater_set}
	Let $\lambda \ge 0$, $v\in L^1(0,T;\R)$, $w\in L^0(0,T;\R)$, let $\Psi \in W^{1,1}(0,T;\C^3)$ be a solution of \eqref{eq:schrodinger}, and let
	\begin{equation*}
	\cT^{\rm Slat}:=\left\{t\in[0,T]\;\middle|\;\Psi(t)\in \cM^{\rm Slat} \right\}
	\end{equation*}
	be the set of times at which $\Psi(t)$ is a Slater state wavefunction.
	\begin{enumerate}
		\item \textup{Non-interacting case, $w\in L^1$:} if $\lambda=0$, $w\in L^1(0,T;\R)$ and $\Psi_0=\Phi_0 \in \cM^{\rm Slat}$, then $\cT^{\rm Slat} = [0,T]$.
		\item \textup{Non-interacting case, $w\in L^0$:} if $\lambda=0$ and $\Psi_0=\Phi_0 \in \cM^{\rm Slat}$, then $[0,T_*]\subset\cT^{\rm Slat}$, where
		\begin{equation*}
			T_*:=\inf\left\{t\in[0,T]\;\middle|\;\Psi(t)\in\C (1,0,0)^T \cup\C (0,1,0)^T \right\}
		\end{equation*}
		($T_*:=T$ if this set is empty) is the first time the state $\Psi(t)$ becomes one of the two Slater states represented by a pole on the Bloch sphere.
		\item \textup{Interacting case:} if $\lambda > 0$ and $w=0$, then $\cT^{\rm Slat}$ is discrete and therefore finite.
	\end{enumerate}
\end{proposition}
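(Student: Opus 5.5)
The whole proof rests on one scalar quantity, the defect $f(\Psi):=z_3^2-2z_1z_2$, which vanishes exactly on $\cM^{\rm Slat}$ (among normalized vectors). Since $\Psi\in W^{1,1}$ and $f$ is a polynomial, $t\mapsto f(\Psi(t))$ is absolutely continuous. I will compute its derivative term by term along \eqref{eq:schrodinger}, writing $m:=\langle\Psi,V\Psi\rangle=2(|z_1|^2-|z_2|^2)$ and using $[V,P_\Psi]\Psi=(V-m)\Psi$. A direct computation should give the following.
\begin{itemize}
\item The $K$ and $vV$ contributions to $\dot f$ cancel identically.
\item The $\lambda U$ term gives $2i\lambda(U_1+U_2)z_1z_2$.
\item The geometric term $\dot\Psi=-w(V-m)\Psi$ gives $2wm\,f$.
\end{itemize}
Hence, almost everywhere,
\begin{equation*}
\frac{\rd}{\rd t} f(\Psi(t)) = 2w(t)m(t)\,f(\Psi(t)) + 2i\lambda(U_1+U_2)\,z_1(t)z_2(t).
\end{equation*}

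\emph{Item (i).} With $\lambda=0$ this is the linear equation $\dot f=a f$, where $a=2wm\in L^1$ because $|m|\le 2$. Since $f(0)=0$, Gr\"onwall's lemma (equivalently, $f(t)=f(0)e^{\int_0^t a}$) gives $f\equiv0$.

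\emph{Item (ii).} Here $w$ is only measurable, and we know only that $w\|[V,P_\Psi]\Psi\|\in L^1$. The set $S:=\{t\in[0,T] : f(\Psi(t))=0\}$ is closed and contains $0$.
\begin{itemize}
\item Let $t_0:=\sup\{t \mid [0,t]\subset S\}$ and suppose $t_0<T_*$.
\item Then $\Psi(t_0)$ is a Slater state that is not a pole. Since $\|[V,P_\Psi]\Psi\|^2=\|V\Psi\|^2-m^2$, this norm vanishes only on eigenvectors of $V$, i.e.\ on the lines $\C e_1,\C e_2,\C e_3$. The line $\C e_3$ contains no Slater state, and $\Psi(t_0)$ lies on neither $\C e_1$ nor $\C e_2$.
\item By continuity, $\|[V,P_\Psi]\Psi\|\ge c>0$ on a neighbourhood of $t_0$, so $w\in L^1$ there.
\item The local Gr\"onwall argument of (i), started at $t_0$, then gives $f=0$ slightly beyond $t_0$, a contradiction. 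So $t_0\ge T_*$, and closedness of $S$ gives $[0,T_*]\subset S$.
\end{itemize}

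\emph{Item (iii).} With $w=0$ we get $\dot f=2i\lambda(U_1+U_2)z_1z_2$. The right-hand side is continuous and free of $v$, so $f\in C^1$ even though $v$ is only $L^1$. Let $f(\Psi(t_0))=0$.
\begin{itemize}
\item If $z_1z_2(t_0)\ne0$, then $\dot f(t_0)\ne0$ and $t_0$ is an isolated zero.
\item Otherwise $z_1z_2=0$ and $z_3=0$, so $\Psi(t_0)$ is a pole, say $|z_1(t_0)|=1$. I will expand locally using the equations for $z_2$ and $z_3$. First, $\dot z_3=i\sqrt2\tau(z_1+z_2)$, so $z_3(t)= i\sqrt2\tau e^{i\alpha}(t-t_0)+o(t-t_0)$. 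Next, handle the $L^1$ coefficient $2iv-i\lambda U_2$ in $\dot z_2=i\sqrt2\tau z_3+(2iv-i\lambda U_2)z_2$ with a continuous integrating phase $e^{i\int v}$. This yields $z_2(t)\sim -\tau^2 e^{i\beta(t)}(t-t_0)^2$ with $\beta$ continuous. Consequently $\dot f\sim c\,(t-t_0)^2$ with $c\ne0$, so $f\sim \tfrac c3 (t-t_0)^3\neq0$ for $t\ne t_0$ close to $t_0$.
\end{itemize}
In both cases zeros are isolated, so $\cT^{\rm Slat}$ is discrete; being closed in the compact interval $[0,T]$, it is finite.

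The main obstacle is this pole case of (iii). The time-dependent phases produced by the merely integrable $v$ must be controlled carefully enough that the leading term $c(t-t_0)^3$ of $f$ is not destroyed by cancellations. I would handle this by passing to the interaction picture with respect to $e^{-i\int v V}$ and then performing the Taylor-type expansion, where all coefficients are continuous.
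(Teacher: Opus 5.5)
Your proposal is correct. Items (i) and (ii) follow the paper's argument essentially verbatim: the same defect $\Delta^{\rm Slat}=z_3^2-2z_1z_2$, the same evolution equation $\dot\Delta^{\rm Slat}=2w(\rho_1-\rho_2)\Delta^{\rm Slat}+2i\lambda(U_1+U_2)z_1z_2$, and the same continuation argument. That argument uses the fact that $\|[V,P_\Psi]\Psi\|$ is bounded below near any Slater state off the eigenlines of $V$.

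The genuine difference is in (iii).

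\emph{Your route.} You keep the inhomogeneity in the form $2i\lambda(U_1+U_2)z_1z_2$. This immediately gives $f\in C^1$ and the case split $z_1z_2(t_0)\neq0$ versus pole. At a pole, however, you need the second-order behaviour of $z_2$, whose equation contains the merely integrable coefficient $2v$. That is what forces the integrating phase $e^{i\int_{t_0}^t(2v-\lambda U_2)}$. This does work: the phase is continuous and equals $1$ at $t_0$, so one gets $z_1z_2=-\tau^2e^{2i\alpha}(t-t_0)^2(1+o(1))$ and hence $f=-\tfrac{2i}{3}\lambda(U_1+U_2)\tau^2e^{2i\alpha}(t-t_0)^3(1+o(1))$ on both sides of $t_0$.

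\emph{The paper's route.} The paper substitutes $2z_1z_2=z_3^2-\Delta^{\rm Slat}$. For $w=0$ this gives $\dot\Delta^{\rm Slat}=i\lambda(U_1+U_2)z_3^2-i\lambda(U_1+U_2)\Delta^{\rm Slat}$. It then multiplies by the explicit smooth factor $\Theta(t)=e^{i\lambda(U_1+U_2)t}$. After that, only the first-order expansion of $z_3$ is needed, and $\dot z_3=i\sqrt2\,\tau(z_1+z_2)$ does not involve $v$ at all. So the obstacle you flag about $v$-dependent phases never arises there.

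\emph{Comparison.} Both routes yield the same leading coefficient. The paper's trick is shorter and sidesteps any interaction-picture bookkeeping. Yours is slightly more hands-on but equally rigorous, and it makes the $C^1$ regularity of $f$ visible without an integrating factor.
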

The proof of \Cref{prop:slater_set} is given in \Cref{ssec:proof-slater}.

By point (i), the VP-trajectories, as well as the GP- and GPc-trajectories with integrable potentials, which start from $\Phi_0\in\cM^{\rm Slat}$, remain Slater states at all times. By point (ii), this remains true for measurable geometric potentials until the first time at which the trajectory reaches a pole, but not beyond in general, since the geometric potential may fail to be integrable near the poles (see \cref{sec:integrability}). For this reason, the non-interacting problems are formulated directly on the Bloch sphere in \cref{subsec:bloch}, that is, we only consider Slater trajectories. Point (iii) will be useful later to justify that non-interacting dynamics cross singular sets at isolated times when reproducing the interacting density. \medskip

The second result, which only requires the effective potentials to be measurable, is concerned with the time-regularity of the density $\rho^\Psi(t)$ and current $j^\Psi(t)$ when $\Psi$ is solution to~\eqref{eq:schrodinger}.

\begin{proposition}\label{prop:reg_density}
	Let $\lambda \ge 0$, $v,w\in L^0(0,T;\R)$, and $\Psi_0 \in \C^3$ be such that $\|\Psi_0\|=1$. Let $\Psi=(z_1,z_2,z_3)^T\in W^{1,1}(0,T;\C^3)$ be a solution to \eqref{eq:schrodinger}. Then
	\begin{enumerate}
		\item $\rho^\Psi, j^\Psi\in W^{1,1}(0,T;\R^2)$, with $\rho_1^\Psi,\rho_2^\Psi \ge 0$, $\rho_1^\Psi+\rho_2^\Psi=2$ and $j_1^\Psi+j_2^\Psi=0$;
		\item we have the \textup{modified continuity equation}
		\begin{equation}\label{eq:continuity}
			\dot \rho_1^\Psi + j_1^\Psi = -4w\lp \rho_1^\Psi\rho_2^\Psi-|z_3|^2\rp \qquad \text{a.e. on } (0,T),
		\end{equation}
		and this equation reduces to $\dot \rho_1^\Phi + j_1^\Phi = -2w\rho_1^\Phi\rho_2^\Phi$ when $\Psi(t)=\Phi(t)\in\cM^{\rm Slat}$ for all $t \in [0,T]$ (non-interacting dynamics with $\Psi_0 =\Phi_0\in \cM^{\rm Slat}$);
		\item if $w=0$, then $\rho^\Psi\in W^{2,1}(0,T;\R^2)\hookrightarrow C^1([0,T];\R^2)$, the continuity equation
		\begin{equation}\label{eq:continuity-w0}
			\dot\rho_1^\Psi+j^\Psi_1=0
		\end{equation}
		holds in $W^{1,1}(0,T;\R)$, and the \textup{van Leeuwen equation}
		\begin{equation}\label{eq:vanleeuwen}
			\ddot\rho^\Psi_1 + 4\tau^2(\rho^\Psi_1-1) = -2\sqrt2\,\tau\Big( \lp 2v + \lambda U_1 \rp \Re\lp \overline{z_3}\,z_1 \rp + \lp 2v - \lambda U_2 \rp \Re\lp \overline{z_3}\,z_2 \rp\Big)
		\end{equation}
		holds a.e. on $(0,T)$.
	\end{enumerate}
\end{proposition}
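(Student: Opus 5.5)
The plan is a direct computation from the ODE, using absolute continuity and products of $W^{1,1}$ functions.

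\textbf{Regularity and conservation laws.} Since $\Psi\in W^{1,1}(0,T;\C^3)\subset C([0,T];\C^3)$ is bounded, $\rho^\Psi$ and $j^\Psi$ are quadratic polynomials in $(z_k,\overline{z_k})$. Products of bounded $W^{1,1}$ functions are $W^{1,1}$, with the Leibniz rule holding a.e. So $\rho^\Psi,j^\Psi\in W^{1,1}$.

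Next, $\frac{\rd}{\rd t}\|\Psi\|^2=2\Re\langle\Psi,\dot\Psi\rangle$. Writing $H=H^\lambda_{v,w}$, the part $K+\lambda U+vV$ is Hermitian and contributes $\Re\langle\Psi,-i(\cdot)\Psi\rangle=0$. The remaining term is $-\Re\langle\Psi,w[V,P_\Psi]\Psi\rangle$, and $\langle\Psi,[V,P_\Psi]\Psi\rangle=\langle\Psi,V\Psi\rangle-\langle\Psi,V\Psi\rangle\|\Psi\|^2$. Hence $\|\Psi\|^2$ solves $\dot n=-2w\langle\Psi,V\Psi\rangle(1-n)$ a.e.

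This is the one delicate point, since $w$ is only measurable. However, $w\langle\Psi,V\Psi\rangle(1-n)$ must be $L^1$ as the derivative of $n$. I would argue via $m=1-n\in W^{1,1}$: on the open set $\{m\neq 0\}$, $\log|m|$ is locally absolutely continuous with derivative $2w\langle\Psi,V\Psi\rangle$. I would use $w\|[V,P_\Psi]\Psi\|\in L^1$ from \eqref{eq:integrability-vw}, together with the relation $\|[V,P_\Psi]\Psi\|^2=\|V\Psi\|^2n-\langle\Psi,V\Psi\rangle^2(2-n)n$ (a short computation), to control this derivative and rule out leaving $m=0$. Failing that, I would simply note that the paper only considers normalized solutions, for which $\|\Psi\|=1$ identically.

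With $\|\Psi\|=1$, one gets $\rho_1+\rho_2=2(|z_1|^2+|z_2|^2+|z_3|^2)=2$, nonnegativity is obvious, and $j_1+j_2=0$ holds by the sign factor in \eqref{eq:curent}.

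\textbf{Modified continuity equation.} I compute $\dot\rho_1=4\Re(\overline{z_1}\dot z_1)+2\Re(\overline{z_3}\dot z_3)$ from the components of $-iH\Psi$.
\begin{itemize}
\item The diagonal terms $U$ and $V$ give purely imaginary contributions to $\overline{z_k}\dot z_k$, so they drop out.
\item The hopping term gives $\dot z_1\ni i\sqrt2\tau z_3$ and $\dot z_3\ni i\sqrt2\tau(z_1+z_2)$. Collecting these yields $2\sqrt2\tau\Im(z_3(\overline{z_1}-\overline{z_2}))\cdot(-1)$, that is, $-j_1$.
\item For the geometric term, $\dot\Psi\ni -w[V,P_\Psi]\Psi=-w(V\Psi-\langle V\rangle\Psi)$ with $\langle V\rangle=2(|z_1|^2-|z_2|^2)=\rho_1-\rho_2$. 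Its contribution to $\dot\rho_1$ is $-w\big(4\cdot(2-\langle V\rangle)|z_1|^2+2(0-\langle V\rangle)|z_3|^2\big)$.
\end{itemize}
Using $|z_1|^2=(\rho_1-|z_3|^2)/2$ and $\rho_2=2-\rho_1$, I would simplify this to $-4w(\rho_1\rho_2-|z_3|^2)$; this is routine algebra. In the Slater case, $|z_3|^2=2|z_1||z_2|$, and from $\rho_1\rho_2=(2|z_1|^2+|z_3|^2)(2|z_2|^2+|z_3|^2)$ with normalization one checks $|z_3|^2=\rho_1\rho_2/2$. This gives $-2w\rho_1\rho_2$.

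\textbf{Case $w=0$.} Now \eqref{eq:continuity-w0} holds. Since $j_1\in W^{1,1}$, $\dot\rho_1=-j_1\in W^{1,1}$, so $\rho\in W^{2,1}\hookrightarrow C^1$. For the van Leeuwen equation, I differentiate $j_1$ a.e. via Leibniz, inserting $\dot z$ from the linear Hamiltonian $K+\lambda U+vV$.
\begin{itemize}
\item The hopping part yields terms $4\tau^2(|z_1+z_2|^2-2|z_3|^2)$-type expressions. Using normalization, I would reduce these to $-4\tau^2(\rho_1-1)$; this identity is the main bookkeeping to verify.
\item The diagonal energies $2v+\lambda U_1$ and $-2v+\lambda U_2$ multiply $\Re(\overline{z_3}z_1)$ and $\Re(\overline{z_3}z_2)$ with the stated signs.
\end{itemize}

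The main obstacle I expect is the norm-preservation step when $w\notin L^1$; everything else is explicit algebra.
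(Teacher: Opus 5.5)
Your computations for the regularity part of (i), for the modified continuity equation (including the Slater reduction $|z_3|^2=\rho_1\rho_2/2$), and for $\rho\in W^{2,1}$ when $w=0$ follow the paper's direct route and are fine. The genuine gap is norm conservation, and it comes from an algebraic slip. With $P_\Psi=\Psi\Psi^*$ one has $[V,P_\Psi]\Psi=\|\Psi\|^2V\Psi-\langle\Psi,V\Psi\rangle\Psi$, so
\[
\langle\Psi,[V,P_\Psi]\Psi\rangle=\|\Psi\|^2\langle\Psi,V\Psi\rangle-\langle\Psi,V\Psi\rangle\|\Psi\|^2=0
\]
identically. You replaced $P_\Psi\Psi$ by $\Psi$ in the first term only. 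A more robust way to see this: $[V,P_\Psi]$ is a commutator of Hermitian matrices, hence anti-Hermitian, so $-iw[V,P_\Psi]$ is Hermitian and the whole $H^\lambda_{v,w}(t,\Psi)$ is Hermitian. Therefore $\frac{\rd}{\rd t}\|\Psi\|^2=2\Re\langle\Psi,\dot\Psi\rangle=0$ a.e., and since $\|\Psi\|^2$ is absolutely continuous, $\|\Psi\|\equiv1$. No integrability of $w$ is needed at all, and this is exactly the paper's argument.

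As written, your proof does not establish normalization. The spurious equation $\dot n=-2w\langle\Psi,V\Psi\rangle(1-n)$ leads you to a $\log|m|$ argument that is left unfinished. It also could not be closed in general: the weight $\|[V,P_\Psi]\Psi\|$ vanishes at eigenvectors of $V$ such as $(1,0,0)^T$, where $\langle\Psi,V\Psi\rangle=2\neq0$. So $w\langle\Psi,V\Psi\rangle$ need not be integrable, and $\dot m=am$ with non-integrable $a$ can leave $m=0$ (e.g.\ $m=t$, $a=1/t$). Your fallback, ``the paper only considers normalized solutions'', assumes the conclusion. This step matters downstream: $\rho_1+\rho_2=2$, the formula $[V,P_\Psi]\Psi=V\Psi-\langle\Psi,V\Psi\rangle\Psi$ and the substitution $\rho_2=2-\rho_1$ in (ii), and the reduction in (iii) all depend on $\|\Psi\|\equiv1$. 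It has to be repaired as above.

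A smaller point in the van Leeuwen bookkeeping. The hopping parts of $\dot z_1$ and $\dot z_2$ are both $i\sqrt2\,\tau z_3$, so they cancel in $\dot z_1-\dot z_2$. The only hopping contribution to $\dot j_1$ therefore comes from $\dot z_3=i\sqrt2\,\tau(z_1+z_2)$, and it equals $4\tau^2\Re\big((z_1+z_2)(\overline{z_1}-\overline{z_2})\big)=4\tau^2(|z_1|^2-|z_2|^2)=4\tau^2(\rho_1-1)$ by normalization. An expression like $|z_1+z_2|^2-2|z_3|^2$ contains $\Re(z_1\overline{z_2})$ and cannot be reduced to $\rho_1-1$. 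Your guess for that term thus points in the wrong direction, although your overall strategy (differentiate $j_1$ and insert the linear ODE) is the paper's.
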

\begin{proof}
	The proof of \Cref{prop:reg_density} is given in \Cref{ssec:proof-reg-density}.
\end{proof}
\subsection{Bloch representation of non-interacting dynamics}\label{subsec:bloch}

In the simple case of the two-particle spin-singlet Hubbard dimer, the manifold of Slater states (i.e. $\cM^{\rm Slat}$ quotiented by the gauge invariance $\Phi \mapsto e^{i\vartheta}\Phi$) is isometric to the Bloch sphere $\bS^2$ (see \cref{appendix:derivation} for details) through the map
\begin{align*}
	X := \begin{pmatrix} x_1 \\ x_2 \\ x_3 \end{pmatrix} = \begin{pmatrix} \sin\theta\cos\phi \\ \sin\theta \sin\phi \\ \cos\theta \end{pmatrix} \quad & \mapsto \quad \varphi_X := \begin{pmatrix} \varphi_{X,1} \\ \varphi_{X,2} \end{pmatrix} := \begin{pmatrix} \cos\f\theta 2 \\ e^{i\phi} \sin\f\theta 2 \end{pmatrix} \\ & \mapsto \quad \Phi_X := \begin{pmatrix} \varphi_{X,1}^2 \\ \varphi_{X,2}^2 \\ \sqrt2\,\varphi_{X,1} \varphi_{X,2} \end{pmatrix} = \begin{pmatrix} \f{1+\cos\theta}2 \\ \f{1-\cos\theta}2 e^{2i\phi} \\ \f{\sin\theta}{\sqrt 2} e^{i\phi} \end{pmatrix} ,
\end{align*}
where $\varphi_X$ is the ``molecular orbital'' occupied by the two electrons. The density $\rho^X=(\rho^X_1,\rho^X_2)^T$ and current $j^X=(j^X_1,j^X_2)^T$ of a state $\Phi_X$ are therefore given by
\begin{align*}
	& \rho_1^X= 2 |\varphi_{X,1}^2|^2 + |\sqrt2\,\varphi_{X,1} \varphi_{X,2}|^2 = 2 |\varphi_{X,1}|^2 = 2 \cos^2 \f{\theta}2 = \cos\theta +1 = 1+x_3,\\
	& \rho_2^X = 2-\rho_1^X = 1-x_3,\\
	& j^X_1= 2\sqrt2\,\tau\,\Im\lp \sqrt2\,\varphi_{X,1}\varphi_{X,2}\lp \overline{\varphi_{X,1}}^2-\overline{\varphi_{X,2}}^2\rp\rp = 4\tau\,\varphi_{X,1}\,\Im\lp\varphi_{X,2}\rp \\
	& \phantom{j^X_1}= 4\tau\cos\f\theta2\sin\f\theta2\sin\phi = 2\tau \sin\theta\sin\phi = 2\tau x_2,\\
	& j^X_2 = -j^X_1.
\end{align*}
The North Pole $P_+:=(0,0,1)^T$ of the Bloch sphere thus corresponds to the state with both electrons on site~$1$, the South Pole $P_-:=(0,0,-1)^T$ to the state with both electrons on site~$2$, and the equator to states with equal occupations $\rho_1 = \rho_2= 1$ (see \cref{fig:bloch_sphere}).
\begin{figure}[h!]
	\centering
	\includegraphics[width=0.65\linewidth]{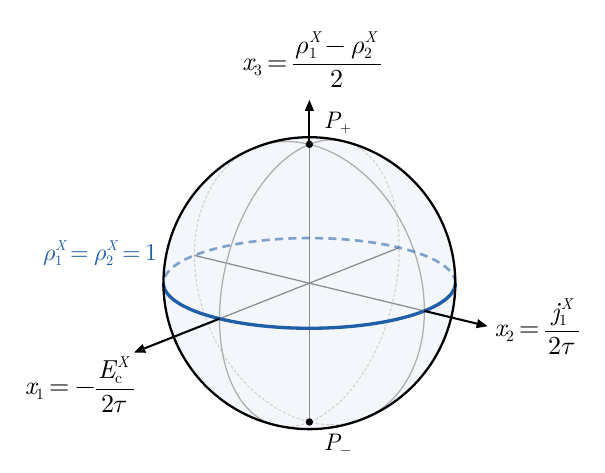}
	\caption{
		Bloch-sphere representation of the Slater states of the two-particle spin-singlet Hubbard model.
		The coordinates $x_1$, $x_2$, and $x_3$ are respectively related
		to the kinetic energy, the current, and the density imbalance.
		The north and South Poles correspond to complete localization of the two electrons on
		sites $1$ and $2$, respectively, while the equator corresponds to states such that
		$\rho_1=\rho_2=1$.
	}
	\label{fig:bloch_sphere}
\end{figure}

The Bloch representation is not only a convenient parametrization of the non-interacting dynamics: each component of the Bloch vector $X$ can be given a direct physical meaning:
\begin{align}
	x_1 &= - \f{E_{\rm c}^X}{2\tau} && \text{(negative of the normalized kinetic energy)}, \label{eq:interpretation_x1}\\
	x_2 &= \f{j_1^X}{2\tau} && \text{(normalized current from site $1$ to site $2$)}, \label{eq:interpretation_x2} \\
	x_3 &= \rho_1^X-1= \f{\rho_1^X-\rho^X_2}2 && \text{(normalized site density difference)}. \label{eq:interpretation_x3}
\end{align}
where $E_{\rm c}^X:=\langle \Phi_X | K | \Phi_X\rangle$ denotes the kinetic energy of the Slater state $\Phi_X$. This makes the physical meaning of a point on the Bloch sphere transparent.\medskip

In the Bloch representation, the non-interacting VP, GP, and GPc
dynamics take particularly simple forms. They are given by (see \cref{appendix:derivation} for details)
\begin{align}
	&\text{(VP)}  &&\dot X^{\rm VP} = \lp -2\tau e_1+2v\,e_3\rp \times X^{\rm VP}, \quad X^{\rm VP}(0) = X_0, \label{eq:vp_X}\\[2mm]
	&\text{(GP)}  &&\dot X^{\rm GP} = \lp -2\tau e_1 +2w\lp e_3\times X^{\rm GP}\rp \rp \times X^{\rm GP}, \quad X^{\rm GP}(0)  = X_0, \label{eq:gp_X}\\[2mm]
	&\text{(GPc)}  &&\dot X^{\rm GPc}_{v_c} = \lp -2\tau e_1+2v_c\,e_3 +2w_c\lp e_3\times X^{\rm GPc}_{v_c}\rp \rp \times X^{\rm GPc}_{v_c}, \;  X^{\rm GPc}_{v_c}(0) = X_0, \label{eq:gpc_X}
\end{align}
where $(e_1,e_2,e_3)$ denotes the canonical basis of the three-dimensional space $\R^3$ in which the unit sphere $\bS^2$ is embedded,  $X_0\in\bS^2$ the Bloch vector of the initial Slater state $\Phi_0$, i.e. $\Phi_0=\Phi_{X_0}$ up to an irrelevant global phase.
Note that the trajectories of the dynamics~\eqref{eq:vp_X},~\eqref{eq:gp_X}, and~\eqref{eq:gpc_X} with initial data $X_0\in\bS^2$ stay on the Bloch sphere $\bS^2$. In accordance with the functional setting of \cref{subsec:interacting}, we require these trajectories to belong to $W^{1,1}(0,T;\bS^2)$, the space of absolutely continuous functions on $[0,T]$ with values in $\bS^2$.

\subsection{Formulations of the TDDFT \texorpdfstring{$V$}{V}-representability problems}
\label{subsec:TDDFT-V}

The \textbf{non-interacting $V$-representability problem for general densities} for VP (resp. GP) can be formulated as follows: given a time-dependent density $\rho=(\rho_1,\rho_2)^T \in W^{1,1}(0,T;\R^2)$ such that $\rho_1(t), \rho_2(t) \ge 0$ and $\rho_1(t)+\rho_2(t)=2$ for all $t \in [0,T]$, and an initial condition $X_0 \in \bS^2$ such that $\rho^{X_0}=\rho(0)$, does there exist a function $v \in L^1(0,T;\R)$ (resp. a function $w \in L^0(0,T;\R)$, see \cref{subsec:interacting} for the choice of these spaces) such that the solution to~\eqref{eq:vp_X} (resp. a solution to~\eqref{eq:gp_X}) in $W^{1,1}(0,T;\bS^2)$ satisfies
\begin{equation*}
	\forall t\in[0,T], \quad  \rho^{X^{\rm VP}(t)}=\rho(t) \qquad \lp  \text{resp. }  \rho^{X^{\rm GP}(t)}=\rho(t)\rp  \; ?
\end{equation*}
For GPc, the formulation is similar, except that $v_c$ is given in $L^1(0,T;\R)$ and that we seek a function $w_c \in L^0(0,T;\R)$ such that $\rho^{X^{\rm GPc}_{v_c}(t)}=\rho(t)$ for all $t \in [0,T]$ where $X^{\rm GPc}_{v_c}$ is a solution to~\eqref{eq:gpc_X}. In view of~\eqref{eq:interpretation_x3}, the VP, GP, and GPc non-interacting $V$-representability problems can be reformulated as
\begin{align}
	&\text{(VP)}&& \text{find }v \in L^1(0,T;\R) \text{ such that the unique solution to \eqref{eq:vp_X} in $W^{1,1}(0,T;\bS^2)$}  \nonumber \\
	& && \qquad \text{satisfies }  1+X^{\rm VP}\cdot e_3 = \rho_1, \label{prob:VP} \\[2mm]
	&\text{(GP)}&& \text{find } w\in L^0(0,T;\R)  \text{ and } X^{\rm GP} \in W^{1,1}(0,T;\bS^2) \text{ satisfying \eqref{eq:gp_X} and } \nonumber \\
	& && \qquad  1+X^{\rm GP}\cdot e_3 = \rho_1, \nonumber\\[2mm]
	&\text{(GPc)} && \text{find } w_c\in L^0(0,T;\R)  \text{ and } X^{\rm GPc}_{v_c} \in W^{1,1}(0,T;\bS^2) \text{ satisfying \eqref{eq:gpc_X} and } \nonumber \\
	& && \qquad
		1+X^{\rm GPc}_{v_c}\cdot e_3 = \rho_1. \label{prob:GPc}
\end{align}

Given $X_0 \in \bS^2$ and $v_c \in L^1(0,T;\R)$, we define the set of VP/GP/GPc non-interacting $V$-representable densities arising from $X_0$ as, respectively,
\begin{align*}
	\cR_{\rm VP}(X_0) &:= \Big\{ \rho^{X^{\rm VP}} \,\Big|\, X^{\rm VP}\in W^{1,1}(0,T;\bS^2) \text{ solves~\eqref{eq:vp_X} with } v \in L^1 \Big\},\\
	\cR_{\rm GP}(X_0) &:= \Big\{ \rho^{X^{\rm GP}} \,\Big|\, X^{\rm GP}\in W^{1,1}(0,T;\bS^2) \text{ solves~\eqref{eq:gp_X} with } w \in L^0 \Big\},\\
	\cR_{\rm GPc}(X_0,v_c) &:= \Big\{ \rho^{X^{\rm GPc}_{v_c}} \,\Big|\, X^{\rm GPc}_{v_c}\in W^{1,1}(0,T;\bS^2) \text{ solves~\eqref{eq:gpc_X} with } w_c \in L^0 \Big\},
\end{align*}
where $L^1=L^1(0,T;\R)$ and $L^0=L^0(0,T;\R)$.
Note that
\begin{equation*}
	\cR_{\rm GP}(X_0)=\cR_{\rm GPc}(X_0,0).
\end{equation*}

Of particular physical interest is the \textbf{non-interacting $V$-representability problem for interacting densities}, in which the target density $\rho$ is the density $\rho^u$ generated by the interacting Hubbard dynamics~\eqref{eq:interacting_dynamics}. Given an initial condition $\Psi_0^{\rm int} \in \C^3$ such that $\|\Psi_0^{\rm int}\|=1$, we define the set of interacting $V$-representable densities arising from $\Psi_0^{\rm int}$ as
\begin{equation*}
	\cR_{\rm int}(\Psi_0^{\rm int}):= \left\{ \rho^{\Psi^u}, \; \Psi^u \in W^{1,1}(0,T;\C^3) \text{ sol. to } \eqref{eq:interacting_dynamics} \text{ for some } u \in L^1(0,T;\R)\right\}.
\end{equation*}
The question is then whether $\cR_{\rm int}(\Psi_0^{\rm int})\subset\cR_{\rm VP}(X_0)$ (resp. $\cR_{\rm int}(\Psi_0^{\rm int})\subset\cR_{\rm GPc}(X_0,v_c)$), and for which initial conditions $X_0\in\bS^2$.

\subsection{Main results}

The main practical result of this article is the following. It shows in particular that any interacting $V$-representable density is non-interacting $V$-representable for both VP and GPc, for any $v_c \in L^1(0,T;\R)$.

\begin{theorem}[Non-interacting $V$-representability of interacting $V$-representable densities] \label{thm:NIVIN} Let $\Psi_0^{\rm int} \in \C^3$ be such that $\|\Psi_0^{\rm int}\|=1$, $X_0 \in \bS^2$, and $v_c \in L^1(0,T;\R)$. Then:
\begin{enumerate}
	\item \emph{VP-representability.}
	\begin{equation*}
		\lp  \cR_{\rm int}(\Psi_0^{\rm int}) \subset \cR_{\rm VP}(X_0) \rp  \quad \Longleftrightarrow \quad \lp  \rho^{X_0}=\rho^{\Psi_0^{\rm int}}  \text{ and } j^{X_0} = j^{\Psi_0^{\rm int}}\rp .
	\end{equation*}
	\item \emph{GPc-representability.}
	\begin{equation*}
		\lp  \cR_{\rm int}(\Psi_0^{\rm int}) \subset \cR_{\rm GPc}(X_0,v_c) \rp  \quad \Longleftrightarrow \quad \lp \rho^{X_0}= \rho^{\Psi_0^{\rm int}}  \rp .
	\end{equation*}
	\item \emph{Compatible initial condition.} There exists $X_0' \in \bS^2$ such that $\rho^{X_0'}=\rho^{\Psi_0^{\rm int}}$ and $j^{X_0'} = j^{\Psi_0^{\rm int}}$.
\end{enumerate}
\end{theorem}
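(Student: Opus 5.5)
\emph{Part (iii).} We start with (iii), since it also supplies the initial conditions used in (i) and (ii). By \eqref{eq:interpretation_x2}--\eqref{eq:interpretation_x3}, a Bloch vector $X_0'$ with the required density and current must satisfy $x_3=\rho_1^{\Psi_0^{\rm int}}-1$ and $x_2=j_1^{\Psi_0^{\rm int}}/(2\tau)$. Such a vector exists, with $x_1=\pm\sqrt{1-x_2^2-x_3^2}$, if and only if $x_2^2+x_3^2\le 1$. To check this inequality, write $\Psi_0^{\rm int}=(z_1,z_2,z_3)^T$ and set $a=(z_1-z_2)/\sqrt2$, $b=(z_1+z_2)/\sqrt2$, so that $|a|^2+|b|^2+|z_3|^2=1$. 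Then
\begin{equation*}
x_3=|z_1|^2-|z_2|^2=2\Re(b\overline a), \qquad x_2=2\Im(z_3\overline a).
\end{equation*}
Hence $x_2^2+x_3^2\le 4|a|^2(|b|^2+|z_3|^2)\le (|a|^2+|b|^2+|z_3|^2)^2=1$ by Cauchy--Schwarz and AM--GM. Along an interacting trajectory, the same computation gives $f(t):=1-x_2(t)^2-x_3(t)^2\ge0$, with $x_2=j_1^u/(2\tau)$ and $x_3=\rho_1^u-1$. I expect equality $f=0$ to force a rigid algebraic relation on $\Psi^u(t)$, namely $|a|^2=1/2$ together with alignment of $b$, $z_3$ and $a$. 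I would then show that this occurs only at finitely many times, by the same argument as in \cref{prop:slater_set}(iii).

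\emph{Part (i).} For necessity, if $\rho^{u}$ is VP-represented, then both $\rho^{X^{\rm VP}}$ and $\rho^u$ are $C^1$ and satisfy the continuity equation \eqref{eq:continuity-w0}. Equality of the densities on $[0,T]$ therefore forces $j^{X_0}=j^{\Psi_0^{\rm int}}$. Taking $u\equiv 0$ shows that the condition must hold for the given $X_0$.

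For sufficiency, write \eqref{eq:vp_X} componentwise:
\begin{equation*}
\dot x_1=-2vx_2, \qquad \dot x_2=2\tau x_3+2vx_1, \qquad \dot x_3=-2\tau x_2.
\end{equation*}
The constraint fixes $x_3=\rho_1^u-1$, and $\dot x_3=-2\tau x_2$ then forces $x_2=-\dot\rho_1^u/(2\tau)=j_1^u/(2\tau)$. This leaves $x_1=\sigma(t)\sqrt{f(t)}$ with a sign $\sigma(t)\in\{\pm1\}$. The sign must be constant on each interval where $f>0$, and may be flipped (freely) at the finitely many zeros of $f$. Away from the zeros, the potential is forced to be
\begin{equation*}
v=\frac{\dot x_2-2\tau x_3}{2x_1}=-\frac{\dot f}{4x_1x_2},
\end{equation*}
where $\dot x_2=-\ddot\rho_1^u/(2\tau)$ is given by the interacting van Leeuwen equation \eqref{eq:vanleeuwen}. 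It remains to check two things: that $X\in W^{1,1}$, which amounts to $\sqrt f$ being absolutely continuous, and that $v\in L^1$ near each zero $t_*$ of $f$ and near the zeros of $x_2$. Consistency of the ODE then follows from differentiating $x_1^2+x_2^2+x_3^2=1$.

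\emph{Part (ii).} Necessity is trivial from $t=0$. For sufficiency, use cylindrical coordinates $x_1=r\cos\phi$, $x_2=r\sin\phi$, with $r=\sqrt{1-x_3^2}$ and $x_3=\rho^u_1-1$ prescribed. The geometric term satisfies $(e_3\times X)\times X=(x_1x_3,\,x_2x_3,\,-r^2)$, so it acts only along meridians. The azimuth therefore obeys a Carath\'eodory ODE independent of $w_c$, of the form $\dot\phi=2v_c+2\tau x_3\cos\phi/r$. One solves this ODE on each interval where $r>0$. The poles are reached only at finitely many times, since there $\Psi^u$ is a Slater state and \cref{prop:slater_set}(iii) applies; at each such time $\phi$ is restarted with an arbitrary value. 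Then set
\begin{equation*}
w_c=-\frac{\dot x_3+2\tau r\sin\phi}{2r^2},
\end{equation*}
so the required weighted condition $w_c\,r\in L^1$ reduces to $\dot x_3/r\in L^1$ near the poles, i.e. to absolute continuity of $\arcsin x_3$.

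\emph{Main obstacle.} The hard step in both (i) and (ii) is the local analysis at the singular times ($f=0$, respectively $r=0$). There I would expand $\Psi^u$ near $t_*$ using the interacting equation, which is Lipschitz in $\Psi$ with an $L^1$ coefficient $u$. The first aim is to show that $f$ (respectively $1-x_3^2$) is comparable to an integral of a square, roughly $\big(\int_{t_*}^t g\big)^2$ for an explicit $L^1$ quantity $g$ built from $\Psi^u$ and $u$, so that $\sqrt f$ (respectively $r$) is $W^{1,1}$ with derivative controlled by $g$. From this one gets the integrability of $v$ and of $w_c r$. The zeros of $x_2$ with $f>0$ are handled separately: there $\dot f$ vanishes to the same order, since $\dot f=-2x_2(\dot x_2-2\tau x_3)$.
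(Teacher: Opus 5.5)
Your proof of (iii) is sound; the unitary change of variables $(a,b,z_3)$ is a neat alternative route to $\cD\ge0$. The necessity directions and the construction in (ii) are also sound. The sufficiency part of (i), however, has a genuine gap. Its only nontrivial point is the integrability of the forced potential $v=(\dot x_2-2\tau x_3)/(2x_1)$ near the zeros of $f$ (your $f$ is $\cD^u/(4\tau^2)$). You defer this to an unspecified local expansion, and the mechanism you describe cannot deliver it. Along any candidate trajectory, $\sigma\,\frac{\rd}{\rd t}\sqrt f=\dot x_1=-2vx_2$, so proving $\sqrt f\in W^{1,1}$ only controls $x_2v$. Near a singular time with $x_2(t_*)\neq0$ this is just a reformulation of the problem. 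At a singular time where $\Psi^u(t_*)$ is $e_1$ or $e_2$ up to a phase, it fails outright:
\begin{itemize}
\item such times always belong to $\cS^u$, since there $\cD^u=-(\dot\rho_1^u)^2\le0$;
\item they do occur (\cref{rem:Cs});
\item $x_1$ and $x_2$ vanish there simultaneously, and $f$ vanishes to higher order than $(t-t_*)^2$;
\item so $x_2v\in L^1$ gives no information on $v$.
\end{itemize}
Your separate treatment of the zeros of $x_2$ only covers $f>0$, where nothing needs to be done: the singularity of $-\dot f/(4x_1x_2)$ there is an artifact of the rewriting.

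The missing idea is a pointwise estimate, not a local one. The source term of the interacting van Leeuwen equation \eqref{eq:vanleeuwen} involves only $\Re(\overline{z_3}z_1)$ and $\Re(\overline{z_3}z_2)$, and the discriminant dominates precisely these: $\cD(\Psi)\ge8\tau^2\big(\Re(\overline{z_3}z_1)^2+\Re(\overline{z_3}z_2)^2\big)$ (\cref{lem:Du}). In your variables this follows from the exact identity $f=(|a|^2-|b|^2-|z_3|^2)^2+4\Im(b\overline a)^2+4\Re(z_3\overline a)^2$. Cauchy--Schwarz then gives $|\cN^u|\le\big(2\sqrt2\,|u|+\sqrt{U_1^2+U_2^2}\big)\sqrt{\cD^u}$ a.e. Hence $|v|\le\sqrt2\,|u|+\tfrac12\sqrt{U_1^2+U_2^2}$, so $v\in L^1$, and $\cN^u=0$ a.e. where $\cD^u=0$. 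Absolute continuity of $\sqrt{\cD^u}$ then follows from \cref{lem:sqrtD}, with no expansion and no finiteness of the singular set needed; the paper simply feeds these facts into \cref{thm:VP-characterization}.

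Likewise, the step you flag as hard in (ii) is immediate. Your own inequality from (iii), applied along the trajectory, gives $|\dot x_3|=2\tau|x_2|\le2\tau r$. Hence $|w_c|\,r\le2\tau$, $|\dot r|\le2\tau$ and $r|\dot\phi|\le2|v_c|+2\tau$, which yields both the weighted integrability and the $W^{1,1}$ regularity across the poles. With that, your direct construction is a valid alternative to the paper's argument. The paper instead deduces (ii) from (i), (iii) and \cref{cor:VP-in-GP}, using that $\cR_{\rm GP}(X_0)$ depends on $X_0$ only through $X_0\cdot e_3$.
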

\begin{proof}
	The proof of \Cref{thm:NIVIN} is given in \Cref{sec:interacting}, as a consequence of \cref{thm:VP-characterization,thm:GP-characterization} and of the properties of interacting densities established there.
\end{proof}

\medskip

The other main results are detailed in \cref{thm:VP-characterization,thm:GP-characterization} below, providing complete solutions to the non-interacting VP/GP/GPc $V$-representability problem for general densities. More precisely, we give a complete characterization of the sets $\cR_{\rm VP}(X_0)$ and $\cR_{\rm GPc}(X_0,v_c)$ (we show in particular that $\cR_{\rm GPc}(X_0,v_c)=\cR_{\rm GP}(X_0)$ for all $v_c \in L^1(0,T;\R)$), and we specify
\begin{itemize}
	\item for all $\rho \in \cR_{\rm VP}(X_0)$, the (in general non-unique) potentials $v \in L^1(0,T;\R)$ for which~\eqref{prob:VP} holds;
	\item for all $v_c \in L^1(0,T;\R)$ and all $\rho \in \cR_{\rm GPc}(X_0,v_c)$, the (in general non-unique) potentials $w_c \in L^0(0,T;\R)$ \emph{and} trajectories $X^{\rm GPc}_{v_c} \in W^{1,1}(0,T;\bS^2)$ for which~\eqref{prob:GPc} holds.
\end{itemize}

In \cref{sec:comparison}, we highlight the main differences between the two principles, and we show that VP-representability implies GP-representability, whereas the converse is false, that is $\cR_{\rm VP}(X_0)\subsetneq\cR_{\rm GP}(X_0)$ (\cref{cor:VP-in-GP}). In \cref{sec:interacting}, we prove \cref{thm:NIVIN} and describe all the non-interacting trajectories and effective potentials reproducing an interacting density (\cref{thm:int-representability}). In \cref{sec:integrability}, we justify the functional setting for GP: the GP potentials reproducing an interacting density are in general not integrable, even for smooth external potentials.

\section{The VP-representability problem}
\label{sec:VP}

Let $X_0\in\bS^2$ and $\rho=(\rho_1,\rho_2)^T\in W^{1,1}(0,T;\R^2)$ be such that $\rho_1,\rho_2\ge0$ and $\rho_1+\rho_2=2$ on $[0,T]$. Assume that $\rho$ is VP-representable, i.e. that there exists $v\in L^1(0,T;\R)$ such that the solution $X^{\rm VP}\in W^{1,1}(0,T;\bS^2)$ to~\eqref{eq:vp_X} satisfies~\eqref{prob:VP}. In the canonical coordinate system, using the notation $X^{\rm VP}(t)=(x_1(t),x_2(t),x_3(t))^T$, the equations in problem~\eqref{prob:VP} read
\begin{equation}\label{eq:TDVP_coord}
	\dot x_1 = -2vx_2, \quad \dot x_2 = 2vx_1 + 2\tau x_3, \quad \dot x_3=-2\tau x_2, \quad x_j(0)=X_0 \cdot e_j, \quad x_3(t)=\rho_1(t) -1.
\end{equation}
As already noticed in point (iii) of \cref{prop:reg_density} (with $\lambda=0$ and $w=0$), the requirement $X^{\rm VP} \in W^{1,1}(0,T;\bS^2)$ implies that $\rho$ must be in $W^{2,1}(0,T;\R^2)$, since $\dot\rho_1=\dot x_3=-2\tau x_2\in W^{1,1}(0,T;\R)$. Together with the constraints $x_1(t)^2+x_2(t)^2+x_3(t)^2=1$ and $x_3(t)=\rho_1(t) -1$, this implies that
\begin{equation*}
	\cD_\rho:=4\tau^2\rho_1\rho_2 -\dot\rho_1^2 \in W^{1,1}(0,T;\R_+),
\end{equation*}
and
\begin{equation*}
	x_1= \sgn(x_1)\f{\sqrt{\cD_\rho}}{2\tau}, \qquad x_2=-\f{\dot\rho_1}{2\tau}, \qquad x_3=\rho_1-1,
\end{equation*}
with the convention that $\sgn(0)=0$.
Using again the requirement $X^{\rm VP} \in W^{1,1}(0,T;\bS^2)$ as well as the initial condition, we obtain that $\rho$ and $X_0$ must also be such that $\sqrt{\cD_\rho}\in W^{1,1}(0,T;\R_+)$, $x_2(0)=-\dot \rho_1(0)/2\tau$ and $x_3(0)=\rho_1(0)-1$ (the normalization condition then fixes $|x_1(0)|$, and its sign is fixed by the choice of $\eta_1$ below). The potential $v$ can then be formally obtained from the second equation in~\eqref{eq:TDVP_coord} and reads
\begin{equation}
	\label{eq:vVP}
	v(t) = - \sgn(x_1) \f{\cN_\rho(t)}{2\sqrt{\cD_\rho(t)}} \quad \text{where} \quad \cN_\rho:=\ddot\rho_1+4\tau^2(\rho_1-1).
\end{equation}
In view of the VP-representability conditions $\rho \in W^{2,1}(0,T;\R^2)$ and $\sqrt{\cD_\rho} \in W^{1,1}(0,T;\R_+)$ obtained above, the potential $v$ is in $L^1$ on any interval on which the non-negative continuous function $\cD_\rho$ is bounded away from $0$. As a consequence, the singular times for the VP-representability problem are the times $t_*$ at which $\cD_\rho(t_*)=0$. Interestingly, the set of singular times
\begin{equation*}
	\cS_\rho := \left\{ t_*\in[0,T] \;\middle|\; \cD_\rho(t_*)=0 \right\} \qquad \text{(VP-representability singular set)}
\end{equation*}
has a very simple interpretation in terms of the VP dynamics. Indeed, using~\eqref{eq:TDVP_coord}, we get
\begin{equation}
	\label{eq:D-Bloch}
	\cD_\rho = 4\tau^2\rho_1\rho_2 -\dot\rho_1^2 = 4\tau^2 (1+x_3)(1-x_3)-4\tau^2x_2^2 = 4\tau^2 (1-x_3^2-x_2^2)=4\tau^2 x_1^2.
\end{equation}
The singular times for VP thus are the times $t_*$ at which the VP-trajectory $X^{\rm VP}$ reaches the great circle
\begin{equation*}
	\cC_s:=\left\{X=(x_1,x_2,x_3)^T \in\bS^2 \;\middle|\; x_1=0\right\},
\end{equation*}
which is, by \eqref{eq:interpretation_x1}, the set of Slater states with zero kinetic energy. Note that a VP-trajectory may remain on $\cC_s$ on a time-interval of positive length, so that $\cS_\rho$ may have a positive Lebesgue measure (see \cref{rem:Cs}). We will see in \cref{sec:interacting} that this cannot happen for interacting densities.

\begin{figure}[t]
	\centering
	\includegraphics[width=0.65\linewidth]{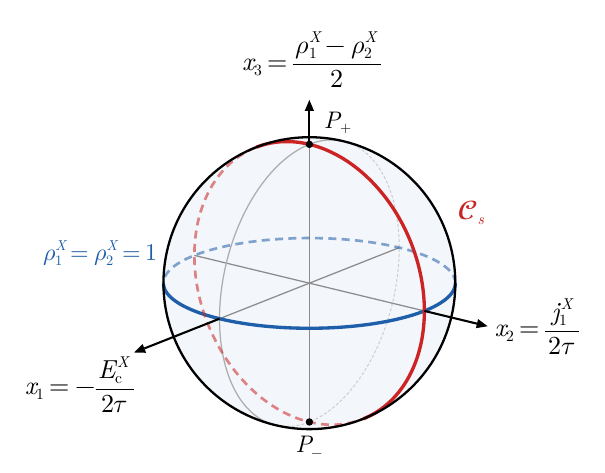}
	\caption{
		Bloch-sphere representation of the singular states: the red great circle $\cC_s=\{x_1=0\}$, characterized by a vanishing kinetic energy, is the set of singular states for the VP dynamics. The two poles $P_\pm$, which belong to $\cC_s$, are the singular states for the GP and GPc dynamics.
	}
	\label{fig:singular_set}
\end{figure}

\medskip

Let us introduce the set
\begin{equation*}
	\Omega_\rho:=\left\{t\in[0,T] \;\middle|\; \cD_\rho(t) \neq 0 \right\} =[0,T]\setminus \cS_\rho \qquad \text{(VP-representability regular set)}.
\end{equation*}
Since $\cD_\rho$ must be continuous, $\Omega_\rho$ is relatively open in $[0,T]$ and is therefore the disjoint union of at most countably many connected components (relatively open intervals of $[0,T]$) $(I_k)_{1\le k\le M_{\rm cc}}$ with $M_{\rm cc} \in \N \cup \{+\infty\}$, indexed so that $0\in I_1$ if $0\in\Omega_\rho$ and with the convention that $M_{\rm cc}=0$ if $\Omega_\rho=\emptyset$. Note that these intervals cannot be ordered (with respect to the order of the real line) in general, since $\cS_\rho$ may be a Cantor-like set.
We thus have
\begin{equation*}
	\Omega_\rho = \bigcup_{k = 1}^{M_{\rm cc}} I_k.
\end{equation*}
On each $I_k$, the sign of $x_1$ is constant and can be denoted by $\eta_k$. If $\sgn(x_1(0))\neq 0$, then $\eta_1 = \sgn(x_1(0))$, otherwise $\eta_1\in\{-1,1\}$ is not fixed by the initial condition. In addition, the condition $v \in L^1(0,T;\R)$ implies, together with~\eqref{eq:vVP}, that $\1_{\Omega_\rho} \f{\cN_\rho}{\sqrt{\cD_\rho}} \in L^1(0,T;\R)$, while the second and third equations in \eqref{eq:TDVP_coord} give $\cN_\rho=-4\tau vx_1$ a.e. on $(0,T)$, so that $\cN_\rho=0$ a.e. on $\cS_\rho$, where $x_1$ vanishes. Finally, the potential $v$ is uniquely determined a.e. by the trajectory: on $\Omega_\rho$ by~\eqref{eq:vVP}, and on $\cS_\rho$, since $x_1$ vanishes on $\cS_\rho$, so that $\dot x_1=0$ a.e. on $\cS_\rho$, and the first equation in~\eqref{eq:TDVP_coord} gives $vx_2=0$ a.e. on $\cS_\rho$. As $x_2^2=\rho_1\rho_2$ on $\cS_\rho$, we get $v=0$ a.e. on $\cS_\rho\setminus\cT_\rho$, where $\cT_\rho:=\{\rho_1\rho_2=0\}$. Besides, $|\cT_\rho|=0$ by \cref{lem:poles}. Hence $v=-\f12\,\sgn(x_1)\1_{\Omega_\rho}\cN_\rho/\sqrt{\cD_\rho}$ a.e. on $(0,T)$. Conversely, these computations can be reversed, which yields the sufficiency of the conditions below (see \cref{ssec:proof-VP}).

\medskip

We are now in position to state the main result of this section.

\begin{theorem}[The VP-representability problem]
	\label{thm:VP-characterization}
	Let $X_0\in\bS^2$.
	\begin{enumerate}
		\item \emph{Characterization of the set of VP-representable densities.}
	We have
	\begin{equation}
		\label{eq:RVP-characterization}
			\rho\in\cR_{\rm VP}(X_0) \iff \left\{
			\begin{aligned}
				&\rho\in W^{2,1}(0,T;\R^2), \\ & \rho_1+\rho_2=2 \text{ and } \rho_1,\rho_2\geq0  \text{ on } [0,T], \\
				&\rho_1(0)=1+X_0\cdot e_3, \quad \dot\rho_1(0)=-2\tau X_0\cdot e_2, \\
			& \cD_\rho:=4\tau^2\rho_1\rho_2 -\dot\rho_1^2 \geq 0 \text{ on } [0,T], \\
			& \cN_\rho:=\ddot\rho_1+4\tau^2(\rho_1-1)=0 \ \text{a.e. on } \cS_\rho:= \cD_\rho^{-1}(\{0\}), \\
				&\displaystyle \1_{\{\cD_\rho>0\}} \f{\cN_\rho}{\sqrt{\cD_\rho}} \in L^1(0,T;\R).
			\end{aligned}
			\right.
	\end{equation}
	\item \emph{Classification of admissible VP-trajectories and potentials.}
	Let $\rho \in \cR_{\rm VP}(X_0)$.  Consider the decomposition of the relatively open subset $\Omega_\rho := \cD_\rho^{-1}((0,+\infty)) = [0,T] \setminus \cS_\rho$ into the at most countable disjoint union of relatively open intervals
	\begin{equation*}
		\Omega_\rho = \bigcup_{k=1}^{M_{\rm cc}} I_k, \qquad M_{\rm cc} \in \N \cup \{+\infty\},
	\end{equation*}
		with $0\in I_1$ if $0\in\Omega_\rho$ (i.e. if $X_0\cdot e_1\neq0$). Then, the $W^{1,1}(0,T;\bS^2)$ VP-trajectories
	reproducing the density $\rho$ are in one-to-one correspondence with the set of parameters
	\begin{equation*}
			\fS_\rho(X_0) = \left\{ \eta =(\eta_k)_{1\le k\le M_{\rm cc}} \in \{-1,1\}^{M_{\rm cc}} \;\middle|\; \eta_1 = \sgn(X_0\cdot e_1) \text{ if } X_0\cdot e_1 \neq 0  \right\},
	\end{equation*}
	through
	\begin{equation}
		\label{eq:VP-fiber-X}
		X^{\rm VP}_\eta(t) =
		\begin{pmatrix}
			\displaystyle \sigma_\eta(t)\f{\sqrt{\cD_\rho(t)}}{2\tau} \\[1.5mm]
			\displaystyle -\f{\dot\rho_1(t)}{2\tau} \\[1.5mm]
			\rho_1(t)-1
		\end{pmatrix}
		\quad \text{with} \quad
		\sigma_\eta(t) =
		\begin{cases}
			\eta_k & \text{if } t\in I_k,\ 1\le k\le M_{\rm cc}, \\
			0 & \text{if } t\in\cS_\rho,
		\end{cases}
	\end{equation}
	and the effective potential generating $X^{\rm VP}_\eta$ is given by
	\begin{equation}
		\label{eq:VP-fiber}
		v_\eta = -\f12\,  \sigma_\eta \1_{\{\cD_\rho > 0\}}\f{\cN_\rho}{\sqrt{\cD_\rho}}.
	\end{equation}
		\end{enumerate}
\end{theorem}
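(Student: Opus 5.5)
The necessity part of (i) and the "only these trajectories" half of (ii) are essentially contained in the discussion preceding the statement. I will make it rigorous and then prove sufficiency by reversing the computation.

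\emph{Necessity.} Let $\rho\in\cR_{\rm VP}(X_0)$ be generated by $v\in L^1$ with trajectory $X=X^{\rm VP}\in W^{1,1}$. Point (iii) of \cref{prop:reg_density}, or directly $\dot x_3=-2\tau x_2$ with $x_2\in W^{1,1}$, gives $\rho\in W^{2,1}$ and the initial conditions. The identity \eqref{eq:D-Bloch} gives $\cD_\rho=4\tau^2x_1^2\ge0$. Differentiating $\dot\rho_1=-2\tau x_2$ and using the second equation of \eqref{eq:TDVP_coord} yields $\cN_\rho=-4\tau v x_1$ a.e. Hence $\cN_\rho=0$ a.e. on $\cS_\rho$. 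On $\Omega_\rho$ we have $|x_1|=\sqrt{\cD_\rho}/2\tau$, so $|v|=|\cN_\rho|/(2\sqrt{\cD_\rho})$, and the $L^1$ condition follows from $v\in L^1$. By continuity, $x_1$ has constant sign $\eta_k$ on each connected component $I_k$, with $\eta_1$ forced when $X_0\cdot e_1\ne0$. Thus $X=X^{\rm VP}_\eta$ for some $\eta\in\fS_\rho(X_0)$. The formula \eqref{eq:VP-fiber} for $v$ follows from the argument already given: on $\cS_\rho$, $\dot x_1=0$ a.e., so $vx_2=0$ a.e., and $|\cT_\rho|=0$ by \cref{lem:poles}. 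The map $\eta\mapsto X^{\rm VP}_\eta$ is injective, since distinct $\eta$ differ on some nonempty $I_k$, where $\sqrt{\cD_\rho}>0$.

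\emph{Sufficiency.} Assume the right-hand side of \eqref{eq:RVP-characterization} and fix any $\eta\in\fS_\rho(X_0)$. Define $X_\eta$ by \eqref{eq:VP-fiber-X} and $v_\eta$ by \eqref{eq:VP-fiber}. Then $v_\eta\in L^1$ by the last hypothesis, and $|X_\eta|=1$ by the definition of $\cD_\rho$. One has $X_\eta(0)=X_0$: the second and third components match by hypothesis, and the first matches in absolute value by normalization, with sign $\eta_1$ if $X_0\cdot e_1\ne0$ and value $0$ otherwise. The key claim is that $X_\eta\in W^{1,1}$ and solves \eqref{eq:vp_X}. Then uniqueness of Carath\'eodory solutions identifies it with the VP solution for $v_\eta$, and its third component is $\rho_1-1$.

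The components $x_2,x_3$ are in $W^{1,1}$ because $\rho\in W^{2,1}$, and they satisfy $\dot x_3=-2\tau x_2$ trivially. For $x_2$, we have $\dot x_2=-\ddot\rho_1/2\tau=-\cN_\rho/2\tau+2\tau x_3$, and one checks $-\cN_\rho/2\tau=2v_\eta x_1$ a.e. On $\Omega_\rho$ this holds because $\sigma_\eta^2=1$; on $\cS_\rho$ both sides vanish a.e. by the hypothesis $\cN_\rho=0$ a.e. there.

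\textbf{Main obstacle.} The delicate step is showing that $x_1=\sigma_\eta\sqrt{\cD_\rho}/2\tau$ is absolutely continuous with $\dot x_1=-2v_\eta x_2$, despite $\sigma_\eta$ jumping across the possibly Cantor-like set $\cS_\rho$. I will not try to prove $\sqrt{\cD_\rho}\in W^{1,1}$ directly. Instead I will define the candidate derivative $g:=-2v_\eta x_2\in L^1$ and verify $x_1(t)-x_1(s)=\int_s^t g$. On each $I_k$, $\sqrt{\cD_\rho}$ is locally $W^{1,1}$ since $\cD_\rho\in W^{1,1}$ is bounded below locally. There one computes $\dot\cD_\rho=-2\dot\rho_1\cN_\rho$, using $\rho_1\rho_2=1-(\rho_1-1)^2$, so $\frac{d}{dt}\sqrt{\cD_\rho}=-\dot\rho_1\cN_\rho/\sqrt{\cD_\rho}$, which matches $2\tau g/\sigma_\eta$. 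Hence the identity holds for $s,t$ in a common $I_k$, and by continuity up to endpoints, where $x_1=0$. For general $s<t$, the set $[s,t]\cap\Omega_\rho$ decomposes into countably many intervals. On each full component the integral of $g$ vanishes, since $x_1$ is zero at both ends, and $g=0$ a.e. on $\cS_\rho$. Summation is justified by dominated convergence, since $g\in L^1$. This gives absolute continuity and $\dot x_1=g$ a.e., completing the proof of existence and of the classification.
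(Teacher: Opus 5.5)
Your proposal is correct and follows the paper's proof. Necessity comes from the computations of \cref{sec:VP} ($\cD_\rho=4\tau^2x_1^2$, $\cN_\rho=-4\tau v x_1$, sign constancy of $x_1$ on each $I_k$, and \cref{lem:poles} to pin down $v$ on $\cS_\rho$), and sufficiency comes from reversing them, the only delicate point being absolute continuity of $\sigma_\eta\sqrt{\cD_\rho}$ across the possibly Cantor-like set $\cS_\rho$. The only difference is minor: you verify the exact identity $x_1(t)-x_1(s)=\int_s^t(-2v_\eta x_2)$ by cancellation over the full components of $\Omega_\rho$ inside $[s,t]$, which gives absolute continuity and the first equation of \eqref{eq:TDVP_coord} at once, whereas the paper first bounds the variation to get $\sqrt{\cD_\rho}\in W^{1,1}$ (\cref{lem:sqrtD}) and then recovers $\dot x_1=-2v_\eta x_2$ from the tangency identity on $\Omega_\rho$ and from $\dot x_1=0$ a.e. on $\cS_\rho$.
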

\begin{proof}
	The proof of \Cref{thm:VP-characterization} is given in \Cref{ssec:proof-VP}.
\end{proof}
\begin{remark}
	The condition $\sqrt{\cD_\rho}\in W^{1,1}(0,T;\R_+)$ derived above does not appear in~\eqref{eq:RVP-characterization}, since it follows from $\rho\in W^{2,1}(0,T;\R^2)$, $\cD_\rho\ge0$ and $\1_{\{\cD_\rho>0\}}\cN_\rho/\sqrt{\cD_\rho}\in L^1(0,T;\R)$ by \cref{lem:sqrtD}.
\end{remark}

\section{The GP and GPc-representability problems}
\label{sec:GP}

Let us now turn to the $V$-representability problem for GP and GPc. Since GP is the particular case $v_c=0$ of GPc, we directly treat the GPc-representability problem \eqref{prob:GPc}. Let $X_0\in\bS^2$, $v_c\in L^1(0,T;\R)$ and $\rho=(\rho_1,\rho_2)^T\in W^{1,1}(0,T;\R^2)$ be such that $\rho_1,\rho_2\ge0$ and $\rho_1+\rho_2=2$ on $[0,T]$. Assume that $\rho$ is GPc-representable, i.e. that there exist $w_c\in L^0(0,T;\R)$ and $X^{\rm GPc}_{v_c}\in W^{1,1}(0,T;\bS^2)$ satisfying~\eqref{prob:GPc}.

In the canonical coordinate system, using the notation $X^{\rm GPc}_{v_c}(t)=(y_1(t),y_2(t),y_3(t))^T$, the equations in problem~\eqref{prob:GPc} read
\begin{align}
	\label{eq:TDGP_coord}
	&\dot y_1 = -2v_cy_2+2w_cy_1y_3, \quad \dot y_2 = 2v_cy_1+2\tau y_3 + 2w_cy_2y_3, \quad \dot y_3=-2\tau y_2-2w_c(y_1^2+y_2^2), \\
	& y_j(0)=X_0 \cdot e_j, \quad y_3=\rho_1-1. \nonumber
\end{align}
The constraints $y_1^2+y_2^2+y_3^2=1$ and $y_3=\rho_1-1$ imply
\begin{equation*}
	y_1^2+y_2^2=\rho_1\rho_2, \qquad y_3=\rho_1-1.
\end{equation*}
In contrast with the VP case, the constraint on the density only fixes the colatitude of the GPc-trajectory on the Bloch sphere, and not its longitude. Let us introduce the function $r_\rho \in C^0([0,T];\R_+)$ defined by
\begin{equation*}
	r_\rho:=\sqrt{\rho_1\rho_2},
\end{equation*}
and the sets
\begin{align*}
	\cT_\rho&:=\left\{t_* \in [0,T] \;\middle|\; r_\rho(t_*)=0\right\} && \text{(GP-representability singular set)}, \\
	\widetilde\Omega_\rho&:=[0,T]\setminus\cT_\rho && \text{(GP-representability regular set)}.
\end{align*}
Note that the set $\cT_\rho$ has a very simple geometric interpretation: it is the set of times at which the non-interacting trajectory $X^{\rm GPc}_{v_c}$ reaches one of the poles of the Bloch sphere. The set $\cT_\rho$ is closed since $r_\rho$ is continuous, and an important technical result (\cref{lem:poles}) shows that it has zero Lebesgue measure if $\rho \in \cR_{\rm GPc}(X_0,v_c)$. Hence $\widetilde\Omega_\rho$ is relatively open in $[0,T]$, and is therefore the disjoint union of at most countably many connected components (relatively open intervals of $[0,T]$) $(J_\ell)_{1\le \ell\le N_{\rm cc}}$ with $N_{\rm cc}\in\N^*\cup\{+\infty\}$,
\begin{equation*}
	\widetilde\Omega_\rho = \bigcup_{\ell=1}^{N_{\rm cc}} J_\ell,
\end{equation*}
where the $J_\ell$ are labeled in such a way that $0\in J_1$ if $0\in\widetilde\Omega_\rho$. As in \cref{sec:VP}, these intervals cannot be ordered in general, since $\cT_\rho$ may be a Cantor-like set. On $\widetilde\Omega_\rho$, the function $r_\rho$ is continuous and positive, so that the continuous function $(y_1,y_2)$ admits a continuous lift $\alpha \in W^{1,1}_{\rm loc}(\widetilde\Omega_\rho;\R/(2\pi\Z))$ such that
\begin{equation}
	\label{eq:GP-alpha-reconstruction}
	y_1=-r_\rho\sin\alpha, \qquad y_2=r_\rho\cos\alpha, \qquad y_3=\rho_1-1 \qquad\text{on } \widetilde\Omega_\rho .
\end{equation}

\begin{figure}[t]
	\centering
	\includegraphics[width=0.65\linewidth]{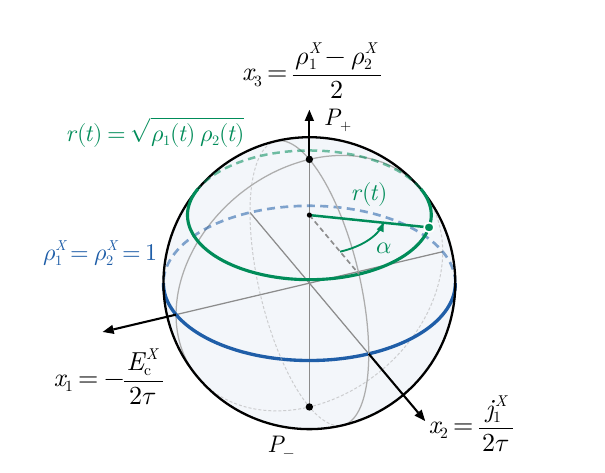}
	\caption{
		Geometric picture of the GPc reconstruction. The density constraint $y_3=\rho_1(t)-1$ fixes the latitude of $X^{\rm GPc}_{v_c}(t)$: the trajectory lies on the horizontal circle (green) of radius $r_\rho(t)=\sqrt{\rho_1(t)\rho_2(t)}$, and only its position on this circle, given by the angle $\alpha$ of \eqref{eq:GP-alpha-reconstruction}, is left free. The circle degenerates into the poles $P_\pm$, where $\rho_1\rho_2=0$ and $\alpha$
		is no longer defined.
	}
	\label{fig:GP_latitude}
\end{figure}
The angle $\alpha$ (see \cref{fig:GP_latitude}) is related to the longitude $\phi$ of $X^{\rm GPc}_{v_c}$ by $\alpha=\phi-\f\pi2$. Using \eqref{eq:TDGP_coord}, we get
\begin{equation*}
	r_\rho^2\,\dot\alpha = y_1\dot y_2-y_2\dot y_1 = 2v_c\,(y_1^2+y_2^2)+2\tau y_1y_3 = 2v_c\,r_\rho^2-2\tau(\rho_1-1)\,r_\rho\sin\alpha \qquad\text{on } \widetilde\Omega_\rho.
\end{equation*}
We infer that $\alpha$ is solution in $W^{1,1}_{\rm loc}(\widetilde\Omega_\rho;\R/(2\pi\Z))$ to the equation
\begin{equation}
	\label{eq:dynamics_alpha}
	\dot\alpha = - 2\tau\f{\rho_1-1}{r_\rho}\sin\alpha + 2v_c,
\end{equation}
which, remarkably, does not involve the geometric control $w_c$.
On each $J_\ell$, the coefficient $(\rho_1-1)/r_\rho$ is continuous, hence locally integrable, so that the right-hand side of \eqref{eq:dynamics_alpha} is Lipschitz in $\alpha$ at all times $t \in J_\ell$, with a locally integrable Lipschitz constant. By  Carath\'eodory's existence and uniqueness theorem, \eqref{eq:dynamics_alpha} admits a unique solution $\alpha\in W^{1,1}_{\rm loc}(\widetilde\Omega_\rho;\R/(2\pi\Z))$ once its value $\alpha(s_\ell)=a_\ell$ at one point $s_\ell\in J_\ell$ for each $1 \le \ell \le N_{\rm cc}$ is prescribed. If $0\in\widetilde\Omega_\rho$, we take $s_1=0$, and the value $a_1=a_0$ is fixed by the initial condition $X_0$ through
\begin{equation}
	\label{eq:GP-alpha-initial}
	X_0\cdot e_1 = -r_\rho(0)\sin a_0 \quad \text{and} \quad X_0\cdot e_2 = r_\rho(0)\cos a_0 \quad \text{if } 0 \in \widetilde\Omega_\rho,
\end{equation}
whereas on every other connected component $J_\ell$, $\ell \ge 2$, of $\widetilde\Omega_\rho$, the value $a_\ell$ can be chosen arbitrarily in $\R/(2\pi\Z)$. If $0\in\cT_\rho$, that is if $X_0\in\{P_+,P_-\}$, no angular value is fixed by the initial condition.

\medskip

Since $y_1=y_2=0$ and $y_3=\pm1$ at the poles, any GPc-trajectory reproducing $\rho$ is thus of the form
\begin{equation}
	\label{eq:GP-full-trajectory}
	X^{\rm GPc}_{v_c}(t)=
	\begin{cases}
		\lp -r_\rho(t)\sin\alpha(t),\ r_\rho(t)\cos\alpha(t),\ \rho_1(t)-1\rp^T & \text{if } t\in J_\ell,\ 1\le \ell\le N_{\rm cc}, \\[1mm]
		\lp 0,\ 0,\ \rho_1(t)-1\rp^T & \text{if } t\in\cT_\rho,
	\end{cases}
\end{equation}
where $\alpha$ solves \eqref{eq:dynamics_alpha} and is determined by its value at one time $s_\ell\in J_\ell$ for each $1 \le \ell \le N_{\rm cc}$, this value being fixed by $X_0$ on $J_1$ if $0\in\widetilde\Omega_\rho$, and free otherwise. The two expressions match at the endpoints of the $J_\ell$ belonging to $\cT_\rho$, so that any function defined by \eqref{eq:GP-full-trajectory} is continuous on $[0,T]$.
The third equation in \eqref{eq:TDGP_coord} is the continuity equation
\begin{equation*}
	\dot\rho_1 = -2\tau y_2-2w_c\rho_1\rho_2.
\end{equation*}
Using \eqref{eq:GP-alpha-reconstruction}, we get
\begin{equation}
	\label{eq:formula_w}
	w_c(t) = -\f{\dot\rho_1(t) +2\tau r_\rho(t)\cos\alpha(t)} {2r_\rho(t)^2} \quad \text{on } \widetilde\Omega_\rho.
\end{equation}
Since the set $\cT_\rho$ is negligible, the above expression completely characterizes $w_c$ in $L^0(0,T;\R)$, for given values of the free parameters $a_\ell$. It remains to find necessary and sufficient conditions on $\rho$ and the free parameters $a_\ell$ for the pair $(w_c,X^{\rm GPc}_{v_c})$ defined by~\eqref{eq:dynamics_alpha}, \eqref{eq:GP-alpha-initial}, \eqref{eq:GP-full-trajectory} and \eqref{eq:formula_w} to be a solution to~\eqref{eq:TDGP_coord} in $L^0(0,T;\R) \times W^{1,1}(0,T;\bS^2)$. Assuming $|\cT_\rho|=0$, these conditions read $w_cy_1y_3, w_cy_2y_3, w_c(y_1^2+y_2^2) \in L^1(0,T;\R)$ (the absolute continuity of $X^{\rm GPc}_{v_c}$ across the poles then follows, see \cref{ssec:proof-GP}), that is, since $y_1^2+y_2^2+y_3^2=1$, $w_c r_\rho \in L^1(0,T;\R)$, or, using~\eqref{eq:formula_w},
\begin{equation*}
	\1_{\widetilde\Omega_\rho} \f{\dot\rho_1}{\sqrt{\rho_1\rho_2}} \in L^1(0,T;\R).
\end{equation*}
Conversely, these computations can be reversed, which yields the sufficiency of the conditions below (see \cref{ssec:proof-GP}).

The above analysis gives rise to the following theorem.

\begin{theorem}[The GPc-representability problem] \label{thm:GP-characterization}
Let $X_0\in\bS^2$ and $v_c\in L^1(0,T;\R)$ (the GP-representability problem corresponds to $v_c=0$).
\begin{enumerate}
	\item \emph{Characterization of the set of GPc-representable densities.} We have
	\begin{equation}
		\label{eq:RGP-characterization}
			\thickmuskip=4mu \medmuskip=3mu
			\rho\in\cR_{\rm GPc}(X_0,v_c) \iff \left\{
			\begin{aligned}
				&\rho\in W^{1,1}(0,T;\R^2), \\
				&\rho_1+\rho_2=2 \text{ and } \rho_1,\rho_2\geq0 \text{ on } [0,T], \\
				&\rho_1(0)=1+X_0\cdot e_3, \\
				&|\cT_\rho|=0 \text{ where } \cT_\rho:=\{t_* \in [0,T] \mid \rho_1(t_*)\rho_2(t_*)=0\}, \\
				&\displaystyle \1_{\{\rho_1\rho_2>0\}} \f{\dot\rho_1}{\sqrt{\rho_1\rho_2}} \in L^1(0,T;\R).
			\end{aligned}
			\right.
	\end{equation}
	In particular,  the set $\cR_{\rm GPc}(X_0,v_c)$ is independent of $v_c$ and is equal to the set of GP-representable densities $\cR_{\rm GP}(X_0)$.
\item \emph{Classification of admissible GPc-trajectories and potentials.} Let $\rho \in \cR_{\rm GPc}(X_0,v_c)=\cR_{\rm GP}(X_0)$, and consider the decomposition
	\begin{equation*}
		\widetilde\Omega_\rho:= \left\{t \in [0,T] \;\middle|\; \rho_1(t)\rho_2(t)>0\right\}= \bigcup_{\ell=1}^{N_{\rm cc}} J_\ell, \qquad N_{\rm cc} \in \N^* \cup \{+\infty\}
	\end{equation*}
	of the relatively open subset $\widetilde\Omega_\rho$ of $[0,T]$ into its connected components, with $0\in J_1$ if $0\in\widetilde\Omega_\rho$ (i.e. if $X_0\notin\{P_+,P_-\}$), and set $r_\rho:=\sqrt{\rho_1\rho_2}$. Fix $s_\ell\in J_\ell$ for every $1\le \ell\le N_{\rm cc}$, with $s_1=0$ if $X_0\notin\{P_+,P_-\}$, and, in that case, let $a_0\in\R/(2\pi\Z)$ be the unique angle defined by \eqref{eq:GP-alpha-initial}. Then the GPc-trajectories reproducing $\rho$ are in one-to-one correspondence with the set of parameters
	\begin{equation*}
			\cA_\rho(X_0) = \left\{a=(a_\ell)_{1 \le \ell \le N_{\rm cc}} \in \lp\R/(2\pi\Z)\rp^{N_{\rm cc}} \;\middle|\; a_1 = a_0 \text{ if } X_0\notin\{P_+,P_-\} \right\},
	\end{equation*}
	through
	\begin{equation}
		\label{eq:GP-fiber-X}
		X^{\rm GPc}_{v_c,a}(t) =
		\begin{pmatrix}
			\displaystyle -\1_{\{r_\rho>0\}}r_\rho(t)\,\sin\alpha_a(t) \\[1.5mm]
			\displaystyle \1_{\{r_\rho>0\}}r_\rho(t)\,\cos\alpha_a(t) \\[1.5mm]
			\rho_1(t)-1
		\end{pmatrix},
	\end{equation}
	where $\alpha_a$ is the unique solution of
	\begin{equation}\label{eq:angle_gp}
		\dot\alpha_a = - 2\tau\f{\rho_1-1}{r_\rho}\sin\alpha_a + 2v_c \quad \text{on } J_\ell, \qquad \alpha_a(s_\ell)=a_\ell, \qquad 1\le \ell\le N_{\rm cc}.
	\end{equation}
	Moreover, the trajectory $X^{\rm GPc}_{v_c,a}$ is solution to \eqref{eq:gpc_X} with potential $w_c$ given by
	\begin{equation}
		\label{eq:GP-fiber}
		w_{c,a} = -\1_{\{r_\rho>0\}}\f{\dot\rho_1 +2\tau r_\rho\cos\alpha_a} {2r_\rho^2}.
	\end{equation}
\end{enumerate}
\end{theorem}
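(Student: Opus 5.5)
The proof splits into necessity and sufficiency for (i), and the classification in (ii) comes out of the same computation.

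\emph{Necessity.} Let $\rho\in\cR_{\rm GPc}(X_0,v_c)$ be realized by a pair $(w_c,X)$ with $X=(y_1,y_2,y_3)^T\in W^{1,1}(0,T;\bS^2)$. Then $\rho_1=1+y_3\in W^{1,1}$, the constraints on $\rho$ hold automatically, and $\rho_1(0)=1+X_0\cdot e_3$. The condition $|\cT_\rho|=0$ is exactly \cref{lem:poles}, which I will take as given. By \eqref{eq:integrability-vw}, a solution requires $w_c r_\rho\in L^1$. The continuity equation gives $\dot\rho_1=-2\tau y_2-2w_c r_\rho^2$, and $|y_2|\le r_\rho$. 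Hence on $\widetilde\Omega_\rho$
\[
\f{|\dot\rho_1|}{r_\rho}\le 2\tau+2|w_c| r_\rho,
\]
which yields the last condition in \eqref{eq:RGP-characterization}. Along the way, the derivation preceding the theorem shows that on each $J_\ell$ the trajectory must be of the form \eqref{eq:GP-fiber-X}. Here $\alpha$ is a $W^{1,1}_{\rm loc}$ lift solving \eqref{eq:angle_gp}, and $w_c$ is forced to equal \eqref{eq:GP-fiber} a.e., using $|\cT_\rho|=0$. Carath\'eodory uniqueness on each $J_\ell$ then shows that the map $a\mapsto X_a$ is injective onto the set of admissible trajectories, with $a_1=a_0$ forced when $0\in\widetilde\Omega_\rho$.

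\emph{Sufficiency.} Assume the conditions and pick any $a\in\cA_\rho(X_0)$. On each $J_\ell$, the right-hand side of \eqref{eq:angle_gp} is $L^1_{\rm loc}$ in $t$ and globally Lipschitz in $\alpha$, so $\alpha_a$ exists uniquely and globally on $J_\ell$. Define $X_a$ by \eqref{eq:GP-fiber-X} and $w_{c,a}$ by \eqref{eq:GP-fiber}. Then
\[
|w_{c,a}|\,r_\rho\le \f{|\dot\rho_1|}{2r_\rho}+\tau\in L^1,
\]
so the right-hand side $F$ of \eqref{eq:gpc_X} along $X_a$ is in $L^1(0,T)$, with $F=0$ on $\cT_\rho$ by convention. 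On each $J_\ell$, a direct computation reversing the derivation shows that $X_a$ is locally absolutely continuous and satisfies \eqref{eq:TDGP_coord}. It remains to show $X_a\in W^{1,1}(0,T)$ with $X_a(t)=X_a(0)+\int_0^tF$ globally.

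\emph{Main obstacle.} This last step is the crossing of the singular set $\cT_\rho$, which may be Cantor-like and on which $\alpha$ jumps freely between components. I would prove $X_a(t)-X_a(s)=\int_s^tF$ componentwise.

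For the third component this is immediate: $y_3=\rho_1-1\in W^{1,1}$, and its derivative agrees with $F_3$ a.e., since $\cT_\rho$ is null and the continuity equation holds on $\widetilde\Omega_\rho$.

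For $y_1,y_2$, I would first observe that $\int_{J_\ell}|F|<\infty$ and $\sum_\ell\int_{J_\ell}|F|\le\|F\|_{L^1}$. Then, for an arbitrary interval $[s,t]$, I would show that $\mu:=y_1$ (resp. $y_2$) satisfies the following: $\mu$ is continuous (since $r_\rho\to0$ at points of $\cT_\rho$), $\mu=0$ on $\cT_\rho$, and $\mu$ is absolutely continuous on each $\overline{J_\ell}$ with derivative $F_1$. Absolute continuity on the closure uses the integrability of $F$ up to the endpoints, together with continuity at the endpoints.

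A standard lemma then applies. If a continuous $\mu$ vanishes on a closed null set $\cT$, is AC on each complementary interval, and has total variation $\sum_\ell\int_{J_\ell}|\mu'|<\infty$, then $\mu$ is AC on $[0,T]$ with derivative $\1_{\widetilde\Omega}\mu'$. Its proof goes through the Banach--Zarecki criterion: $\mu$ is continuous, of bounded variation, and maps null sets to null sets, because $\mu(\cT)=\{0\}$. This gives $X_a\in W^{1,1}(0,T;\bS^2)$ solving \eqref{eq:gpc_X}, which completes sufficiency and surjectivity of $a\mapsto X_a$.

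Finally, independence from $v_c$ is read off \eqref{eq:RGP-characterization}, which does not involve $v_c$.
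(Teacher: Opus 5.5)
Your proposal is correct, and it follows the paper's overall structure. Necessity comes from \cref{lem:poles} and the continuity equation $\dot\rho_1=-2\tau y_2-2w_cr_\rho^2$; your bound $|\dot\rho_1|/r_\rho\le 2\tau+2|w_c|r_\rho$ is a slightly more direct version of the paper's use of \eqref{eq:formula_w}. Sufficiency proceeds by building $\alpha_a$, $X^{\rm GPc}_{v_c,a}$ and $w_{c,a}$, checking the equations on each $J_\ell$, and using $|w_{c,a}|r_\rho\le |\dot\rho_1|/(2r_\rho)+\tau$.

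The genuine difference is how you cross $\cT_\rho$. You invoke the Banach--Zarecki criterion (continuous, BV and Lusin (N) imply AC). The paper instead uses an elementary estimate. Take $g\in L^1$ dominating $|\dot y_1|+|\dot y_2|$ on $\widetilde\Omega_\rho$, and let $s<t$. Either $[s,t]$ lies in a single component, or you split at the smallest and largest points $s'\le t'$ of $\cT_\rho\cap[s,t]$, where $y_1$ and $y_2$ vanish. This gives $|f(t)-f(s)|\le\int_s^{s'}g+\int_{t'}^tg\le\int_s^tg$, and this bound for all $s<t$ is already absolute continuity. The same splitting is exactly what you need to justify the BV claim in your lemma. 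Once it is written down, Banach--Zarecki and the (N) verification are superfluous. Your route packages the step as a general lemma; the paper's is shorter and self-contained.

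Three minor slips:
\begin{enumerate}
\item $F$ does not vanish on $\cT_\rho$; it equals $\pm2\tau e_2$ at $P_\pm$. This is harmless only because $|\cT_\rho|=0$, so no convention is needed.
\item You should state $X^{\rm GPc}_{v_c,a}(0)=X_0$ explicitly. It follows from $a_1=a_0$ if $0\in\widetilde\Omega_\rho$, or, if $0\in\cT_\rho$, because $X_0$ is then the pole $(0,0,\rho_1(0)-1)^T$.
\item Your wording swaps two roles. Injectivity of $a\mapsto X^{\rm GPc}_{v_c,a}$ is immediate from $r_\rho(s_\ell)>0$. Carath\'eodory uniqueness is what shows every admissible trajectory has this form.
\end{enumerate}
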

\begin{proof}
	The proof of \Cref{thm:GP-characterization} is given in \Cref{ssec:proof-GP}.
\end{proof}

\begin{remark}
	Whereas the map $a \mapsto X^{\rm GPc}_{v_c,a}$ is one-to-one, the map $a\mapsto w_{c,a}$ is not in general: two sets of parameters $a\neq a'$ in $\cA_\rho(X_0)$ give the same control $w_{c,a}=w_{c,a'}$ if and only if, for every $1 \le \ell \le N_{\rm cc}$ such that $a_\ell\neq a'_\ell$, one has $a'_\ell=-a_\ell$ and $v_c=0$ a.e. on $J_\ell$. Indeed, $w_{c,a}=w_{c,a'}$ a.e. on $J_\ell$ means $\cos\alpha_a=\cos\alpha_{a'}$ on $J_\ell$. If $\alpha_{a'}(t_0)=\alpha_a(t_0)$ for some $t_0\in J_\ell$, then $\alpha_{a'}=\alpha_a$ on $J_\ell$ by uniqueness of the solution to~\eqref{eq:angle_gp}, and $a'_\ell=a_\ell$. Otherwise, $\alpha_{a'}=-\alpha_a$ on $J_\ell$, and $a'_\ell=-a_\ell$. In the latter case, subtracting the two equations defining $\alpha_a$ and $\alpha_{a'}$ respectively on $J_\ell$ gives $v_c=0$ a.e. on $J_\ell$. In particular, if $X^{\rm GPc}_{0,a}$ is an admissible GP-trajectory (case $v_c=0$), then a GP-trajectory $X^{\rm GP}$ corresponds to the same potential $w_{c,a}$ if and only if on each connected component $J_\ell$ of $\widetilde \Omega_\rho$, either $X^{\rm GP}$ coincides with $X^{\rm GPc}_{0,a}$ on $J_\ell$, or is its mirror image with respect to the vertical plane $\{x_1=0\}$ containing $\cC_s$.
\end{remark}

\section{Connections between VP and GP/GPc-representability results}
\label{sec:comparison}

In this section, we compare the non-interacting $V$-representability results obtained in \cref{thm:VP-characterization,thm:GP-characterization}. We show in particular that VP-representability implies GPc-representability.
This statement can be formalized as follows.

\begin{corollary}
	\label{cor:VP-in-GP}
	Let $X_0\in \bS^2$. Then, for every $v_c\in L^1(0,T;\R)$,
	\begin{equation*}
		\cR_{\rm VP}(X_0) \subsetneq \cR_{\rm GP}(X_0)=\cR_{\rm GPc}(X_0,v_c).
	\end{equation*}
	Moreover, for all $\rho\in\cR_{\rm VP}(X_0)$ we have
	\begin{equation}
	\label{eq:speed-limit}
	|\dot\rho_1|\le 2\tau\sqrt{\rho_1\rho_2} \text{ on } [0,T] \quad \text{and} \quad \cT_\rho\subseteq\cS_\rho.
\end{equation}
\end{corollary}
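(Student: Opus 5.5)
The plan is to work directly with the two characterizations \eqref{eq:RVP-characterization} and \eqref{eq:RGP-characterization}. The equality $\cR_{\rm GP}(X_0)=\cR_{\rm GPc}(X_0,v_c)$ is already contained in \cref{thm:GP-characterization}(i). It therefore remains to prove three things: the bounds \eqref{eq:speed-limit}, the inclusion $\cR_{\rm VP}(X_0)\subset\cR_{\rm GP}(X_0)$, and the strictness of this inclusion.

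\textbf{The bounds \eqref{eq:speed-limit}.} Let $\rho\in\cR_{\rm VP}(X_0)$. The condition $\cD_\rho=4\tau^2\rho_1\rho_2-\dot\rho_1^2\ge0$ is exactly $|\dot\rho_1|\le 2\tau\sqrt{\rho_1\rho_2}$ on $[0,T]$. Here $\dot\rho_1$ is continuous because $\rho\in W^{2,1}$, so the inequality holds pointwise and not merely almost everywhere. If $t\in\cT_\rho$, then $\rho_1\rho_2(t)=0$. The speed limit then forces $\dot\rho_1(t)=0$, hence $\cD_\rho(t)=0$, that is $t\in\cS_\rho$. This gives $\cT_\rho\subseteq\cS_\rho$.

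\textbf{The inclusion.} We check the conditions of \eqref{eq:RGP-characterization} one at a time.
\begin{itemize}
\item $W^{1,1}$ regularity, the sum and sign constraints, and the initial density condition are immediate from \eqref{eq:RVP-characterization}.
\item The integrability condition follows from the speed limit. On $\{\rho_1\rho_2>0\}$ we have $|\dot\rho_1|/\sqrt{\rho_1\rho_2}\le 2\tau$, so this function is bounded and in particular in $L^1(0,T;\R)$.
\item The condition $|\cT_\rho|=0$ needs an argument. The cleanest route is to invoke \cref{lem:poles}, which the paper uses in \cref{sec:VP} to state $|\cT_\rho|=0$ for VP-representable densities. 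Equivalently, $\rho$ is the density of a VP-trajectory, and a VP-trajectory is a GPc-trajectory with $w_c=0$ and $v_c=v$, to which \cref{lem:poles} applies.
\item If a direct proof is preferred, note that $\rho_1\in W^{2,1}$. On $\cT_\rho$ we have $\rho_1\in\{0,2\}$, hence $\dot\rho_1=0$ a.e. there and $\ddot\rho_1=0$ a.e. there, since a.e. on the level set of a $W^{1,1}$ function its derivative vanishes. Then $\cN_\rho=4\tau^2(\rho_1-1)=\pm4\tau^2\neq0$ a.e. on $\cT_\rho$. But $\cT_\rho\subseteq\cS_\rho$, and $\cN_\rho=0$ a.e. on $\cS_\rho$. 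Hence $|\cT_\rho|=0$.
\end{itemize}
This self-contained argument is the step I expect to require the most care, since it combines the level-set property for Sobolev functions (applied twice) with the condition on $\cN_\rho$.

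\textbf{Strictness.} It suffices to exhibit a density that is GP-representable but violates one of the VP conditions. Two simple choices work, depending on $X_0$.
\begin{itemize}
\item If $X_0\notin\{P_\pm\}$, take $\rho_1=1+X_0\cdot e_3+\epsilon\, t$ on a short interval, then extended smoothly and kept in $(0,2)$. Choose $\epsilon$ so that $\dot\rho_1(0)\ne-2\tau X_0\cdot e_2$. This $\rho$ is smooth with $\rho_1\rho_2$ bounded below, so it lies in $\cR_{\rm GP}(X_0)$, but it violates the initial-current condition of \eqref{eq:RVP-characterization}.
\item Alternatively, and valid for any $X_0$, take a smooth $\rho_1$ with $\rho_1\rho_2>0$ except at $t=0$ when $X_0$ is a pole, and with $|\dot\rho_1|>2\tau\sqrt{\rho_1\rho_2}$ somewhere. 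This breaks the speed limit while keeping $\dot\rho_1/\sqrt{\rho_1\rho_2}$ integrable.
\end{itemize}
For a pole $X_0=P_+$, one concrete choice is $\rho_1(t)=2-c\,t^2$ near $0$, extended smoothly inside $(0,2)$. Then $\sqrt{\rho_1\rho_2}\sim\sqrt{2c}\,t$ and $|\dot\rho_1|=2ct$, so the ratio is bounded and the density is GP-representable. Taking $c>2\tau^2$ makes $\cD_\rho<0$ for small $t>0$, so the density is not VP-representable.
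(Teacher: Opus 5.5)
Your proof is correct. For the speed limit, the inclusion $\cT_\rho\subseteq\cS_\rho$ and the inclusion $\cR_{\rm VP}(X_0)\subset\cR_{\rm GP}(X_0)=\cR_{\rm GPc}(X_0,v_c)$, it coincides with the paper's argument, including the appeal to \cref{lem:poles} for $|\cT_\rho|=0$. Your self-contained alternative for this step is also valid: $\ddot\rho_1=0$ a.e.\ on $\cT_\rho$, while $\cN_\rho=0$ a.e.\ on $\cS_\rho\supseteq\cT_\rho$. It is \cref{lem:poles} rewritten in density variables, since $\dot x_2=-\ddot\rho_1/(2\tau)$.

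The genuine difference lies in the strictness example. The paper uses the single family $\rho_1(t)=1+\cos(\kappa t+\theta_0)$, with $\cos\theta_0=X_0\cdot e_3$ and $\kappa>2\tau$. This choice works uniformly for every $X_0$, poles included, because $\cT_\rho$ is finite. It also violates $|\dot\rho_1|\le2\tau\sqrt{\rho_1\rho_2}$ on all of $\{\rho_1\rho_2>0\}$. Since that bound does not involve the initial condition, this density lies outside $\cR_{\rm VP}(X_0')$ for every $X_0'$. Your pole example has the same feature.

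Your non-pole example, however, only breaks the initial-current condition $\dot\rho_1(0)=-2\tau X_0\cdot e_2$. That suffices for the strict inclusion as stated, but it is a weaker separation. Such a density can be VP-representable from another initial point $X_0'$ with the same colatitude and the matching current. So it does not exhibit the lattice speed limit, which is the phenomenon the paper emphasizes right after the corollary.
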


\begin{proof}
	Let $\rho\in\cR_{\rm VP}(X_0)$. Then $\rho\in W^{2,1}(0,T;\R^2)\subset W^{1,1}(0,T;\R^2)$ and $\rho_1(0)=1+X_0\cdot e_3$. The condition $\cD_\rho\geq0$ reads
\begin{equation*}
	|\dot\rho_1|\le 2\tau\sqrt{\rho_1\rho_2} \qquad\text{on }[0,T],
\end{equation*}
(which proves the inequality in~\eqref{eq:speed-limit}) so that $\1_{\{\rho_1\rho_2>0\}}\dot\rho_1/\sqrt{\rho_1\rho_2}$ is bounded by $2\tau$, hence integrable. It also gives $\dot\rho_1=0$ on $\cT_\rho$, which readily implies that $\cT_\rho\subseteq\cS_\rho$. Finally, $\rho=\rho^X$ for some VP-trajectory $X$, and therefore $|\cT_\rho|=0$ by \cref{lem:poles}. By \cref{thm:GP-characterization}, $\rho\in\cR_{\rm GP}(X_0)=\cR_{\rm GPc}(X_0,v_c)$ for every $v_c\in L^1(0,T;\R)$. To show that the inclusion is strict, it suffices to exhibit a density $\rho \in \cR_{\rm GP}(X_0)$ violating the first condition in~\eqref{eq:speed-limit}. For instance, with $\theta_0\in[0,\pi]$ such that $\cos\theta_0=X_0\cdot e_3$ and $\kappa>2\tau$, the density $\rho_1(t):=1+\cos(\kappa t+\theta_0)$ satisfies $\rho_1\rho_2=\sin^2(\kappa t+\theta_0)$ and $|\dot\rho_1|/\sqrt{\rho_1\rho_2}=\kappa$ on $\{\rho_1\rho_2>0\}$, so that it belongs to $\cR_{\rm GP}(X_0)$ by \cref{thm:GP-characterization}, but $|\dot\rho_1|>2\tau\sqrt{\rho_1\rho_2}$ on this set.
\end{proof}

The bound \eqref{eq:speed-limit} has a transparent physical meaning: on a lattice, the density cannot vary arbitrarily fast, because the current carrying the variation is itself bounded by the hopping amplitude. In the GP and GPc dynamics, the geometric control $w_c$ induces a source term in the continuity equation \eqref{eq:continuity}, which removes this restriction.

\medskip

\noindent
In the rest of the section, we consider some $\rho \in \cR_{\rm VP}(X_0) \subset \cR_{\rm GP}(X_0)$ and some $v_c \in L^1(0,T;\R)$.

\medskip

First, we see that on the Bloch sphere, the VP singular states form the great circle $\cC_s=\{x_1=0\}$ of states with zero kinetic energy, whereas the GPc singular states are the two poles $P_\pm$, which lie on $\cC_s$ (see \cref{fig:singular_set}). Geometrically, the inclusion $\cT_\rho\subseteq\cS_\rho$ reflects the fact that $\{P_+,P_-\}\subset\cC_s$. It implies that if there exists a non-singular VP-trajectory starting from the initial condition $X_0$ reproducing $\rho$ on the whole time-interval $[0,T]$, then for \emph{any} $X_0'$ satisfying $1+X_0' \cdot e_3=\rho_1(0)$, there exists a unique GP-trajectory starting from $X_0'$ reproducing $\rho$, and this trajectory does not meet any singularity.

\medskip

Consider now the case when the VP and/or GP-trajectories have singularities.
In both cases, the constraint on the density fixes the colatitude of the trajectory, $x_3=y_3=\rho_1-1$, and the admissible VP- (resp.\ GPc-)trajectories are parametrized by one free parameter per connected component of the non-singular set, the component containing $t=0$ excluded when $X_0$ is not in the singular set. The parameter, however, differs in nature:
\begin{itemize}
	\item in VP, $x_2$ is fixed by the equation $x_2=-\dot \rho_1/(2\tau)$ and only the sign of $x_1$ is free. The set of admissible VP-trajectories is discrete, parametrized by $\eta\in\fS_\rho(X_0)\subset\{-1,1\}^{M_{\rm cc}}$, and the branching occurs when the trajectory hits $\cC_s$;
	\item in GPc, only the colatitude is fixed, and the longitude $\phi=\alpha+\f \pi 2$ is a solution of the first-order equation \eqref{eq:dynamics_alpha}. The set of admissible GPc-trajectories is a continuum, parametrized by $a\in\cA_\rho(X_0)\subset(\R/(2\pi\Z))^{N_{\rm cc}}$, and the branching occurs when the trajectory reaches a pole.
\end{itemize}

The differences between VP and GP/GPc in terms of non-interacting $V$-representability conditions on the one hand, and of admissible non-interacting trajectories and potentials on the other hand, are summarized in \cref{tab:VP-vs-GP} and \cref{tab:VP-vs-GP-2}, respectively.

\begin{table}[t]
	\centering
	\small
	\setlength{\tabcolsep}{4pt}
	\renewcommand{\arraystretch}{1.6}
	\begin{tabular}{l|c|c}
		& VP & GP / GPc \\
		\hline
		Required regularity of $\rho$ & $W^{2,1}(0,T;\R^2)$ & $W^{1,1}(0,T;\R^2)$ \\
		$N$-representability conditions & $\rho_1,\rho_2 \ge 0$, \ $\rho_1+\rho_2=2$ & $\rho_1,\rho_2 \ge 0$, \ $\rho_1+\rho_2=2$ \\
		Pointwise constraint on $\dot\rho$ & {\renewcommand{\arraystretch}{1.1}\begin{tabular}[c]{@{}c@{}}\rule{0pt}{2.6ex}$\cD_\rho\ge 0$, i.e.\\ $|\dot\rho_1|\le2\tau\sqrt{\rho_1\rho_2}$\rule[-1.3ex]{0pt}{0pt}\end{tabular}} & none \\
		Set of singular times & $\cS_\rho=\{\cD_\rho=0\}$ & $\cT_\rho=\{\rho_1\rho_2=0\}$ \\
		Condition on $\cS_\rho$/$\cT_\rho$ & $\cN_\rho=0$ a.e. on $\cS_\rho$ & $|\cT_\rho|=0$ \\
		Integrability condition & $\1_{\{\cD_\rho > 0\}}\cN_\rho/\sqrt{\cD_\rho} \in L^1$ & $\1_{\{\rho_1\rho_2 > 0\}}\dot\rho_1/\sqrt{\rho_1\rho_2}\in L^1$ \\
		Constraints on $X_0$ & {\renewcommand{\arraystretch}{1.1}\begin{tabular}[c]{@{}c@{}}\rule{0pt}{2.6ex}$X_0\cdot e_3=\rho_1(0)-1=1-\rho_2(0)$\\ $X_0\cdot e_2 = -\dot\rho_1(0)/(2\tau)$\rule[-1.3ex]{0pt}{0pt}\end{tabular}} & $X_0\cdot e_3=\rho_1(0)-1=1-\rho_2(0)$ \\
	\end{tabular}
	\caption{Comparison of the VP and GP/GPc $V$-representability conditions. We recall that $\cD_\rho:=  4\tau^2\rho_1\rho_2-\dot\rho_1^2$ and $\cN_\rho:=\ddot\rho_1+4\tau^2(\rho_1-1)$.}
	\label{tab:VP-vs-GP}
\end{table}

\begin{table}[t]
	\centering
	\small
	\setlength{\tabcolsep}{4pt}
	\renewcommand{\arraystretch}{1.6}
	\begin{tabular}{l|c|c}
		& VP & GP / GPc \\
		\hline
		Class of effective potentials & $v\in L^1(0,T;\R)$ & {\renewcommand{\arraystretch}{1.1}\begin{tabular}[c]{@{}c@{}}\rule{0pt}{2.6ex}$w_c \in L^0(0,T;\R)$ with\\ $w_c\sqrt{\rho_1\rho_2}\in L^1(0,T;\R)$\rule[-1.3ex]{0pt}{0pt}\end{tabular}} \\
		Singular set on the Bloch sphere & great circle $\cC_s=\{x_1=0\}$ & poles $P_\pm$ \\
		{\renewcommand{\arraystretch}{1.1}\begin{tabular}[c]{@{}l@{}}\rule{0pt}{2.6ex}Free parameter per regular connected\\ component of admissible NI-trajectories\rule[-1.3ex]{0pt}{0pt}\end{tabular}} & sign $\eta_k\in\{-1,1\}$ & angle $a_\ell\in\R/(2\pi\Z)$ \\
		{\renewcommand{\arraystretch}{1.1}\begin{tabular}[c]{@{}l@{}}\rule{0pt}{2.6ex}Uniqueness of the NI-trajectory\\ for a fixed admissible effective potential\rule[-1.3ex]{0pt}{0pt}\end{tabular}} & yes & no \\
	\end{tabular}
	\caption{Comparison of the properties of the VP and GP/GPc effective potentials and non-interacting (NI) trajectories on the Bloch sphere. The set of singular times is the set of times at which the non-interacting trajectory hits the singular set.}
	\label{tab:VP-vs-GP-2}
\end{table}

\FloatBarrier
\section{Singular time crossing for interacting densities}
\label{sec:interacting}

In this section, we prove \cref{thm:NIVIN}, and we describe all the non-interacting trajectories reproducing an interacting density, together with the corresponding effective potentials. Throughout this section, we fix a normalized $\Psi_0^{\rm int}\in\C^3$ and $u\in L^1(0,T;\R)$, and we denote by $\Psi^u$ the solution to~\eqref{eq:interacting_dynamics} and by $\rho^u$ its density, which belongs to $W^{2,1}(0,T;\R^2)$ by \cref{prop:reg_density}. We use the shorthand notation $\cS^u:=\cS_{\rho^u}$, $\cT^u:=\cT_{\rho^u}$, $\cD^u:=\cD_{\rho^u}$ and $\cN^u:=\cN_{\rho^u}$, and we denote by $M_{\rm cc}$ and $N_{\rm cc}$ the numbers of connected components of $\Omega_{\rho^u}$ and $\widetilde\Omega_{\rho^u}$ (see \cref{thm:VP-characterization,thm:GP-characterization}).

\medskip
We begin with a lemma on the interacting states, which is the key ingredient of this section. For a normalized $\Psi\in\C^3$, we define the kinematic discriminant
\begin{equation*}
	\cD(\Psi):= 4\tau^2 \rho_1^\Psi \rho_2^\Psi - | j_1^\Psi |^2 .
\end{equation*}
By \cref{prop:reg_density}, $\dot\rho^u_1=-j^{\Psi^u}_1$ on $[0,T]$, so that $\cD^u(t)=\cD(\Psi^u(t))$ for every $t\in[0,T]$. A VP singular time of $\rho^u$ is thus a time at which the interacting state $\Psi^u(t)$ satisfies $\cD(\Psi^u(t))=0$. The following lemma characterizes such states.

\begin{lemma}\label{lem:Du}
	Let $\Psi=(z_1,z_2,z_3)^T$ be a normalized vector of $\C^3$. Then
	\begin{equation}\label{eq:ineq_D_improved}
		\cD(\Psi) \ge 8\tau^2\lp \lp\Re\lp \overline{z_3}z_1\rp\rp^2 + \lp \Re\lp \overline{z_3}z_2\rp\rp^2 \rp.
	\end{equation}
	In particular, $\cD(\Psi) \ge 0$, with equality if and only if $\Psi$ belongs to the set
	\begin{equation}\label{eq:singular-state}
		\cK_s := \left\{ e^{i\vartheta} \lp y, -(1-y), i\eta\sqrt{2y(1-y)} \rp^T \;\middle|\; (y,\vartheta,\eta) \in [0,1] \times \R/(2\pi\Z) \times \{-1,1\} \right\} .
	\end{equation}
	Moreover, $\cK_s \subset \cM^{\rm Slat}$, and the Bloch representation of $\cK_s$ is the great circle $\cC_s$.
\end{lemma}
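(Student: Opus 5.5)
The plan is to prove \eqref{eq:ineq_D_improved} by expanding $\cD(\Psi)$ and writing it as the claimed right-hand side plus an explicit sum of non-negative terms. Then read off the equality case from the vanishing of each term, and finally identify $\cK_s$ on the Bloch sphere.

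\emph{Step 1: expansion.} Set $a_i:=\overline{z_3}z_i$ for $i=1,2$, so that $|a_i|^2=|z_3|^2|z_i|^2$ and $\Im(z_3\overline{z_i})=-\Im a_i$. From \eqref{eq:density} and \eqref{eq:curent},
\begin{equation*}
	\f{\cD(\Psi)}{4\tau^2} = (2|z_1|^2+|z_3|^2)(2|z_2|^2+|z_3|^2) - 2\lp\Im a_1-\Im a_2\rp^2 .
\end{equation*}
Expanding the product gives $4|z_1z_2|^2+2|z_3|^2(|z_1|^2+|z_2|^2)+|z_3|^4$. Writing $|z_3|^2|z_i|^2=(\Re a_i)^2+(\Im a_i)^2$, the $(\Im a_i)^2$ terms cancel against the square. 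This should yield
\begin{equation*}
	\f{\cD(\Psi)}{4\tau^2} = 2\lp(\Re a_1)^2+(\Re a_2)^2\rp + \lp 2|z_1z_2|-|z_3|^2\rp^2 + 4\lp \Im a_1\,\Im a_2 + |a_1||a_2|\rp ,
\end{equation*}
using $|a_1||a_2|=|z_3|^2|z_1z_2|$. The last bracket is non-negative since $|\Im a_1\Im a_2|\le|a_1||a_2|$. This gives \eqref{eq:ineq_D_improved} and hence $\cD(\Psi)\ge0$.

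\emph{Step 2: equality case.} This is the step requiring the most care, though it is elementary. $\cD(\Psi)=0$ if and only if three conditions hold: $\Re a_1=\Re a_2=0$, $|z_3|^2=2|z_1z_2|$, and $\Im a_1\Im a_2=-|a_1||a_2|$.

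If $z_3=0$, then $z_1z_2=0$, and $\Psi$ is a phase times $(1,0,0)^T$ or $(0,1,0)^T$. These are the elements of $\cK_s$ with $y=1$ and $y=0$.

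If $z_3\neq0$, fix the global phase so that $z_3=i\eta s$ with $s>0$ and $\eta\in\{\pm1\}$. Then $\Re a_i=0$ forces $z_i\in\R$. The sign condition forces $z_1z_2\le0$, so after possibly flipping $\eta$ and the phase by $-1$ we can write $z_1=p\ge0$ and $z_2=-q\le0$. The relation $s^2=2pq$ together with $p^2+q^2+2pq=1$ gives $p+q=1$. Setting $y=p$ puts $\Psi$ exactly in the form \eqref{eq:singular-state}. Conversely, each element of $\cK_s$ satisfies the three equality conditions, which is a direct check.

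\emph{Step 3: Slater property and Bloch image.} For the Slater property, observe that $z_3^2=-2y(1-y)=2z_1z_2$, so $\cK_s\subset\cM^{\rm Slat}$. For the Bloch image, recall from \eqref{eq:D-Bloch}, whose computation is purely pointwise on $\bS^2$ using $\rho^X_1=1+x_3$ and $j^X_1=2\tau x_2$, that $\cD(\Phi_X)=4\tau^2x_1^2$ for every Slater state. Hence a Slater state satisfies $\cD=0$ if and only if its Bloch vector lies on $\cC_s$. Since $\cK_s$ is precisely the zero set of $\cD$ and is contained in $\cM^{\rm Slat}$, its Bloch image is exactly $\cC_s$. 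Surjectivity onto $\cC_s$ holds because every Slater state with $x_1=0$ has $\cD=0$ and therefore lies in $\cK_s$.
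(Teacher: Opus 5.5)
Your proposal is correct and follows essentially the same route as the paper. Your identity $\cD(\Psi)/(4\tau^2)=2\big((\Re a_1)^2+(\Re a_2)^2\big)+(2|z_1z_2|-|z_3|^2)^2+4(\Im a_1\Im a_2+|a_1||a_2|)$ is exactly the sum-of-squares decomposition behind the paper's chain of (in)equalities, and your equality analysis and your use of \eqref{eq:D-Bloch} for the Bloch image match the paper's, with your phase normalization $z_3=i\eta s$ being a slightly more explicit version of the paper's argument via $\arg z_1=\arg z_3\pm\frac\pi2$, $\arg z_2=\arg z_3\mp\frac\pi2$.
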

\begin{proof}
	The proof of \Cref{lem:Du} is given in \Cref{ssec:proof-Du}.
\end{proof}

The states at which the inversion of the van Leeuwen equation degenerates are thus Slater states, represented on the Bloch sphere by the VP singular set $\cC_s$.

\medskip
We can now prove the main result of this article.

\begin{proof}[Proof of \cref{thm:NIVIN}]
	Let $\rho^u\in\cR_{\rm int}(\Psi_0^{\rm int})$. By \cref{prop:reg_density}, $\rho^u\in W^{2,1}(0,T;\R^2)$, $\rho^u(0)=\rho^{\Psi_0^{\rm int}}$ and $\dot\rho^u_1(0)=-j_1^{\Psi_0^{\rm int}}$.

\proofstep{Proof of (i).} We check the conditions of \cref{thm:VP-characterization}. By \cref{lem:Du}, $\cD^u=\cD(\Psi^u)\ge0$ on $[0,T]$. Moreover, writing $\Psi^u=(z_1,z_2,z_3)^T$, the Cauchy--Schwarz inequality in $\R^2$ applied to the van Leeuwen equation~\eqref{eq:vanleeuwen}, with $\lambda=1$ and $v=u$, and the improved bound~\eqref{eq:ineq_D_improved} give, a.e. on $(0,T)$,
	\begin{align*}
		|\cN^u| &\le 2\sqrt2\,\tau \sqrt{\lp 2u+U_1 \rp^2 + \lp 2u-U_2 \rp^2}\, \sqrt{\Re\lp \overline{z_3}z_1\rp^2 + \Re\lp \overline{z_3}z_2\rp^2} \\
		&\le \sqrt{\lp 2u+U_1 \rp^2 + \lp 2u-U_2 \rp^2}\,\sqrt{\cD^u},
	\end{align*}
	so that, since $\sqrt{(2u+U_1)^2 + (2u-U_2)^2} \le 2\sqrt2\,|u| + \sqrt{U_1^2+U_2^2}$,
	\begin{equation}
		\label{eq:N-over-D-bound}
		|\cN^u| \le \lp 2\sqrt2\,|u| + \sqrt{U_1^2+U_2^2}\rp\sqrt{\cD^u} \qquad\text{a.e. on } (0,T).
	\end{equation}
	Hence $\cN^u=0$ a.e. on $\cS^u$, and $\1_{\{\cD^u>0\}}\cN^u/\sqrt{\cD^u}\in L^1(0,T;\R)$ since $u\in L^1(0,T;\R)$. The only conditions of \cref{thm:VP-characterization} which are not automatically satisfied are therefore $\rho^u_1(0)=1+X_0\cdot e_3$ and $\dot\rho^u_1(0)=-2\tau X_0\cdot e_2$, that is $\rho_1^{\Psi_0^{\rm int}}=\rho_1^{X_0}$ and $j_1^{\Psi_0^{\rm int}}=j_1^{X_0}$ by~\eqref{eq:interpretation_x2} and~\eqref{eq:interpretation_x3}. Since $\rho_1+\rho_2=2$ and $j_1+j_2=0$, these conditions are equivalent to $\rho^{X_0}=\rho^{\Psi_0^{\rm int}}$ and $j^{X_0}=j^{\Psi_0^{\rm int}}$. They are sufficient for $\rho^u\in\cR_{\rm VP}(X_0)$ for every $u$, and necessary as soon as $\rho^u\in\cR_{\rm VP}(X_0)$ for one $u$.

\proofstep{Proof of (ii).} Let $X_0'\in\bS^2$ be such that $\rho^{X_0'}=\rho^{\Psi_0^{\rm int}}$ and $j^{X_0'}=j^{\Psi_0^{\rm int}}$, which exists by (iii). By point (i) and \cref{cor:VP-in-GP}, $\rho^u\in\cR_{\rm VP}(X_0')\subset\cR_{\rm GP}(X_0')$. Since the conditions in~\eqref{eq:RGP-characterization} only depend on $X_0$ through $X_0\cdot e_3$, we have $\cR_{\rm GP}(X_0')=\cR_{\rm GP}(X_0)=\cR_{\rm GPc}(X_0,v_c)$ as soon as $\rho^{X_0}=\rho^{X_0'}=\rho^{\Psi_0^{\rm int}}$, so that this condition is sufficient. It is also necessary, since $\rho^{X_0}=\rho^u(0)=\rho^{\Psi_0^{\rm int}}$ for $\rho^u\in\cR_{\rm GPc}(X_0,v_c)$.

\proofstep{Proof of (iii).} By \cref{lem:Du}, $\cD(\Psi_0^{\rm int})\ge0$, that is $\big(j_1^{\Psi_0^{\rm int}}/(2\tau)\big)^2\le\rho_1^{\Psi_0^{\rm int}}\rho_2^{\Psi_0^{\rm int}}=1-\big(\rho_1^{\Psi_0^{\rm int}}-1\big)^2$. Hence
\begin{equation*}
	X_0':=\Big(\sqrt{\rho_1^{\Psi_0^{\rm int}}\rho_2^{\Psi_0^{\rm int}}-\big(j_1^{\Psi_0^{\rm int}}/(2\tau)\big)^2},\ \f{j_1^{\Psi_0^{\rm int}}}{2\tau},\ \rho_1^{\Psi_0^{\rm int}}-1\Big)^T\in\bS^2
\end{equation*}
satisfies $\rho^{X_0'}=\rho^{\Psi_0^{\rm int}}$ and $j^{X_0'}=j^{\Psi_0^{\rm int}}$ by~\eqref{eq:interpretation_x2} and~\eqref{eq:interpretation_x3}.
\end{proof}

\medskip
Combining \cref{thm:NIVIN} with \cref{thm:VP-characterization,thm:GP-characterization}, we obtain a complete description of the non-interacting trajectories reproducing $\rho^u$, and bounds on the corresponding effective potentials.

\begin{theorem}[Non-interacting trajectories and potentials reproducing an interacting density]
	\label{thm:int-representability}
	Let $X_0\in\bS^2$ be such that $\rho^{X_0}=\rho^{\Psi_0^{\rm int}}$. The sets $\cS^u$ and $\cT^u$ are finite, so that the numbers $M_{\rm cc}$ and $N_{\rm cc}$ are finite. Moreover:
	\begin{enumerate}
		\item \emph{VP.} If $j^{X_0}=j^{\Psi_0^{\rm int}}$, the VP-trajectories reproducing $\rho^u$ are the $X^{\rm VP}_\eta$ of~\eqref{eq:VP-fiber-X}, with $\eta\in\fS_{\rho^u}(X_0)\simeq\{-1,1\}^{M_{\rm cc}-1}$ if $X_0\cdot e_1\neq0$ and $\fS_{\rho^u}(X_0)\simeq \{-1,1\}^{M_{\rm cc}}$ otherwise. For every $t_*\in\cS^u$, $\Psi^u(t_*)$ is a Slater state with Bloch vector $X^{\rm VP}_\eta(t_*)\in\cC_s$, and the potentials $v_\eta$ satisfy
		\begin{equation}
			\label{eq:bound-vp-int}
			|v_\eta(t)| \le \sqrt2\,|u(t)| + \f{\sqrt{U_1^2+U_2^2}}{2} \qquad\text{for a.e. } t\in[0,T] ,
		\end{equation}
		so that $v_\eta\in L^p(0,T;\R)$ as soon as $u\in L^p(0,T;\R)$, $p\in[1,+\infty]$.
		\item \emph{GPc.} For every $v_c\in L^1(0,T;\R)$, the GPc-trajectories reproducing $\rho^u$ are the $X^{\rm GPc}_{v_c,a}$ of~\eqref{eq:GP-fiber-X}, with $a\in\cA_{\rho^u}(X_0)\simeq(\R/(2\pi\Z))^{N_{\rm cc}-1}$ if $X_0\notin\{P_+,P_-\}$ and $\cA_{\rho^u}(X_0)\simeq(\R/(2\pi\Z))^{N_{\rm cc}}$ otherwise. For every $t_z\in\cT^u$, $\Psi^u(t_z)$ is a Slater state with Bloch vector $X^{\rm GPc}_{v_c,a}(t_z)\in\{P_+,P_-\}$, and the potentials $w_{c,a}$ satisfy
		\begin{equation}
			\label{eq:bound-wc-int}
			|w_{c,a}(t)|\sqrt{\rho^u_1(t)\rho^u_2(t)} \le 2\tau \qquad\text{for a.e. } t\in[0,T] .
		\end{equation}
	\end{enumerate}
\end{theorem}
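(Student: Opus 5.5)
The proof has three parts: finiteness of the singular sets, identification of the singular states, and the two potential bounds.

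\emph{Finiteness of $\cS^u$ and $\cT^u$.} By \cref{lem:Du}, $t_*\in\cS^u$ if and only if $\Psi^u(t_*)\in\cK_s\subset\cM^{\rm Slat}$. Hence $\cS^u\subset\cT^{\rm Slat}$ for the interacting trajectory. Point (iii) of \cref{prop:slater_set} (case $\lambda=1>0$, $w=0$) states that $\cT^{\rm Slat}$ is finite, so $\cS^u$ is finite. Next, $\cT^u\subseteq\cS^u$: on $\cT^u$ we have $\rho_1\rho_2=0$, and $\cD^u\ge 0$ then forces $\dot\rho_1=0$, which gives $\cD^u=0$. So $\cT^u$ is finite as well. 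A finite set removed from $[0,T]$ leaves finitely many relatively open intervals, so $M_{\rm cc}$ and $N_{\rm cc}$ are finite.

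\emph{Parametrization of the trajectories.} The descriptions of the trajectories are then read off \cref{thm:VP-characterization,thm:GP-characterization}. These apply because $\rho^u\in\cR_{\rm VP}(X_0)$ and $\rho^u\in\cR_{\rm GPc}(X_0,v_c)$ by \cref{thm:NIVIN}. Counting parameters, $\eta_1$ (resp.\ $a_1$) is fixed exactly when $X_0\cdot e_1\neq 0$ (resp.\ $X_0\notin\{P_\pm\}$).

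\emph{Singular states.} For $t_*\in\cS^u$, the state $\Psi^u(t_*)$ lies in $\cK_s$. Its Bloch vector $Y\in\cC_s$ is determined by the density and current of $\Psi^u(t_*)$, since $x_3=\rho_1-1$, $x_2=j_1/(2\tau)$ and $x_1=0$. Because $\dot\rho^u_1=-j_1^{\Psi^u}$, these coordinates coincide with those of $X^{\rm VP}_\eta(t_*)$, whose first coordinate vanishes on $\cS_\rho$. For $t_z\in\cT^u$ the argument is simpler. There $\rho_1\rho_2=0$, so $\Psi^u(t_z)$ is proportional to $(1,0,0)^T$ or $(0,1,0)^T$, which is Slater and corresponds to $P_\pm$. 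This matches $X^{\rm GPc}_{v_c,a}(t_z)=(0,0,\rho_1-1)$.

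\emph{VP bound.} By \eqref{eq:VP-fiber}, $|v_\eta|\le|\cN^u|/(2\sqrt{\cD^u})$ on $\{\cD^u>0\}$, and $v_\eta=0$ elsewhere. Inserting \eqref{eq:N-over-D-bound} gives \eqref{eq:bound-vp-int} directly, and the $L^p$ statement follows.

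\emph{GPc bound.} This is the main step. From \eqref{eq:GP-fiber}, on $\{r>0\}$,
\[
|w_{c,a}|\,r=\frac{|\dot\rho_1+2\tau r\cos\alpha_a|}{2r}.
\]
The triangle inequality gives $|\dot\rho_1|\le 2\tau r$, which is the speed limit \eqref{eq:speed-limit} and holds because $\cD^u\ge0$. Together with $|2\tau r\cos\alpha_a|\le 2\tau r$, this yields $|w_{c,a}|\,r\le (2\tau r+2\tau r)/(2r)=2\tau$, which is \eqref{eq:bound-wc-int}. On $\cT^u$, which is null, we have $w_{c,a}=0$.

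The main obstacle is the finiteness step. It hinges entirely on \cref{lem:Du}, which places the VP-singular interacting states inside $\cM^{\rm Slat}$, and on point (iii) of \cref{prop:slater_set}. Once those are invoked, the rest is bookkeeping.
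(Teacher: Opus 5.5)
Your proposal is correct and follows essentially the same route as the paper. You obtain finiteness of $\cS^u$ from \cref{lem:Du} and point (iii) of \cref{prop:slater_set}, get $\cT^u\subseteq\cS^u$ from $\cD^u\ge0$, read the parametrizations off \cref{thm:VP-characterization,thm:GP-characterization} via \cref{thm:NIVIN}, identify the singular states through density and current, and derive the two bounds from \eqref{eq:N-over-D-bound} and the speed limit $|\dot\rho^u_1|\le 2\tau\sqrt{\rho^u_1\rho^u_2}$.
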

\begin{proof}
	The proof of \Cref{thm:int-representability} is given in \Cref{ssec:proof-int-rep}.
\end{proof}

\begin{remark}\label{rem:Cs}
	The fact that $\cS^u$ is finite does not hold for general VP-representable densities. For instance, if $X_0\in\cC_s$ and $v=0$, the first equation of~\eqref{eq:TDVP_coord} gives $x_1=0$ on $[0,T]$, so that the density $\rho$ of the solution to~\eqref{eq:vp_X} satisfies $\cS_\rho=[0,T]$. On the other hand, the sets $\cS^u$ and $\cT^u$ can be non-empty. Indeed, by \cref{prop:control} in \cref{appendix:controllability}, for $T$ large enough and any $\Psi_0^{\rm int}$, there exist $u\in L^1(0,T;\R)$ and $t_*\in(0,T]$ such that $\Psi^u(t_*)\in\cK_s$ (resp. $\Psi^u(t_*)=(1,0,0)^T$), so that $t_*\in\cS^u$ (resp. $t_*\in\cT^u$).
\end{remark}

\medskip
Since $\cS^u$ and $\cT^u$ are finite, we order the connected components of $\Omega_{\rho^u}$ and $\widetilde\Omega_{\rho^u}$ increasingly, so that each singular time in $(0,T)$ separates two consecutive components. At $t_*\in\cS^u\cap(0,T)$, between $I_k$ and $I_{k+1}$, the VP-trajectories reach $\cC_s$, and the sign $\eta_{k+1}$ is free: the trajectory crosses $\cC_s$ if $\eta_{k+1}=-\eta_k$, and bounces back if $\eta_{k+1}=\eta_k$ (see \cref{fig:VP-branching}). At $t_z\in\cT^u\cap(0,T)$, between $J_\ell$ and $J_{\ell+1}$, the GPc-trajectories reach a pole, where the angle $\alpha_a$ is not defined, and the parameter $a_{\ell+1}\in\R/(2\pi\Z)$ is free: the trajectory branches into a circle of continuations (see \cref{fig:alpha-branching}). These branchings are illustrated numerically in \cref{sec:numerics} (see \cref{fig:numerics-bloch,fig:numerics-vp}).

\begin{figure}[t]
	\centering
	\includegraphics[width=0.65\linewidth]{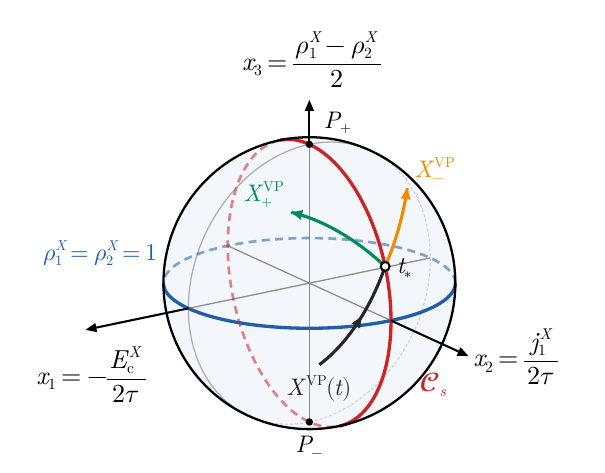}
	\caption{
		Branching of the VP-trajectories reproducing an interacting density. The trajectory $X^{\rm VP}(t)$ reaches the great circle $\cC_s$ (in red, the Bloch representation of $\cK_s$) at a singular time $t_\ast\in\cS^u$. After $t_\ast$, the two continuations $X^{\rm VP}_\pm$, with $x_1>0$ and $x_1<0$, reproduce the same density.
	}
	\label{fig:VP-branching}
\end{figure}

\begin{figure}[t]
	\centering
	\includegraphics[width=0.65\linewidth]{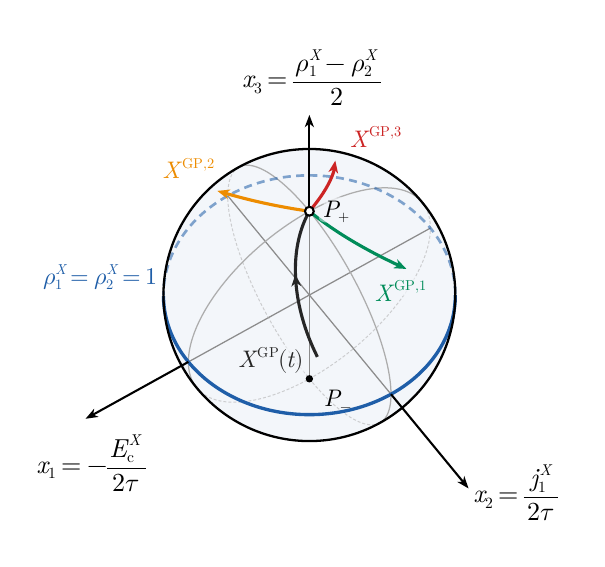}
	\caption{
		Branching of the GP-trajectories ($v_c=0$) reproducing an interacting density at a pole. The trajectory $X^{\rm GP}(t)$ reaches the North Pole $P_+$ at time $t_z$, where the longitude is ill defined. At time $t_z+\epsilon$, $0 < \epsilon \ll 1$, any longitude defines a trajectory reproducing the same density on the time interval $(t_z,t_z')$, where $t_z'$ is the next singular time. For instance, the trajectories $X^{\rm GP,1}$, $X^{\rm GP,2}$ and $X^{\rm GP,3}$ correspond to three values of the free parameter $a_{\ell+1}$ of \cref{thm:GP-characterization}.
	}
	\label{fig:alpha-branching}
\end{figure}

\medskip
The differences between general and interacting densities are summarized in \cref{tab:general-vs-int}.

\begin{table}[t]
	\centering
	\small
	\setlength{\tabcolsep}{4pt}
	\renewcommand{\arraystretch}{1.6}
	\begin{tabular}{l|c|c}
		& general density & interacting density \\
		\hline
		VP singular set $\cS_\rho$ & closed, possibly of positive measure & finite \\
		GP singular set $\cT_\rho$ & closed, of zero measure & finite \\
		Admissible VP-trajectories & $\fS_\rho(X_0)$, possibly uncountable & finite binary tree \\
		Admissible GPc-trajectories & $\cA_\rho(X_0)$, possibly infinite-dimensional & finite-dimensional torus \\
		VP potential & $v\in L^1(0,T;\R)$ & $|v|\le\sqrt2\,|u|+\f12\sqrt{U_1^2+U_2^2}$ \\
		GPc potential & $w_c\sqrt{\rho_1\rho_2}\in L^1(0,T;\R)$ & $|w_c|\sqrt{\rho_1\rho_2}\le2\tau$ \\
	\end{tabular}
	\caption{Comparison of the singular sets, of the admissible non-interacting trajectories, and of the effective potentials, for a general VP- or GPc-representable density and for an interacting density.}
	\label{tab:general-vs-int}
\end{table}

\FloatBarrier
\section{Integrability of the geometric potential}\label{sec:integrability}

In this section, we determine whether the geometric potentials reproducing an interacting density can be chosen integrable. This justifies the functional setting chosen for GP in \cref{subsec:interacting}. Throughout this section, we fix a normalized $\Psi_0^{\rm int}\in\C^3$, $u\in L^1(0,T;\R)$, $X_0\in\bS^2$ such that $\rho^{X_0}=\rho^{\Psi_0^{\rm int}}$, and $v_c\in L^1(0,T;\R)$. We denote by $\Psi^u=(z_1,z_2,z_3)^T$ the interacting dynamics and by $\rho:=\rho^u$ the target density, which belongs to $\cR_{\rm GPc}(X_0,v_c)$ by \cref{thm:NIVIN}, and we use the notation of \cref{thm:GP-characterization}. We also set $Z_i:=\left\{t\in[0,T]\;\middle|\;\rho_i(t)=0\right\}$, $i=1,2$, so that $\cT^u=Z_1\cup Z_2$ is finite by \cref{thm:int-representability}.

For $a\in\cA_\rho(X_0)$, we introduce the longitude $\phi:=\alpha_a+\f\pi2$ of $X^{\rm GPc}_{v_c,a}$ on $\widetilde\Omega_\rho$. By~\eqref{eq:GP-fiber-X}, \eqref{eq:angle_gp} and~\eqref{eq:GP-fiber}, we have on $\widetilde\Omega_\rho$
\begin{equation*}
	X^{\rm GPc}_{v_c,a}=\lp\sqrt{\rho_1\rho_2}\cos\phi,\ \sqrt{\rho_1\rho_2}\sin\phi,\ \rho_1-1\rp^T ,
\end{equation*}
\begin{equation}\label{eq:theta-ode}
	\dot\phi=2v_c+\beta\cos\phi, \qquad \beta:=\f{2\tau\lp\rho_1-1\rp}{\sqrt{\rho_1\rho_2}} ,
\end{equation}
\begin{equation}\label{eq:w-theta}
	w_{c,a}=-\f{\dot\rho_1+2\tau\sqrt{\rho_1\rho_2}\,\sin\phi}{2\rho_1\rho_2} .
\end{equation}
By~\eqref{eq:bound-wc-int}, $w_{c,a}$ is bounded on every compact subset of $\widetilde\Omega_\rho$. The integrability of $w_{c,a}$ is therefore only a question at the singular times $t_z\in\cT^u$.

This requires understanding the behavior of $\rho$, and of the coefficient $\beta$ of~\eqref{eq:theta-ode}, which is singular, near the singular times. In the rest of this section, we fix $t_z\in Z_1$, the case $t_z\in Z_2$ being treated in \cref{rem:Z2} below. Since $\cT^u$ is finite and $\rho_1$ is continuous, we can fix $\delta\in(0,1]$ such that
\begin{equation*}
	\cT^u\cap[t_z-\delta,t_z+\delta]=\{t_z\} \qquad\text{and}\qquad \rho_1\le\tfrac12 \ \text{ on } [t_z-\delta,t_z+\delta]\cap[0,T] .
\end{equation*}
We write $s:=t-t_z$, $I^-:=[t_z-\delta,t_z)\cap[0,T]$, $I^+:=(t_z,t_z+\delta]\cap[0,T]$, and consider the functions $\omega, \hat\omega : [0,\delta]\to \R_+$ defined by
\begin{equation}\label{eq:omega}
	\omega(\sigma):=\int_{[t_z-\sigma,t_z+\sigma]\cap[0,T]}\lp|u|+|v_c|\rp, \qquad \hat\omega(\sigma):=\sigma+\omega(\sigma), \qquad \sigma\in(0,\delta],
\end{equation}
which are non-decreasing and tend to $0$ as $\sigma\to0^+$. The key lemma is the following.

\begin{lemma}\label{lem:a-expansion} Let $t_z \in Z_1$.
	The coefficient $\beta$ is negative on $I^-\cup I^+$, and, as $s\to0$,
	\begin{equation}\label{eq:rho1-refined}
		\dot\rho_1(t_z+s)=4\tau^2s\lp1+O\lp\hat\omega(|s|)\rp\rp ,
		\qquad
		\sqrt{\rho_1\rho_2}\,(t_z+s)=2\tau|s|\lp1+O\lp\hat\omega(|s|)\rp\rp,
	\end{equation}
	\begin{equation}\label{eq:b-refined}
		|\beta(t_z+s)|=\f{1}{|s|}\lp1+O\lp\hat\omega(|s|)\rp\rp .
	\end{equation}
\end{lemma}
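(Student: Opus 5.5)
At $t_z\in Z_1$ the interacting state satisfies $\rho_1(t_z)=2|z_1|^2+|z_3|^2=0$, so $z_1(t_z)=z_3(t_z)=0$ and $\Psi^u(t_z)=e^{i\vartheta}(0,1,0)^T$, i.e. the South Pole state. The plan is to Taylor-expand $\rho_1$ around $t_z$ using the Schr\"odinger equation~\eqref{eq:interacting_dynamics}, with remainders controlled by $\hat\omega$.

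\textbf{Step 1: expansion of $z_1,z_3$.} Write $\Psi^u(t_z+s)=\Psi^u(t_z)+\int_0^s\dot\Psi^u$. Since $\|H_u(t)\|\le C(1+|u(t)|)$, we get $\|\Psi^u(t_z+s)-\Psi^u(t_z)\|\le C\hat\omega(|s|)$. Inserting this back into $i\dot z_3=-\sqrt2\tau(z_1+z_2)$ and $i\dot z_1=(U_1+2u)z_1-\sqrt2\tau z_3$ gives, after fixing the phase $\vartheta=0$:
\begin{itemize}
\item $z_3(t_z+s)=i\sqrt2\tau s\,(1+O(\hat\omega(|s|)))$, because the $u$-term does not enter the $z_3$ equation, and $z_2=1+O(\hat\omega)$ while $z_1=O(\hat\omega)$;
\item $z_1(t_z+s)=O(|s|\hat\omega(|s|))$, because $|\dot z_1|\le C(1+|u|)|z_1|+C|z_3|$; a Gr\"onwall argument then gives $|z_1|\le C\int_{[t_z,t_z+s]}|z_3|\le Cs^2$.
\end{itemize}
Both bounds are fine for the purpose here.

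\textbf{Step 2: density and current.} From Step 1,
\[
\rho_1=2|z_1|^2+|z_3|^2=2\tau^2s^2\,(1+O(\hat\omega)).
\]
For the current, \cref{prop:reg_density} gives $\dot\rho_1=-j_1=-2\sqrt2\tau\,\Im\!\big(z_3(\overline{z_1}-\overline{z_2})\big)$. The dominant contribution is
\[
2\sqrt2\tau\,\Im(z_3\overline{z_2})=2\sqrt2\tau\cdot\sqrt2\tau s\,(1+O(\hat\omega))=4\tau^2 s\,(1+O(\hat\omega)),
\]
while the $z_1$ contribution is $O(s^3)$, which is absorbed. This yields the first half of~\eqref{eq:rho1-refined}.

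Since $\rho_2=2-\rho_1=2(1+O(s^2))$, we get $\sqrt{\rho_1\rho_2}=\sqrt{4\tau^2s^2(1+O(\hat\omega))}=2\tau|s|(1+O(\hat\omega))$. This is the second half of~\eqref{eq:rho1-refined}. Here $O(s^2)\subset O(\hat\omega)$ because $\hat\omega(\sigma)\ge\sigma$.

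\textbf{Step 3: the coefficient $\beta$.} On $I^\pm$ we have $\rho_1\le\tfrac12$, so $\rho_1-1\le-\tfrac12<0$. Together with $\sqrt{\rho_1\rho_2}>0$ on $I^\pm$ (where $t\notin\cT^u$), this gives $\beta<0$. Moreover $2\tau(1-\rho_1)=2\tau(1+O(s^2))$, so
\[
|\beta|=\frac{2\tau(1+O(s^2))}{2\tau|s|\,(1+O(\hat\omega))}=\frac{1}{|s|}\,(1+O(\hat\omega(|s|))),
\]
which is~\eqref{eq:b-refined}.

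\textbf{Main obstacle.} The hard part is making the relative error in $z_3$ precise, namely $z_3/(i\sqrt2\tau s)-1=O(\hat\omega)$, when $u$ is merely $L^1$. The key points are:
\begin{itemize}
\item $\dot z_3+i\sqrt2\tau=-i\sqrt2\tau(z_1+z_2-1)$, which is $O(\hat\omega(|s|))$ pointwise by Step 1;
\item integrating from $t_z$ gives an error $O(|s|\hat\omega(|s|))$, using that $\hat\omega$ is monotone;
\item the error in $z_2-1$ includes $\int(U_2-2u)$, which is where $\omega$ enters.
\end{itemize}
One must also handle truncation at the boundaries $0,T$, since only one of $I^\pm$ may be nonempty; this is harmless.
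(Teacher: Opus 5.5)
Your proposal is correct and follows essentially the same route as the paper. Both arguments first bound $z_1,z_2,z_3$ near the South Pole state; the paper uses Duhamel's formula for $z_1$ where you use Gr\"onwall, and both give $|z_1|=O(s^2)$. Both then derive $z_3(t_z+s)=i\sqrt2\,\tau z_2(t_z)s+O(|s|\hat\omega(|s|))$ and read off $\rho_1$, $\dot\rho_1=-j_1$, $\sqrt{\rho_1\rho_2}$ and $\beta$ from it.

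The only slip is the sign in your ``main obstacle'' paragraph. From $\dot z_3=i\sqrt2\,\tau(z_1+z_2)$ one gets $\dot z_3-i\sqrt2\,\tau=i\sqrt2\,\tau(z_1+z_2-1)$, not $\dot z_3+i\sqrt2\,\tau=-i\sqrt2\,\tau(z_1+z_2-1)$. Your Steps 1--2 use the correct sign, and that sign is what makes $\dot\rho_1(t_z+s)$ positive for $s>0$.
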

\begin{proof}
	The proof of \Cref{lem:a-expansion} is given in \Cref{ssec:proof-a-expansion}.
\end{proof}

Let us first see what is required on $\phi$ for $w_{c,a}$ to be integrable near $t_z$. By~\eqref{eq:w-theta} and \cref{lem:a-expansion}, we have, as $s\to0$,
\begin{equation}\label{eq:bracket}
	w_{c,a}(t_z+s)=-\f{\sgn(s)+\sin\phi(t_z+s)}{2|s|}+O\lp\f{\hat\omega(|s|)}{|s|}\rp .
\end{equation}
Assume first that the remainder in~\eqref{eq:bracket} is integrable near $s=0$. As $\hat\omega(s)/s=1+\omega(s)/s$, this holds under the Dini-type condition
\begin{equation}\label{eq:dini}
	\int_0^\delta\f{\omega(s)}{s}\,\rd s=\int_{[t_z-\delta,t_z+\delta]\cap[0,T]}\lp|u(t)|+|v_c(t)|\rp\log\f{\delta}{|t-t_z|}\,\rd t<\infty ,
\end{equation}
where the equality follows from Fubini's theorem. This condition is satisfied as soon as $u,v_c\in L^r(I^-\cup I^+;\R)$ for some $r>1$, and in particular for bounded, hence for smooth, potentials $u$ and $v_c$. Then, since $1/|s|$ is not integrable at $0$, $w_{c,a}$ can only be integrable on $I^+$ (resp. $I^-$) if the numerator of the leading term tends to $0$, that is if $\sin\phi(t)\to-1$ as $t\to t_z^+$ (resp. $\sin\phi(t)\to1$ as $t\to t_z^-$), that is if $\phi(t)\to-\f\pi2$ as $t\to t_z^+$ (resp. $\phi(t)\to\f\pi2$ as $t\to t_z^-$), fast enough (see \cref{thm:integrability} below). We thus have to determine the behavior of $\phi$ near $t_z$.

To this end, we use the fact that~\eqref{eq:theta-ode} is the projectivization of a linear system. Setting $\zeta:=\tan\lp\f\phi2+\f\pi4\rp$, which defines a homeomorphism from $\R/(2\pi\Z)$ onto $\R\bP^1=\R\cup\{\infty\}$, \eqref{eq:theta-ode} becomes the Riccati equation
\begin{equation}\label{eq:riccati}
	\dot\zeta=\beta\,\zeta+v_c\lp1+\zeta^2\rp ,
\end{equation}
the meridians $\phi=\f\pi2$ and $\phi=-\f\pi2$ corresponding to $\zeta=\infty$ and $\zeta=0$. As for any Riccati equation, $\zeta=p/q$ solves~\eqref{eq:riccati} whenever $\chi=(p,q)^T\in\R^2\setminus\{0\}$ solves
\begin{equation}\label{eq:linear}
	\dot\chi=M\chi, \qquad M:=\begin{pmatrix} \beta/2 & v_c \\ -v_c & -\beta/2\end{pmatrix} .
\end{equation}
On $I^-$ and $I^+$, where $\beta$ is locally integrable, there is a bijection between the solutions of~\eqref{eq:theta-ode} and the solutions of~\eqref{eq:linear} (up to a multiplicative constant). We set
\begin{equation}\label{eq:Lambda}
	\Lambda^+(t):=-\int_t^{t_z+\delta}|\beta| \quad\text{for } t\in I^+, \qquad \Lambda^-(t):=\int_{t_z-\delta}^t|\beta| \quad\text{for } t\in I^-,
\end{equation}
so that $\dot\Lambda^\pm=|\beta|$, and $\Lambda^+(t)\to-\infty$ as $t\to t_z^+$ and $\Lambda^-(t)\to+\infty$ as $t\to t_z^-$ by~\eqref{eq:b-refined}.

\medskip

We start with the simple case $v_c=0$, that is GP. The system~\eqref{eq:linear} is then diagonal, and its solutions on $I^\pm$ are
\begin{equation*}
	\chi=\lp c_1e^{-\Lambda^\pm/2},\ c_2e^{\Lambda^\pm/2}\rp^T, \qquad (c_1,c_2)\in\R^2\setminus\{0\}, \qquad\text{so that}\qquad \zeta=\f{c_1}{c_2}\,e^{-\Lambda^\pm} .
\end{equation*}
These exponentials are explicit up to $\omega$. Indeed, by definition of $\Lambda^\pm$,
\begin{equation*}
	e^{\Lambda^\pm(t_z+s)/2}=\Big(\f{|s|}{\delta}\Big)^{\pm1/2}\exp\Big(\mp\f12\int_{|s|}^\delta\Big(|\beta(t_z\pm\sigma)|-\f1\sigma\Big)\rd\sigma\Big),
\end{equation*}
where, by~\eqref{eq:b-refined}, the integrand is $O(\hat\omega(\sigma)/\sigma)$, hence integrable on $(0,\delta)$ under~\eqref{eq:dini}. Under~\eqref{eq:dini}, we thus have $e^{\Lambda^\pm(t_z+s)/2}=m_\pm(|s|/\delta)^{\pm1/2}(1+o(1))$ as $s\to0^\pm$, with $m_\pm:=\exp(\mp\frac12\int_0^\delta(|\beta(t_z\pm\sigma)|-\frac1\sigma)\,\rd\sigma)\in(0,+\infty)$. In other words, $\chi$ behaves like $(c_1s^{-1/2},c_2s^{1/2})^T$ on $I^+$ and like $(c_1|s|^{1/2},c_2|s|^{-1/2})^T$ on $I^-$. Thus, on $I^+$, $\zeta\to\infty$, that is $\phi\to\f\pi2$, except for $c_1=0$, that is $\phi=-\f\pi2$, and, on $I^-$, $\zeta\to0$, that is $\phi\to-\f\pi2$, except for $c_2=0$, that is $\phi=\f\pi2$. Hence, for GP, only the solutions $\phi=\f\pi2$ on $I^-$ and $\phi=-\f\pi2$ on $I^+$ satisfy the above condition. Every other solution reaches the pole along $\phi=-\f\pi2$, or leaves it along $\phi=\f\pi2$, and the corresponding potential $w_{c,a}$ is of order $1/|s|$ on the corresponding side, hence not integrable.

\medskip

For general $v_c$, the off-diagonal terms of $M$ are integrable, so that this picture persists. This is summarized in the following proposition.

\begin{proposition}[Crossing singularities at the South Pole] \label{prop:dichotomy} Let $t_z \in Z_1$.
	We have the following.
	\begin{enumerate}
		\item \emph{Outgoing side.} Assume $t_z<T$. There is exactly one solution $\phi^\star$ of~\eqref{eq:theta-ode} on $I^+$ such that $\phi^\star(t)\to-\f\pi2$ as $t\to t_z^+$, and it satisfies $|\phi^\star(t_z+s)+\f\pi2|\le4\,\omega(s)$ for $s$ small enough. Every other solution satisfies $\phi(t)\to\f\pi2$ as $t\to t_z^+$.
		\item \emph{Incoming side.} Assume $t_z>0$. There is exactly one solution $\phi_\star$ of~\eqref{eq:theta-ode} on $I^-$ such that $\phi_\star(t)\to\f\pi2$ as $t\to t_z^-$, and it satisfies $|\phi_\star(t_z+s)-\f\pi2|\le4\,\omega(|s|)$ for $|s|$ small enough. Every other solution satisfies $\phi(t)\to-\f\pi2$ as $t\to t_z^-$.
	\end{enumerate}
	The solutions $\phi^\star$ and $\phi_\star$ are called the exceptional outgoing and incoming branches at $t_z$.
\end{proposition}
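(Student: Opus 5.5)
The plan is to treat the outgoing side (i) directly and deduce the incoming side (ii) by time reversal. For the reduction, set $\tilde\phi(t_z+s):=-\phi(t_z-s)$. A direct computation gives
\begin{equation*}
\dot{\tilde\phi}(t_z+s)=2v_c(t_z-s)+\beta(t_z-s)\cos\tilde\phi(t_z+s).
\end{equation*}
This is again an equation of the form \eqref{eq:theta-ode}, with $v_c$ replaced by its reflection and $\beta$ by the reflected coefficient. By \cref{lem:a-expansion}, the reflected coefficient is still negative with $|\beta|=|s|^{-1}(1+O(\hat\omega))$, and the function $\omega$ of \eqref{eq:omega} is unchanged. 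The map $\phi\mapsto-\phi$ exchanges $\f\pi2$ and $-\f\pi2$, so (ii) follows from (i).

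For (i), set $\psi:=\phi+\f\pi2$, so that $\cos\phi=\sin\psi$ and $\dot\psi=2v_c-|\beta|\sin\psi$ on $I^+$. The point $\psi=0$ attracts forward in time (linearization $\dot\psi=-\psi/s$, with solutions $c/s$), hence it repels backward toward $t_z$. I will therefore build $\phi^\star$ as the unique solution \emph{emanating} from $\psi=0$.

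\textbf{Existence.} Write $\dot\psi=-|\beta|\psi+2v_c+|\beta|(\psi-\sin\psi)$ and look for a fixed point of
\begin{equation*}
\cF(\psi)(t_z+s):=\int_0^s e^{-(\Lambda^+(t_z+s)-\Lambda^+(t_z+\sigma))}\Big(2v_c+|\beta|(\psi-\sin\psi)\Big)(t_z+\sigma)\,\rd\sigma .
\end{equation*}
I will work in the ball $\{|\psi(t_z+s)|\le4\omega(s)\}$ on $(0,s_0]$. The kernel is bounded by $1$, and by \eqref{eq:b-refined} it is comparable to $\sigma/s$. The source term then contributes at most $2\omega(s)$. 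The nonlinear term contributes at most about
\begin{equation*}
\int_0^s \f{\sigma}{s}\,\f{1}{\sigma}\,\f{(4\omega(\sigma))^3}{6}\,\rd\sigma\le C\,\omega(s)^3 .
\end{equation*}
The same estimate, with $|\psi-\sin\psi-(\psi'-\sin\psi')|\le C\omega^2|\psi-\psi'|$, gives a contraction for $s_0$ small. A fixed point solves the ODE, tends to $0$, and satisfies $|\phi^\star+\f\pi2|\le4\omega(s)$.

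\textbf{Uniqueness.} Let $\psi_1,\psi_2$ both tend to $0$ and set $d=\psi_1-\psi_2$. Then $\dot d=-|\beta|\,c(t)\,d$, with $c(t)\to1$ the mean value of $\cos$ between the two solutions. Hence
\begin{equation*}
|d(t_z+s)|=|d(t_z+s_0)|\exp\Big(\int_s^{s_0}|\beta|c\Big)\gtrsim |d(t_z+s_0)|\, s^{-1/2}.
\end{equation*}
This blows up as $s\to0^+$ unless $d\equiv0$, contradicting $d\to0$.

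\textbf{Non-exceptional solutions.} For this dichotomy I will use the linear system \eqref{eq:linear}. Up to scaling, its solutions form a projective line, so all solutions other than $\phi^\star$ behave in the same way. Alternatively, I can argue directly on the circle. On any arc $\{|\cos\phi|\ge\epsilon\}$, the term $-|\beta|\cos\phi\sim-\cos\phi/s$ dominates $2v_c$, whose integral is $O(\omega)$. So, going backward toward $t_z$, $\phi$ is pushed away from $-\f\pi2$ and toward $\f\pi2$. Near $\f\pi2$, the linearization $\dot\chi=+\chi/s$ (in $\chi=\phi-\f\pi2$) contracts backward in time. A solution that does not converge to $-\f\pi2$ must leave every small neighbourhood of $-\f\pi2$ infinitely often, or stay outside one. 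A barrier argument then traps it in shrinking neighbourhoods of $\f\pi2$, giving $\phi\to\f\pi2$.

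The main obstacle I expect is making this last trapping argument rigorous when $v_c$ is only $L^1$ (no Dini condition), since $\omega(s)/s$ need not be integrable. There I will rely on the fact that $\int_0^s|v_c|=O(\omega(s))\to0$ while $\int_0^s|\beta|=+\infty$, which is exactly what the barrier estimates require.
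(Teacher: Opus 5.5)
Your proposal is correct, and it takes a genuinely different route from the paper.

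\textbf{How the paper argues.} The paper never works with the nonlinear equation~\eqref{eq:theta-ode} directly. It passes to the Riccati variable $\zeta=\tan(\f\phi2+\f\pi4)$ and the linear system~\eqref{eq:linear}. It builds $\chi^\star=e^{\Lambda^+/2}f^\star$ by a \emph{linear} contraction from the datum $f(t_z)=(0,1)^T$, which only needs the kernel to be bounded by $1$. Every other solution is written $\chi=e^{-\Lambda^+/2}g$, and one shows $g_2\to0$ and $g_1\to\gamma$. Conservation of the Wronskian ($M$ is traceless) gives $\gamma=W(\chi,\chi^\star)$. That single identity yields both uniqueness of the exceptional branch and the dichotomy. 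Your remark that the solutions of~\eqref{eq:linear} form a projective line is not by itself enough; the growth comparison plus the Wronskian is what makes it work.

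\textbf{How you argue.} You split the proof into three steps on the circle: a nonlinear fixed point in the weighted ball $|\psi|\le4\omega$, uniqueness via the linearized difference equation, and a barrier argument. Your time-reversal reduction of (ii) to (i) is a clean justification of the paper's ``similar''. Your route does not use the Riccati/projective structure, so it would survive nonlinear perturbations of~\eqref{eq:theta-ode}. The paper's route is shorter and gets both conclusions from one conserved quantity.

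\textbf{Two points to tighten.}

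First, the kernel $e^{-\int_\sigma^s|\beta|}$ is not comparable to $\sigma/s$ in general without~\eqref{eq:dini}. One only gets this up to a factor $(s/\sigma)^{O(\hat\omega(s))}$. You do not need it, though. Since $\partial_\sigma e^{-(\Lambda^+(t_z+s)-\Lambda^+(t_z+\sigma))}=|\beta(t_z+\sigma)|\,e^{-(\Lambda^+(t_z+s)-\Lambda^+(t_z+\sigma))}$ and $\Lambda^+\to-\infty$ at $t_z$, the kernel times $|\beta|$ integrates to exactly $1$ over $(0,s)$. Hence the cubic term is bounded by $(4\omega(s))^3/6$ and the Lipschitz constant by $8\,\omega(s_0)^2$, with no quantitative information on $\beta$ beyond $\beta<0$ and $\int_{t_z}|\beta|=+\infty$.

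Second, the trapping step becomes rigorous for $v_c\in L^1$ via two one-sided estimates. Writing $\psi=\phi+\f\pi2$, the backward derivative of $|\psi|$ is at least $|\beta||\sin\psi|-2|v_c|$. So, going backward from $t_z+s$:
\begin{itemize}
\item the circle distance of $\phi$ to $-\f\pi2$ can decrease by at most $2\int_{t_z}^{t_z+s}|v_c|\le2\,\omega(s)$;
\item by the same computation, its distance to $\f\pi2$ can increase by at most $2\,\omega(s)$.
\end{itemize}
Now suppose $\phi(t_z+s_n)$ stays at distance at least $2\epsilon$ from $-\f\pi2$ along some $s_n\to0$. The first estimate keeps it at distance at least $\f32\epsilon$ for all smaller times. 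On the region away from both meridians, the drift $|\beta||\cos\phi|$ together with $\int_{t_z}|\beta|=+\infty$ pushes it into any prescribed neighbourhood of $\f\pi2$. The second estimate then keeps it there up to $2\,\omega$, whence $\phi\to\f\pi2$.
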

\begin{proof}
	The proof of \Cref{prop:dichotomy} is given in \Cref{ssec:proof-dichotomy}.
\end{proof}

\begin{remark}[Crossing singularities at the North Pole]\label{rem:Z2}
	If $t_z\in Z_2$, exchanging the two sites, that is $z_1$ and $z_2$, $U_1$ and $U_2$, $u$ and $-u$, and $v_c$ and $-v_c$, maps $\rho$ to $(\rho_2,\rho_1)$ and the Bloch vector $(x_1,x_2,x_3)$ to $(x_1,-x_2,-x_3)$, that is $\phi$ to $-\phi$ and $w_{c,a}$ to $-w_{c,a}$. Hence all the results of this section hold with the meridians $\phi=\f\pi2$ and $\phi=-\f\pi2$ exchanged, that is $\phi^\star(t)\to\f\pi2$ as $t\to t_z^+$ and $\phi_\star(t)\to-\f\pi2$ as $t\to t_z^-$.
\end{remark}

The alternative is illustrated in \cref{fig:branches-south} at the South Pole $P_-$, that is for $t_z\in Z_1$, and in \cref{fig:branches-north} at the North Pole $P_+$, that is for $t_z\in Z_2$.

\begin{figure}[t]
	\centering
	\includegraphics[width=\linewidth]{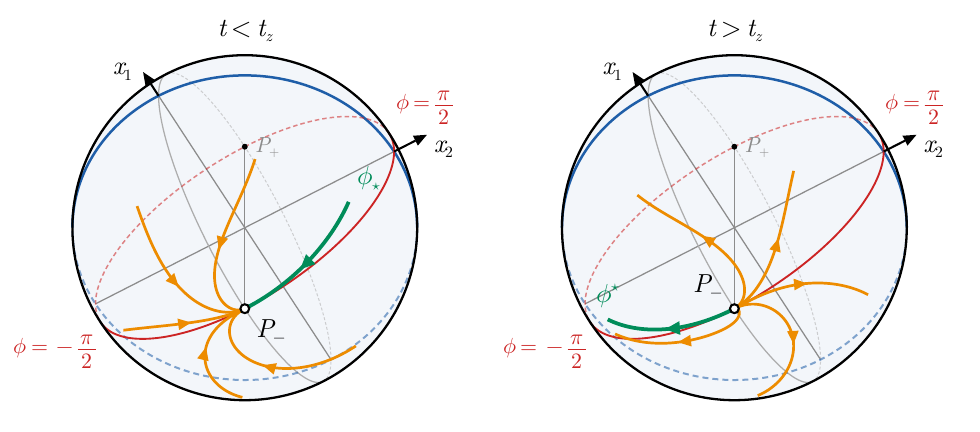}
	\caption{
		The alternative of \cref{prop:dichotomy} at the South Pole $P_-$, that is for $t_z\in Z_1$. Left: before $t_z$, every solution of~\eqref{eq:theta-ode} (orange) reaches $P_-$ along the meridian $\phi=-\f\pi2$, except the exceptional incoming branch $\phi_\star$ (green), which reaches it along $\phi=\f\pi2$. Right: after $t_z$, every solution leaves $P_-$ along the meridian $\phi=\f\pi2$, except the exceptional outgoing branch $\phi^\star$ (green), which leaves it along $\phi=-\f\pi2$. Under~\eqref{eq:dini}, the geometric potential is integrable near $t_z$ only along the exceptional branches.
	}
	\label{fig:branches-south}
\end{figure}

\begin{figure}[t]
	\centering
	\includegraphics[width=\linewidth]{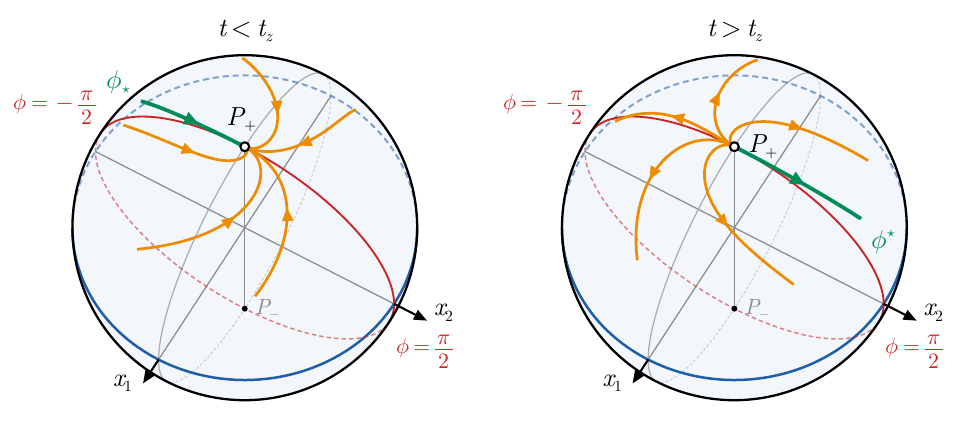}
	\caption{
		The alternative of \cref{rem:Z2} at the North Pole $P_+$, that is for $t_z\in Z_2$, for which the meridians $\phi=\f\pi2$ and $\phi=-\f\pi2$ are exchanged (see \cref{rem:Z2}). Left: before $t_z$, every solution of~\eqref{eq:theta-ode} (orange) reaches $P_+$ along the meridian $\phi=\f\pi2$, except the exceptional incoming branch $\phi_\star$ (green), which reaches it along $\phi=-\f\pi2$. Right: after $t_z$, every solution leaves $P_+$ along the meridian $\phi=-\f\pi2$, except the exceptional outgoing branch $\phi^\star$ (green), which leaves it along $\phi=\f\pi2$.
	}
	\label{fig:branches-north}
\end{figure}

\begin{theorem}[Generic non-integrability of $w$ at a singular time]\label{thm:integrability}
	Let $a\in\cA_\rho(X_0)$. Then, as $s\to0$,
	\begin{equation}\label{eq:w-expansion}
		w_{c,a}(t_z+s)=-\f{\sgn(s)+\sin\phi(t_z+s)}{2|s|}+O\lp\f{\hat\omega(|s|)}{|s|}\rp .
	\end{equation}
	Consequently, on $I^+$ (resp. $I^-$):
	\begin{enumerate}
		\item if $\phi$ is not the exceptional branch, then, as $s\to0^+$ (resp. $s\to0^-$), $\sqrt{\rho_1\rho_2}\,|w_{c,a}|(t_z+s)\to2\tau$ and $|w_{c,a}(t_z+s)|=|s|^{-1}\lp1+o(1)\rp$, so that $w_{c,a}$ is not integrable on $I^+$ (resp. $I^-$),
		\item if $\phi$ is the exceptional branch, then $w_{c,a}(t_z+s)=O\lp\hat\omega(|s|)/|s|\rp$ as $s\to0^+$ (resp. $s\to0^-$), so that $w_{c,a}$ is integrable on $I^+$ (resp. $I^-$) under condition~\eqref{eq:dini}.
	\end{enumerate}
\end{theorem}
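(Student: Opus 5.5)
The plan is to start from formula~\eqref{eq:w-theta} and substitute the asymptotics of \cref{lem:a-expansion}. Write $R:=\sqrt{\rho_1\rho_2}$. Then
\begin{equation*}
	w_{c,a}=-\f{\dot\rho_1}{2R^2}-\f{\tau\sin\phi}{R}.
\end{equation*}
By~\eqref{eq:rho1-refined}, $R^2=4\tau^2s^2(1+O(\hat\omega))$ and $\dot\rho_1=4\tau^2 s(1+O(\hat\omega))$. Hence
\begin{equation*}
	\f{\dot\rho_1}{2R^2}=\f{\sgn(s)}{2|s|}\bigl(1+O(\hat\omega(|s|))\bigr), \qquad \f{\tau}{R}=\f{1}{2|s|}\bigl(1+O(\hat\omega(|s|))\bigr).
\end{equation*}
Since $|\sin\phi|\le1$ and $|\sgn(s)|\le 1$, the error terms are uniformly $O(\hat\omega(|s|)/|s|)$. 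This gives~\eqref{eq:w-expansion} directly, with no dependence on which solution $\phi$ is considered.

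For point (i), I use \cref{prop:dichotomy}. On $I^+$, a non-exceptional branch satisfies $\phi\to\f\pi2$, so $\sin\phi\to1$. The numerator $\sgn(s)+\sin\phi$ then tends to $2$, and $w_{c,a}(t_z+s)=-|s|^{-1}(1+o(1))$ because $\hat\omega(|s|)\to0$. On $I^-$, a non-exceptional branch satisfies $\phi\to-\f\pi2$, so the numerator tends to $-2$ and $w_{c,a}=|s|^{-1}(1+o(1))$. In both cases $R|w_{c,a}|=2\tau|s|(1+O(\hat\omega))\cdot|s|^{-1}(1+o(1))\to2\tau$. Non-integrability then follows from comparison with $1/|s|$ on a small one-sided neighborhood. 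This step uses no Dini condition.

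For point (ii), on $I^+$ the exceptional branch satisfies $|\phi+\f\pi2|\le4\omega(s)$ by \cref{prop:dichotomy}. Therefore $|1+\sin\phi|=1-\cos(\phi+\f\pi2)\le\f12(\phi+\f\pi2)^2\le 8\omega(s)^2\le 8\omega(s)$ for small $s$, using $\omega\le1$ near $0$. The leading term is thus at most $4\omega(s)/s\le4\hat\omega(s)/s$. Together with the remainder this gives $w_{c,a}=O(\hat\omega(|s|)/|s|)$. The incoming side $I^-$ is symmetric, with $\sgn(s)=-1$ and $\sin\phi_\star\to1$. Integrability under~\eqref{eq:dini} follows from $\hat\omega(s)/s=1+\omega(s)/s$ and the finiteness of $\int_0^\delta\omega(s)/s\,\rd s$.

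The main obstacle is only bookkeeping. The whole statement is a direct consequence of the earlier lemma and proposition once the expansion is written. The one point needing care is that the $O(\cdot)$ constants stay uniform in $\phi$, which holds because the coefficients $\sin\phi$ and $\sgn(s)$ are bounded.
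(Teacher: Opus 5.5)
Your proof is correct and follows essentially the same route as the paper. You obtain the expansion by substituting \cref{lem:a-expansion} into \eqref{eq:w-theta}, get point (i) from the limits of $\sin\phi$ given by \cref{prop:dichotomy}, and get point (ii) from the bound $|\sgn(s)+\sin\phi|\le\f12\lp\phi+\sgn(s)\f\pi2\rp^2\le8\,\omega(|s|)^2$ combined with \eqref{eq:dini}; your observations that (i) requires no Dini condition and that the $O(\cdot)$ constants are uniform in $\phi$ are also accurate.
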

\begin{proof}
	The proof of \Cref{thm:integrability} is given in \Cref{ssec:proof-integrability}.
\end{proof}

\begin{remark}
	By \cref{thm:integrability}~(i), the bound~\eqref{eq:bound-wc-int} is sharp.
\end{remark}

We can now answer the question of the integrability of the geometric potential. At each singular time, the outgoing branch can be chosen through the free parameter $a_\ell$ of \cref{thm:GP-characterization}, but the incoming branch is determined by the previous choices and by $X_0$.

\begin{corollary}[Integrable geometric potentials]\label{cor:integrability}
	Assume that~\eqref{eq:dini} holds at every $t_z\in\cT^u$, and let $a\in\cA_\rho(X_0)$. Then $w_{c,a}\in L^1(0,T;\R)$ if and only if, at every $t_z\in\cT^u$, the longitude of $X^{\rm GPc}_{v_c,a}$ is the exceptional incoming branch on $I^-$ if $t_z>0$, and the exceptional outgoing branch on $I^+$ if $t_z<T$. If moreover $u,v_c\in L^\infty(0,T;\R)$, then, under the same condition, $w_{c,a}\in L^\infty(0,T;\R)$. Finally, without assuming~\eqref{eq:dini}, if $0\notin\cT^u$ and $\cT^u\cap(0,T)\neq\emptyset$, then there is at most one $X_0'\in\bS^2$ with $\rho^{X_0'}=\rho(0)$ such that $w_{c,a}\in L^1(0,T;\R)$ for some $a\in\cA_\rho(X_0')$.
\end{corollary}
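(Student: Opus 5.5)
The plan is to reduce everything to a local analysis near each point of the finite set $\cT^u$ (finite by \cref{thm:int-representability}). On every compact subset of $\widetilde\Omega_\rho$, the bound \eqref{eq:bound-wc-int} makes $w_{c,a}$ bounded, since there $\sqrt{\rho_1\rho_2}$ is bounded below. Hence $w_{c,a}\in L^1(0,T;\R)$ if and only if it is integrable on each one-sided neighbourhood $I^\pm$ of each $t_z\in\cT^u$. For $t_z\in Z_1$, \cref{thm:integrability} applies directly under \eqref{eq:dini}: on $I^+$ (when $t_z<T$), $w_{c,a}$ is integrable if and only if the longitude is the exceptional outgoing branch $\phi^\star$. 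On $I^-$ (when $t_z>0$), it is integrable if and only if the longitude is $\phi_\star$. For $t_z\in Z_2$ we invoke \cref{rem:Z2}, i.e.\ the site-exchange symmetry, which transports the statement with the meridians swapped. Since $\cT^u$ is finite, the finitely many local conditions combine into the claimed equivalence.

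For the $L^\infty$ statement, we assume $u,v_c\in L^\infty$. Then $\omega(\sigma)\le C\sigma$, so $\hat\omega(|s|)/|s|$ is bounded. By \cref{thm:integrability}(ii), $w_{c,a}=O(1)$ near each $t_z$ along the exceptional branches. Together with boundedness on compacts away from $\cT^u$, this gives $w_{c,a}\in L^\infty(0,T;\R)$. Note that \eqref{eq:dini} holds automatically in this case.

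The last assertion is the main obstacle, because \eqref{eq:dini} is dropped and we cannot appeal to \cref{thm:integrability} as stated. Let $t_1$ be the smallest element of $\cT^u\cap(0,T)$; it exists because $\cT^u$ is finite. Since $0\notin\cT^u$, the interval $J_1=[0,t_1)$ is a connected component of $\widetilde\Omega_\rho$. On $J_1$, $\alpha_a$ is uniquely determined by $a_0$, hence by $X_0'$, through \eqref{eq:angle_gp}. Moreover, the map $X_0'\mapsto a_0$ is injective among the $X_0'$ with fixed $X_0'\cdot e_3$, by \eqref{eq:GP-alpha-initial}. So it suffices to show that at most one solution of \eqref{eq:theta-ode} on $I^-$ near $t_1$ yields a $w_{c,a}$ integrable on $I^-$.

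For this we use \eqref{eq:w-theta} together with only the unconditional parts of \cref{lem:a-expansion}, namely \eqref{eq:rho1-refined}. Suppose two distinct solutions $\phi,\tilde\phi$ both give integrable potentials. Subtracting the two expressions gives
\begin{equation*}
	w-\tilde w=-\tau(\sin\phi-\sin\tilde\phi)/\sqrt{\rho_1\rho_2}.
\end{equation*}
This difference is integrable, and $\sqrt{\rho_1\rho_2}\sim 2\tau|s|$, so $(\sin\phi-\sin\tilde\phi)/|s|$ is integrable on $I^-$. Next we pass to the linear system \eqref{eq:linear}, whose off-diagonal entries are integrable. For two independent solutions, the Wronskian-type quantity (the determinant of the two $\chi$'s) is conserved, because $\operatorname{tr}M=0$. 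Since $\Lambda^-\to+\infty$, independent solutions separate projectively at rate $e^{\Lambda^-}$. I would try to show that this forces $\phi$ and $\tilde\phi$ to approach distinct limits, or at least to stay apart on a set of $s$ too large for $(\sin\phi-\sin\tilde\phi)/|s|$ to be integrable. The delicate part is controlling $\sin\phi-\sin\tilde\phi$ rather than just $\phi-\tilde\phi$ near $\pm\pi/2$. If this fails, the fallback is to reuse the argument of \cref{prop:dichotomy}, which I expect to rely only on the integrability of $v_c$ and on $|\beta|\sim 1/|s|$, in order to identify the unique candidate branch $\phi_\star$.
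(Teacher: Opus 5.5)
Your proof of the first two assertions follows the paper: localize at the finitely many points of $\cT^u$ using \eqref{eq:bound-wc-int}, apply \cref{thm:integrability} together with \cref{rem:Z2}, and use $\omega(\sigma)=O(\sigma)$ when $u,v_c\in L^\infty(0,T;\R)$. Your reductions for the third assertion are also correct: $J_1=[0,t_1)$, uniqueness for \eqref{eq:angle_gp}, and injectivity of $X_0'\mapsto a_0$.

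The gap is in the final step of the third assertion, and it starts from a misreading. Only part~(ii) of \cref{thm:integrability} uses \eqref{eq:dini}. The expansion \eqref{eq:w-expansion} comes from \eqref{eq:w-theta} and \cref{lem:a-expansion}, which is entirely unconditional, \eqref{eq:b-refined} included. Its remainder $O(\hat\omega(|s|)/|s|)$ is $o(1/|s|)$ simply because $\hat\omega\to0$. Likewise, the proof of \cref{prop:dichotomy} uses only $v_c\in L^1$ (for the contraction) and $\Lambda^-\to+\infty$. Hence, without \eqref{eq:dini}, every non-exceptional incoming branch at $t_1$ gives $|w_{c,a}(t_1+s)|=|s|^{-1}(1+o(1))$, which is not integrable on $I^-$. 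This is the missing step, and it closes the argument at once. If $w_{c,a}\in L^1(0,T;\R)$, the longitude on $J_1\cap I^-$ must be the unique exceptional branch $\phi_\star$. That determines $a_0$, and therefore $X_0'$.

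The route you propose instead cannot be completed. Comparing two branches through $w-\tilde w=-\tau(\sin\phi-\sin\tilde\phi)/\sqrt{\rho_1\rho_2}$ only measures how far apart they are. The separation you expect occurs only between the exceptional branch and the others. Two non-exceptional incoming branches both tend to $\phi=-\f\pi2$ and in fact converge to each other as $t\to t_1^-$. Indeed, $\zeta-\tilde\zeta=W/(q\tilde q)$ with a constant Wronskian $W$, while $|q|$ and $|\tilde q|$ blow up like $e^{\Lambda^-/2}$. Concretely, take $v_c=0$ and bounded $u$. Then $\zeta,\tilde\zeta=O(|s|)$, hence $\sin\phi-\sin\tilde\phi=2(\zeta^2-\tilde\zeta^2)/((1+\zeta^2)(1+\tilde\zeta^2))=O(s^2)$. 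So $w-\tilde w$ is integrable near $t_1$, even though each of $w$ and $\tilde w$ behaves like $1/|s|$. No contradiction can therefore be extracted from the difference of two potentials. What is needed is the non-integrability of each non-exceptional branch individually, i.e.\ \cref{thm:integrability}~(i). Your fallback gestures toward this but never states it.
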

\begin{proof}
	The proof of \Cref{cor:integrability} is given in \Cref{ssec:proof-cor-integrability}.
\end{proof}

Consequently, in general, the geometric potentials reproducing an interacting density are not integrable as soon as the density has a singular time in $(0,T)$, even for smooth external potentials, in contrast with the VP potentials $v_\eta$, which are integrable by~\eqref{eq:bound-vp-int}.

\section{Numerical illustrations}
\label{sec:numerics}

We illustrate the results of \cref{sec:interacting,sec:integrability} on explicit interacting trajectories, with $\tau=U_1=U_2=1$, $T=4$, and the bang-bang potential $u\in\cU^1_{\rm bb}(0,T)$ (see \cref{prop:control} for the definition) equal to $1$ on $[0,1.3)$, to $-1$ on $[1.3,2.6)$, and to $1$ on $[2.6,4]$. We prescribe the singular state at the singular time $t_*=2$ (for VP) or $t_z=2$ (for GPc), and obtain the initial state $\Psi_0^{\rm int}$ by integrating the interacting dynamics backward in time. The singular states for VP and GPc are respectively $\Psi^u(t_*)=\Phi_Y$ with $Y=(0,-0.96,-0.28)^T\in\cC_s$ and  $\Psi^u(t_z)=e_2$. We consider the GPc dynamics with $v_c=u$. The dynamics is integrated exactly, the Hamiltonian being constant on each interval on which $u$ is constant. For the first trajectory, $\cS^u=\{t_*\}$, and for the second one, $\cT^u=\{t_z\}$.

The trajectories on the Bloch sphere near the singular times are shown in \cref{fig:numerics-bloch}, and the corresponding densities and potentials in \cref{fig:numerics-vp,fig:numerics-gp}.

\begin{figure}[!ht]
	\centering
	\includegraphics[width=\linewidth]{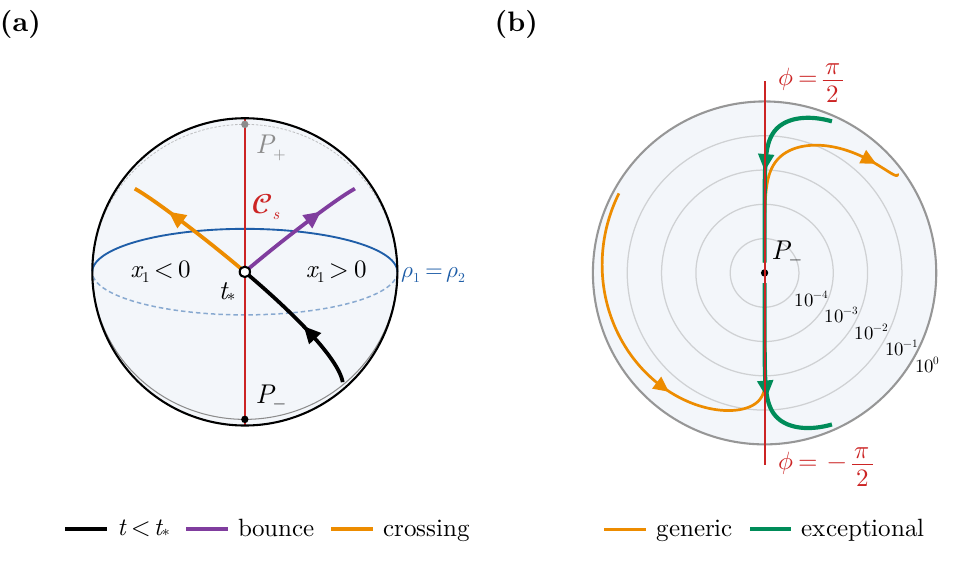}
	\caption{
		Computed non-interacting trajectories near the singular times. (a) VP-trajectories for $|t-t_*|\le0.4$, seen from the singular point $Y\in\cC_s$, so that $\cC_s$ appears as a vertical line: before $t_*$ (black), the trajectory reaches $\cC_s$, and then either bounces back (purple) or crosses it (orange). (b) GPc-trajectories ($v_c=u$) for $|t-t_z|\le0.4$ in a polar chart centered at the south pole $P_-$, with angle the longitude $\phi$ and radius $\log_{10}\sqrt{\rho_1^u\rho_2^u}$ (circles at $10^{-4},\dots,10^{0}$): a generic trajectory (orange) reaches $P_-$ along the meridian $\phi=-\frac\pi2$ and leaves it along $\phi=\frac\pi2$, and the exceptional branches (green) do the opposite.
	}
	\label{fig:numerics-bloch}
\end{figure}

\begin{figure}[!ht]
	\centering
	\includegraphics[width=0.85\linewidth]{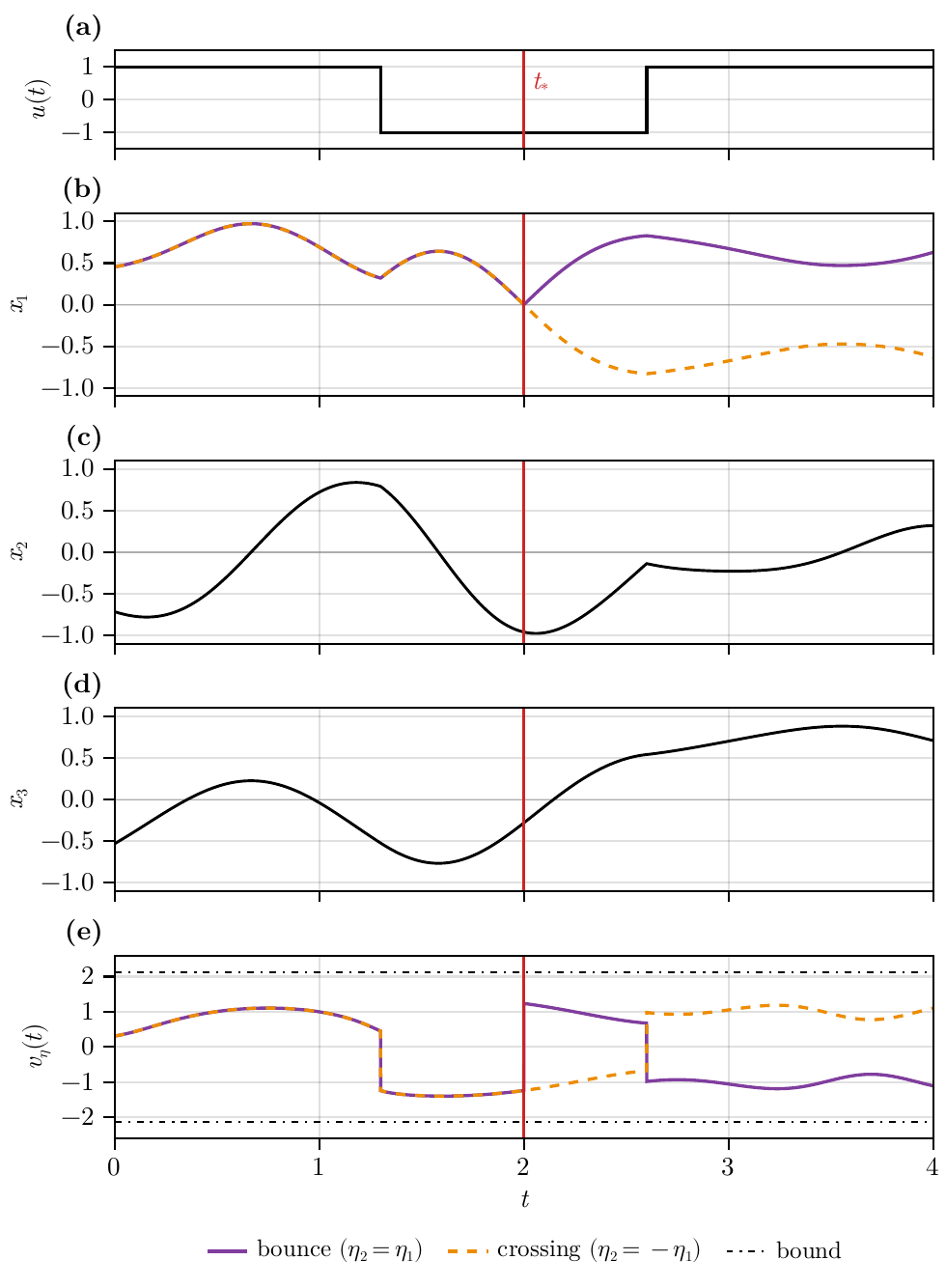}
	\caption{
		VP-trajectories reproducing an interacting density through the VP singular state $\Phi_Y$, $Y\in\cC_s$, at $t_*=2$ (vertical line). (a) Bang-bang potential $u$. (b) $x_1=\pm\sqrt{\cD^u}/(2\tau)$ for the two continuations at $t_*$. It vanishes at $t=t_*$. (c) $x_2=-\dot\rho_1^u/(2\tau)$, which is fixed by the prescribed interacting density. (d) $x_3=\rho_1^u-1$, which is also fixed by the prescribed interacting density. (e) Corresponding VP potentials $v_\eta$. Before $t=t_*$ they coincide, and are opposite after $t_*$. Both satisfy the bound~\eqref{eq:bound-vp-int} (dash-dotted lines). Numerically, we have $\max_{[0,T]}|v_\eta|/(\sqrt2|u|+\frac12 (U_1^2+U_2^2)^{1/2})\approx0.66$.
	}
	\label{fig:numerics-vp}
\end{figure}

\begin{figure}[!ht]
	\centering
	\includegraphics[width=0.8\linewidth]{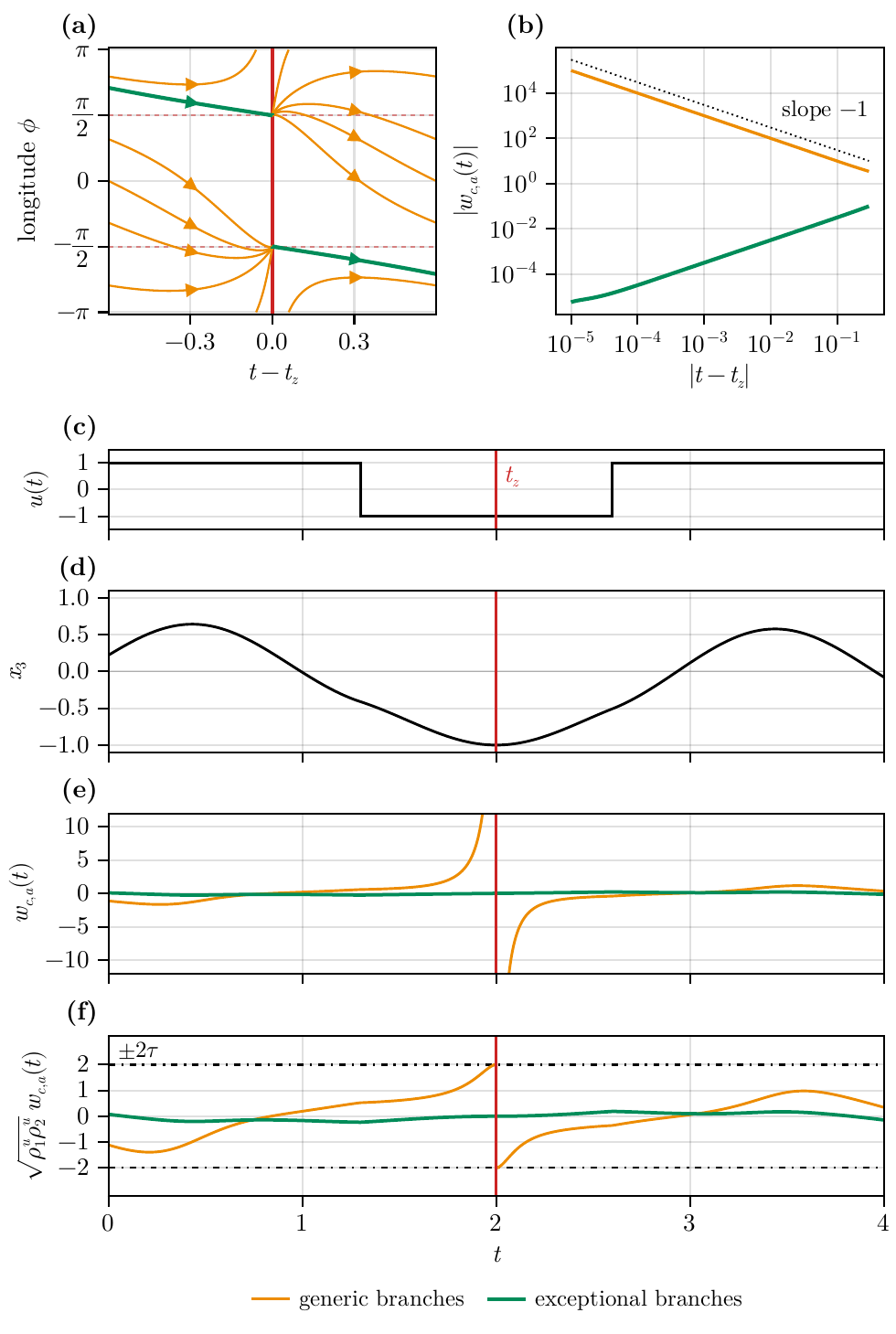}
	\caption{
		GPc-trajectories ($v_c=u$) reproducing the interacting density $\rho^u$ through the south pole $P_-$ at $t_z=2$, illustrating \cref{prop:dichotomy}. (a) Longitude $\phi$ of several generic branches (orange) and of the exceptional branches (green) near $t_z$, the arrows indicating the direction of time; the meridians $\phi=\pm\frac\pi2$ are dashed. (b) Modulus of the corresponding geometric potentials as a function of $|t-t_z|$. Along generic branches, $|w_{c,a}(t)|$ behaves like $1/|t-t_z|$, and is therefore not integrable, whereas it remains bounded along the exceptional branches. (c) Bang-bang potential $u$. (d) $x_3=\rho_1^u-1$. (e) Geometric potential along a generic GPc-trajectory (orange) and along the GPc-trajectory made of the exceptional branches (green). (f) Weighted potential $\sqrt{\rho_1^u\rho_2^u}\,w_{c,a}$ along the same trajectories. All the trajectories satisfy the bound~\eqref{eq:bound-wc-int} (dash-dotted lines) which is attained in the limit $t\to t_z$ for the generic one.
	}
	\label{fig:numerics-gp}
\end{figure}

\FloatBarrier

\section{Proofs}
\label{sec:proofs}

In this section, we provide the proofs of the results stated in \crefrange{sec:hubbard}{sec:integrability}, in the order in which they are stated. We also state and prove some useful intermediate results.

\subsection{Proof of \cref{prop:slater_set}}
\label{ssec:proof-slater}

Let $\lambda \ge 0$, $v\in L^1(0,T;\R)$, $w\in L^0(0,T;\R)$ and $\Psi(t)=(z_1(t),z_2(t),z_3(t))^T\in W^{1,1}(0,T;\C^3)$ be a solution of \eqref{eq:schrodinger} with the Hamiltonian $H^\lambda_{v,w}$ given by \eqref{eq:ham_general}. According to the definition of Slater states \eqref{eq:def_mks}, we define the map
\begin{equation*}
	\Delta^{\rm Slat}(t) := z_3(t)^2 - 2 z_1(t) z_2(t) , \qquad t\in[0,T],
\end{equation*}
which vanishes if and only if $\Psi(t)$ is a Slater state. Since $\langle \Psi, V\Psi\rangle = 2|z_1|^2-2|z_2|^2=\rho_1^\Psi-\rho_2^\Psi$, one has
\begin{equation*}
	[V,P_\Psi]\Psi = V\Psi - \langle \Psi, V\Psi\rangle\Psi = \lp 2\rho_2^\Psi z_1,\, -2\rho_1^\Psi z_2,\, -(\rho_1^\Psi-\rho_2^\Psi) z_3\rp^T,
\end{equation*}
where we used that $\|\Psi(t)\|=1$ for all $t\in[0,T]$ by point (i) of \cref{prop:reg_density}, so that the components of $\Psi$ solve the system
\begin{equation}\label{explicit_schrodinger}
\begin{cases}
\dot z_1(t)
=
-i\lp \lambda U_1+2v(t)\rp z_1(t)
+i\sqrt{2}\,\tau z_3(t)
-2w(t)\rho_2^\Psi(t)z_1(t),
\\[2pt]
\dot z_2(t)
=
-i\lp \lambda U_2-2v(t)\rp z_2(t)
+i\sqrt{2}\,\tau z_3(t)
+2w(t)\rho_1^\Psi(t)z_2(t),
\\[2pt]
\dot z_3(t)
=
i\sqrt{2}\,\tau\lp z_1(t)+z_2(t)\rp
+w(t)\lp \rho_1^\Psi(t)-\rho_2^\Psi(t)\rp z_3(t).
\end{cases}
\end{equation}

Because $z_1,z_2,z_3$ are absolutely continuous, $\Delta^{\rm Slat}$ is also absolutely continuous, and \eqref{explicit_schrodinger} gives for almost every $t\in[0,T]$
\begin{align*}
	\dot\Delta^{\rm Slat}
	&= 2z_3\dot z_3 - 2\dot z_1 z_2 - 2 z_1\dot z_2 \\
	&= 2w\lp\rho_1^\Psi-\rho_2^\Psi\rp z_3^2 + 2i\lambda\lp U_1+U_2\rp z_1z_2 - 4w\lp\rho_1^\Psi-\rho_2^\Psi\rp z_1z_2 ,
\end{align*}
the hopping terms cancelling each other, as well as the contributions of $v$. Writing $2z_1z_2=z_3^2-\Delta^{\rm Slat}$, the terms in $w z_3^2$ also cancel and we obtain
\begin{equation}\label{eq:Delta-ode}
	\dot\Delta^{\rm Slat}(t) = i\lambda\,(U_1+U_2)\, z_3(t)^2 - \kappa(t)\,\Delta^{\rm Slat}(t),
	\quad
	\kappa(t) := i\lambda\,(U_1+U_2) - 2w(t)\lp\rho_1^\Psi(t)-\rho_2^\Psi(t)\rp .
\end{equation}
Note that $\Delta^{\rm Slat}$ obeys a linear equation whose inhomogeneity involves only $z_3$ and vanishes in the non-interacting case. For points (i) and (iii), we assume that $w\in L^1(0,T;\R)$ (and $w=0$ in (iii)), and we define
\begin{equation*}
	\Gamma^{\rm Slat}(t) := \Theta(t)\Delta^{\rm Slat}(t), \qquad \Theta(t) := \exp\lp \int_0^t \kappa(s) \rd s \rp.
\end{equation*}
Since $w\in L^1(0,T)$ and $|\rho_1^\Psi-\rho_2^\Psi|\le 2$, one has $\kappa\in L^1(0,T;\C)$, so that the function $\Theta$ is absolutely continuous and nowhere vanishing, with $|\Theta(t)|=\exp\lp -2\int_0^t w(\rho_1^\Psi-\rho_2^\Psi)\rd s\rp$. $\Gamma^{\rm Slat}$ is also absolutely continuous and \cref{eq:Delta-ode} then yields
\begin{equation}\label{eq:Delta-integrating-factor}
	\dot \Gamma^{\rm Slat}(t) = i\lambda\,(U_1+U_2)\,
	\Theta(t)z_3(t)^2\qquad \text{ for a.e. }\; t\in(0,T).
\end{equation}
The right-hand side being absolutely continuous, $\Gamma^{\rm Slat} \in W^{2,1}(0,T;\C) \hookrightarrow C^1([0,T];\C)$ and \eqref{eq:Delta-integrating-factor} holds for every $t\in[0,T]$. \medskip

\proofstep{Proof of (i).} Assume that $\Psi_0\in \cM^{\rm Slat}$ and $\lambda=0$. Then~\eqref{eq:Delta-ode} reduces to the linear equation $\dot\Delta^{\rm Slat}=-\kappa\,\Delta^{\rm Slat}$, whose coefficient $\kappa$ is integrable on $[0,T]$ since $w\in L^1(0,T;\R)$. By uniqueness of Carath\'eodory solutions, the only solution with $\Delta^{\rm Slat}(0)=0$ is $\Delta^{\rm Slat}=0$. Explicitly, $\Gamma^{\rm Slat}$ solves $\dot \Gamma^{\rm Slat} = 0$ with the initial condition $\Gamma^{\rm Slat}(0)=\Delta^{\rm Slat}(0)= 0$, so that $\Gamma^{\rm Slat}(t)=0$ for every $t\in[0,T]$ and, since $\Theta$ does not vanish, $\Delta^{\rm Slat}(t)=0$, which shows that $\cT^{\rm Slat}=[0,T]$.

\proofstep{Proof of (ii).} Assume that $\Psi_0\in \cM^{\rm Slat}$ and $\lambda=0$, so that $\dot\Delta^{\rm Slat}=2w(\rho_1^\Psi-\rho_2^\Psi)\Delta^{\rm Slat}$ a.e. by~\eqref{eq:Delta-ode}. In contrast with (i), the coefficient $w$ is only measurable, and the uniqueness argument of (i) only applies on the intervals on which $w$ is integrable, which is the case away from the poles in view of the first two equations in~\eqref{explicit_schrodinger} since $\Psi =(z_1,z_2,z_3)^T\in W^{1,1}(0,T;\C^3)$ by assumption. Let $t_1:=\sup\left\{t\in[0,T_*]\;\middle|\;\Delta^{\rm Slat}=0\text{ on }[0,t]\right\}$, and assume that $t_1<T_*$. By continuity, $\Psi(t_1)$ is a Slater state, which is not represented by a pole since $t_1<T_*$. As the eigenvectors of $V$ are the Slater states $e_1=(1,0,0)^T$ and $e_2=(0,1,0)^T$ represented by the poles and the vector $(0,0,1)^T \notin\cM^{\rm Slat}$, $\Psi(t_1)$ is not an eigenvector of $V$, so that $\|[V,P_{\Psi(t_1)}]\Psi(t_1)\|>0$. By continuity, there are $c,\delta>0$ such that $\|[V,P_\Psi]\Psi\|\ge c$ on $[t_1,t_1+\delta]$. Since $w\|[V,P_\Psi]\Psi\|\in L^1(0,T;\R)$ by~\eqref{eq:integrability-vw}, we get $w\in L^1(t_1,t_1+\delta)$, and the argument of (i) on $[t_1,t_1+\delta]$, with $\Delta^{\rm Slat}(t_1)=0$, gives $\Delta^{\rm Slat}=0$ on $[t_1,t_1+\delta]$, which contradicts the definition of $t_1$. Hence $t_1=T_*$, and $\Delta^{\rm Slat}=0$ on $[0,T_*]$ by continuity.

\proofstep{Proof of (iii).} Assume now that $\lambda>0$ and $w=0$, so that $\lambda(U_1+U_2)>0$ and $\Theta(t)=e^{i\lambda(U_1+U_2)t}$, and that $\Psi_0\in \cM^{\rm Slat}$. We show that for $t>0$ small enough $\Delta^{\rm Slat}(t)\neq 0$. We have $\Gamma^{\rm Slat}(0)= 0$ and $\dot \Gamma^{\rm Slat}(0)=i\lambda(U_1+U_2)z_3(0)^2$. If this last quantity is nonzero, the conclusion is immediate. Assume therefore $\dot \Gamma^{\rm Slat}(0)= 0$. Since $\lambda(U_1+U_2)>0$, this implies $z_3(0)=0$. Together with $\Delta^{\rm Slat}(0)=0$, we obtain $z_1(0)z_2(0)=0$. By symmetry we may assume $|z_1(0)| = 1$ and $z_2(0) = 0$. As $w=0$, the third equation of \eqref{explicit_schrodinger} reads $\dot z_3=i\sqrt2\,\tau(z_1+z_2)$, and $z_1+z_2=z_1(0)+o(1)$, so that
\begin{equation*}
	z_3(t) = i\sqrt{2}\,\tau z_1(0)\,t+o(t) \qquad\text{as }t\to0^+.
\end{equation*}
Integrating \eqref{eq:Delta-integrating-factor} from $0$ to $t$, and using $\Theta(t)=1+o(1)$, gives
\begin{equation*}
	\Gamma^{\rm Slat}(t) = i\lambda\,(U_1+U_2) \int_0^t \Theta(s)z_3(s)^2\,\rd s = -\f{2i}{3}\, \tau^2\lambda(U_1+U_2)z_1(0)^2\,t^3 +o(t^3).
\end{equation*}
Since $|\Theta(t)|=1$, it follows that
\begin{equation*}
	|\Delta^{\rm Slat}(t)| = \f{2}{3}\,\tau^2\lambda(U_1+U_2)\,t^3 +o(t^3).
\end{equation*}
Hence $\Delta^{\rm Slat}(t)\neq0$ for all sufficiently small
$t>0$. Translating the result to each time $t^*\in[0,T]$ such that $\Psi(t^*)\in \cM^{\rm Slat}$ (replacing $t$ by $t-t^*$ in the definition of $\Theta$), and arguing in the same way backward in time, we obtain that every point of $\cT^{\rm Slat}$ is isolated. As $\cT^{\rm Slat}$ is closed in the compact $[0,T]$, it is finite, which concludes the proof.

\subsection{Proof of \cref{prop:reg_density}}
\label{ssec:proof-reg-density}

\proofstep{Proof of (i).} Since $K+\lambda U+v(t)V$ is Hermitian and $\langle\Psi,[V,P_\Psi]\Psi\rangle=\|\Psi\|^2\langle\Psi,V\Psi\rangle-\langle\Psi,V\Psi\rangle\|\Psi\|^2=0$, we have $\f{\rd}{\rd t}\|\Psi\|^2=2\Re\langle\Psi,\dot\Psi\rangle=0$ a.e., so that $\|\Psi(t)\|=1$ for all $t\in[0,T]$, and therefore $\rho_1^\Psi+\rho_2^\Psi=2\|\Psi\|^2=2$. Point (i) then comes directly from the definitions \eqref{eq:density} and \eqref{eq:curent}, since absolutely continuous functions form an algebra.

\proofstep{Proof of (ii).} Since $\rho_1^\Psi=2|z_1|^2+|z_3|^2$, we obtain by using \eqref{explicit_schrodinger} a.e. on $(0,T)$,
\begin{equation*}
	\dot\rho_1^\Psi = 4\Re(\overline{z_1}\dot z_1)+2\Re(\overline{z_3}\dot z_3) = -j_1^\Psi - 8w\rho_2^\Psi|z_1|^2+2w(\rho_1^\Psi-\rho_2^\Psi)|z_3|^2 = -j_1^\Psi-4w(\rho_1^\Psi\rho_2^\Psi-|z_3|^2),
\end{equation*}
where we used $2|z_1|^2=\rho_1^\Psi-|z_3|^2$ and $\rho_1^\Psi+\rho_2^\Psi=2$. On a Slater state, $z_3^2=2z_1z_2$ gives $|z_3|^4=4|z_1|^2|z_2|^2$, so that $\rho_1^\Psi\rho_2^\Psi=4|z_1|^2|z_2|^2+2|z_3|^2\lp1-|z_3|^2\rp+|z_3|^4=2|z_3|^2$, whence the reduced form of \eqref{eq:continuity}.

\proofstep{Proof of (iii).} When $w=0$, $\dot\rho_1^\Psi=-j_1^\Psi\in W^{1,1}(0,T;\R)$, so that $\rho^\Psi\in W^{2,1}(0,T;\R^2)\hookrightarrow C^1([0,T];\R^2)$, and the continuity equation~\eqref{eq:continuity-w0}, whose two sides belong to $W^{1,1}(0,T;\R)$ and agree a.e., holds in $W^{1,1}(0,T;\R)$, hence everywhere on $[0,T]$. Finally, the van Leeuwen equation \eqref{eq:vanleeuwen} is obtained by differentiating $\dot\rho_1^\Psi=-j_1^\Psi$ and using \eqref{explicit_schrodinger} with $w=0$.

\subsection{Proof of \cref{thm:VP-characterization}}
\label{ssec:proof-VP}

We first state and prove two auxiliary lemmas. The first one, which is also used in \cref{sec:VP,sec:GP,sec:comparison} and in the proof of \cref{thm:GP-characterization}, shows that the non-interacting trajectories spend no time at the poles.

\begin{lemma}\label{lem:poles}
	Let $X\in W^{1,1}(0,T;\bS^2)$ be
	\begin{itemize}
		\item either a solution to \eqref{eq:vp_X} with $v\in L^1(0,T;\R)$;
		\item or a solution to \eqref{eq:gp_X} with $w\in L^0(0,T;\R)$ such that $w|e_3\times X|\in L^1(0,T;\R)$;
		\item or a solution to \eqref{eq:gpc_X} with $v_c\in L^1(0,T;\R)$ and $w_c\in L^0(0,T;\R)$ such that $w_c|e_3\times X|\in L^1(0,T;\R)$.
	\end{itemize}
	Then $|\cT_{\rho^X}|=0$.
\end{lemma}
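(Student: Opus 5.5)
The plan is to show that the set $\cT_{\rho^X}=\{t \mid x_3(t)^2=1\}$, the set of times at which $X$ sits at a pole, has zero measure. The argument is the standard fact that an absolutely continuous function has vanishing derivative a.e. on any level set. Write $X=(x_1,x_2,x_3)^T$. Then $\rho_1^X\rho_2^X=1-x_3^2=x_1^2+x_2^2$, so $\cT_{\rho^X}=\{x_1=x_2=0\}$. Since $X\in W^{1,1}$, the functions $x_1$ and $x_2$ are absolutely continuous. Hence $\dot x_1=\dot x_2=0$ a.e. on $\{x_1=0\}\cap\{x_2=0\}$, because for an absolutely continuous $f$ we have $\dot f=0$ a.e. on $f^{-1}(\{0\})$.

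Next I will compute $\dot x_2$ at the poles in each of the three cases, using the coordinate forms \eqref{eq:TDVP_coord} and \eqref{eq:TDGP_coord}. For VP, $\dot x_2=2vx_1+2\tau x_3$, which equals $2\tau x_3=\pm2\tau$ on $\cT_{\rho^X}$. For GPc, and GP as the case $v_c=0$, the same coordinate computation gives $\dot y_2=2v_cy_1+2\tau y_3+2w_cy_2y_3=\pm 2\tau$ on $\cT_{\rho^X}$.

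The one delicate point is to justify evaluating these expressions pointwise a.e. on the pole set when $w_c$ is only measurable. The hypothesis $w_c|e_3\times X|\in L^1$ ensures that the right-hand side of \eqref{eq:gpc_X} is in $L^1$, so the ODE holds a.e. as an identity between functions. On $\cT_{\rho^X}$ we have $e_3\times X=0$, so the $w_c$-term vanishes there regardless of the size of $w_c$. The $v_c$-term also vanishes there, since $y_1=0$. Thus a.e. on $\cT_{\rho^X}$ the equation gives $|\dot x_2|=2\tau>0$, whereas the level-set argument gives $\dot x_2=0$ a.e. on the same set. Since $\tau>0$, these two facts are compatible only if $|\cT_{\rho^X}|=0$.

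I expect no substantial obstacle. The only care needed is the a.e. meaning of the equation with merely measurable $w$ or $w_c$, which the weighted integrability hypothesis handles as described above. I also need the level-set fact for $W^{1,1}$ functions: at a.e. point of a measurable set $E$ that is a density point of $E$ and a point of differentiability of $f$, with $f=0$ on $E$, one gets $\dot f=0$.
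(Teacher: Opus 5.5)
Your proposal is correct and follows essentially the same argument as the paper. Since $x_2\in W^{1,1}$ vanishes on the pole set, $\dot x_2=0$ a.e. there, while the second component of the equation forces $\dot x_2=2\tau x_3=\pm2\tau\neq0$ a.e. on that same set; so the set must be negligible. Your added remark that the $v_c$- and $w_c$-terms vanish pointwise at the poles (so a merely measurable $w_c$ causes no trouble) only makes explicit what the paper leaves implicit.
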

\begin{proof}[Proof of \cref{lem:poles}]
	We have $\cT_{\rho^X}=\left\{t\in[0,T]\;\middle|\;X(t)\in\{P_+,P_-\}\right\}$. On $\cT_{\rho^X}$, one has $x_1=x_2=0$ and $x_3=\pm1$. Since $x_2\in W^{1,1}(0,T)$ vanishes on $\cT_{\rho^X}$, one has $\dot x_2=0$ a.e. on $\cT_{\rho^X}$. On the other hand, the second component of \eqref{eq:vp_X}, \eqref{eq:gp_X} and \eqref{eq:gpc_X} reads $\dot x_2=2vx_1+2\tau x_3+2wx_2x_3$ (with $w=0$ for VP and $v=0$ for GP), which gives $\dot x_2=2\tau x_3=\pm2\tau\neq0$ a.e. on $\cT_{\rho^X}$. Hence $|\cT_{\rho^X}|=0$.
\end{proof}

The second one provides the regularity of $\sqrt{\cD_\rho}$ needed for the reconstruction of the VP-trajectory.

\begin{lemma}
	\label{lem:sqrtD}
	Let $\rho\in W^{2,1}(0,T;\R^2)$ be such that $\rho_1+\rho_2=2$, $\cD_\rho\ge0$ and $\1_{\{\cD_\rho>0\}}\cN_\rho/\sqrt{\cD_\rho}$ belongs to $L^1(0,T;\R)$. Then $\sqrt{\cD_\rho}\in W^{1,1}(0,T;\R)$ and
	\begin{equation*}
		\f{\rd}{\rd t}\sqrt{\cD_\rho} = -\1_{\{\cD_\rho>0\}}\,\dot\rho_1\,\f{\cN_\rho}{\sqrt{\cD_\rho}} \qquad \text{a.e. on } (0,T).
	\end{equation*}
\end{lemma}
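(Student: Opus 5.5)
The plan is to compute the derivative of $\cD_\rho$, then treat $\sqrt{\cD_\rho}$ as the composition of $\sqrt{\cdot}$ with a nonnegative absolutely continuous function. First, since $\rho\in W^{2,1}$, we have $\rho_1,\dot\rho_1\in W^{1,1}$, so $\cD_\rho=4\tau^2\rho_1(2-\rho_1)-\dot\rho_1^2$ lies in $W^{1,1}(0,T;\R)$. A direct computation gives, a.e.,
\begin{equation*}
\dot\cD_\rho = 8\tau^2(1-\rho_1)\dot\rho_1 - 2\dot\rho_1\ddot\rho_1 = -2\dot\rho_1\big(\ddot\rho_1+4\tau^2(\rho_1-1)\big) = -2\dot\rho_1\cN_\rho .
\end{equation*}

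Next I would regularize: set $f_\varepsilon:=\sqrt{\cD_\rho+\varepsilon^2}-\varepsilon$ for $\varepsilon>0$. The map $s\mapsto\sqrt{s+\varepsilon^2}$ is Lipschitz on $\R_+$, so $f_\varepsilon\in W^{1,1}$ with
\begin{equation*}
\dot f_\varepsilon=-\dot\rho_1\cN_\rho/\sqrt{\cD_\rho+\varepsilon^2}.
\end{equation*}
On $\{\cD_\rho=0\}$ the derivative $\dot\cD_\rho$ vanishes a.e. (it is the derivative of a $W^{1,1}$ function on a level set), so $\dot f_\varepsilon=0$ a.e. there. On $\{\cD_\rho>0\}$ we have $|\dot f_\varepsilon|\le|\dot\rho_1|\,|\cN_\rho|/\sqrt{\cD_\rho}$. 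Moreover $|\dot\rho_1|\le 2\tau\sqrt{\rho_1\rho_2}\le2\tau$ by $\cD_\rho\ge0$, so $\dot\rho_1$ is bounded. Hence $|\dot f_\varepsilon|\le g:=2\tau\1_{\{\cD_\rho>0\}}|\cN_\rho|/\sqrt{\cD_\rho}\in L^1$, uniformly in $\varepsilon$.

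Finally, I would pass to the limit $\varepsilon\to0$. Pointwise, $f_\varepsilon\to\sqrt{\cD_\rho}$, and $\dot f_\varepsilon\to h:=-\1_{\{\cD_\rho>0\}}\dot\rho_1\cN_\rho/\sqrt{\cD_\rho}$ a.e. By dominated convergence, in the identity $f_\varepsilon(t)=f_\varepsilon(0)+\int_0^t\dot f_\varepsilon$ we obtain $\sqrt{\cD_\rho(t)}=\sqrt{\cD_\rho(0)}+\int_0^t h$ with $h\in L^1$. This shows $\sqrt{\cD_\rho}\in W^{1,1}$ and gives the stated derivative formula.

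The only delicate point is the set $\{\cD_\rho=0\}$, where the chain rule for $\sqrt{\cdot}$ fails. It is handled by the standard fact that $\dot\cD_\rho=0$ a.e. on this set, together with the regularization above. Note that the hypothesis $\cN_\rho=0$ a.e. on $\cS_\rho$ is not needed for this lemma.
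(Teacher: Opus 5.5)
Your proof is correct. It shares the first steps with the paper's, namely $\dot\cD_\rho=-2\dot\rho_1\cN_\rho$ and boundedness of $\dot\rho_1$, but it crosses the zero set $\cS_\rho$ by a different mechanism.

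\emph{The paper's route.} It works directly with $f=\sqrt{\cD_\rho}$. This function is locally absolutely continuous on the open set $\{\cD_\rho>0\}$, with derivative $g$, and it vanishes on $\cS_\rho$. For $s<t$, the paper splits $[s,t]$ at the first and last points $s'\le t'$ of $\cS_\rho\cap[s,t]$, which gives the increment bound $|f(t)-f(s)|\le\int_s^t|g|$. The derivative formula then follows because $\dot f=0$ a.e.\ on the zero set of $f$.

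\emph{Your route.} You regularize with $\sqrt{\cD_\rho+\varepsilon^2}-\varepsilon$, apply the chain rule globally, and pass to the limit in the integrated identity by dominated convergence. You use the level-set property for $\cD_\rho$ itself. Here this property is even more immediate: any point of $\cS_\rho$ where $\cD_\rho$ is differentiable is a minimum of $\cD_\rho\ge0$, so the derivative vanishes there.

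\emph{What each buys.} Your argument produces the integral identity and the derivative formula in one stroke, and it never uses the structure of $\{\cD_\rho>0\}$. The paper's increment estimate is more robust. It applies verbatim to any continuous function that is locally absolutely continuous on an open set, has derivative dominated by an $L^1$ function there, and vanishes on the complement. This is exactly what the paper reuses for the sign-switched function $\sigma_\eta\sqrt{\cD_\rho}$ in the proof of \cref{thm:VP-characterization}, and for $y_1,y_2$ in the proof of \cref{thm:GP-characterization}. In those cases the sign (or the angle) may jump between connected components, so no single global regularization of the form $G_\varepsilon(\cD_\rho)$ is available.
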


\begin{proof}
	Since $\rho_1+\rho_2=2$, one has $\dot\cD_\rho=4\tau^2\dot\rho_1(\rho_2-\rho_1)-2\dot\rho_1\ddot\rho_1=-2\dot\rho_1\cN_\rho$ a.e. Hence $f:=\sqrt{\cD_\rho}$ is continuous on $[0,T]$, vanishes on $\cS_\rho$, and is locally absolutely continuous on the relatively open set $\{\cD_\rho>0\}$, with derivative $g:=-\1_{\{\cD_\rho>0\}}\dot\rho_1\cN_\rho/\sqrt{\cD_\rho}$. As $\dot\rho_1$ is bounded, $g\in L^1(0,T)$. Let $s<t$. If $[s,t]\subset\{\cD_\rho>0\}$, then $|f(t)-f(s)|\le\int_s^t|g|$. Otherwise, denoting by $s'\le t'$ the smallest and largest points of $\cS_\rho\cap[s,t]$, $f(s')=f(t')=0$ gives $|f(t)-f(s)|\le\int_s^{s'}|g|+\int_{t'}^t|g|\le\int_s^t|g|$. Thus $f$ is absolutely continuous, that is $f\in W^{1,1}(0,T;\R)$, and $\dot f=g$ a.e., since $\dot f=0$ a.e. on $\cS_\rho$ because $f$ vanishes there.
\end{proof}

\begin{proof}[Proof of \cref{thm:VP-characterization}]
	The necessity of the conditions in \eqref{eq:RVP-characterization}, as well as the fact that any VP-trajectory reproducing $\rho$ is of the form \eqref{eq:VP-fiber-X} for some $\eta\in\fS_\rho(X_0)$ with potential given a.e. by \eqref{eq:VP-fiber}, have been established in \cref{sec:VP}. Conversely, let $\rho$ satisfy the conditions in \eqref{eq:RVP-characterization}, let $\eta\in\fS_\rho(X_0)$, and let $X_\eta:=X^{\rm VP}_\eta=(x_1,x_2,x_3)^T$ and $v_\eta$ be defined by \eqref{eq:VP-fiber-X} and \eqref{eq:VP-fiber}. By the last condition in \eqref{eq:RVP-characterization}, $v_\eta\in L^1(0,T;\R)$. By \cref{lem:sqrtD}, $\sqrt{\cD_\rho}\in W^{1,1}(0,T;\R)$. The function $f:=\sigma_\eta\sqrt{\cD_\rho}$ coincides with $\pm\sqrt{\cD_\rho}$ on each $I_k$ and vanishes on $\cS_\rho$, so that the argument of the proof of \cref{lem:sqrtD}, with $g$ replaced by $|\f{\rd}{\rd t}\sqrt{\cD_\rho}|$, shows that $f\in W^{1,1}(0,T;\R)$. Since $|X_\eta|^2=(\cD_\rho+\dot\rho_1^2)/(4\tau^2)+(\rho_1-1)^2=\rho_1\rho_2+(\rho_1-1)^2=1$, we get $X_\eta\in W^{1,1}(0,T;\bS^2)$, and $X_\eta(0)=X_0$ by the conditions on $\rho(0)$ and $\dot\rho_1(0)$ and the choice of $\eta_1$. The third equation of \eqref{eq:TDVP_coord} holds by construction. On $\Omega_\rho$, the second one is equivalent to $\cN_\rho=-4\tau v_\eta x_1$, which holds by definition of $v_\eta$, and the first one follows from the last two since, the vector field of~\eqref{eq:vp_X} being tangent to the sphere,
	\begin{equation*}
		x_1(\dot x_1+2v_\eta x_2)=-x_2(\dot x_2-2v_\eta x_1-2\tau x_3)-x_3(\dot x_3+2\tau x_2)
	\end{equation*}
	and $x_1\neq0$ on $\Omega_\rho$. On $\cS_\rho$, one has $x_1=0$ and $v_\eta=0$, so that the first equation reduces to $\dot x_1=0$, which holds a.e. on $\cS_\rho$ since $x_1$ vanishes there, and the second one reduces to $\ddot\rho_1=-4\tau^2(\rho_1-1)$, that is $\cN_\rho=0$, which holds a.e. on $\cS_\rho$ by assumption. Therefore $X_\eta$ is the solution of \eqref{eq:vp_X} with potential $v_\eta$, and $\rho\in\cR_{\rm VP}(X_0)$. Finally, $X_\eta\neq X_{\eta'}$ for $\eta\neq\eta'$, since $x_1$ has opposite signs on any $I_k$ with $\eta_k\neq\eta'_k$, where $\cD_\rho>0$.
\end{proof}

\subsection{Proof of \cref{thm:GP-characterization}}
\label{ssec:proof-GP}
If $\rho\in\cR_{\rm GPc}(X_0,v_c)$, then $\rho=\rho^X$ for some GPc-trajectory $X$, so that $|\cT_\rho|=0$ by \cref{lem:poles}. The necessity of the other conditions in \eqref{eq:RGP-characterization}, as well as the fact that any GPc-trajectory reproducing $\rho$ is of the form \eqref{eq:GP-fiber-X} for some $a\in\cA_\rho(X_0)$, with potential given a.e. by \eqref{eq:GP-fiber}, have been established in \cref{sec:GP}. Conversely, let $\rho$ satisfy the conditions in \eqref{eq:RGP-characterization}, let $a\in\cA_\rho(X_0)$, and let $X:=X^{\rm GPc}_{v_c,a}=(y_1,y_2,y_3)^T$ and $w_c:=w_{c,a}$ be defined by \eqref{eq:GP-fiber-X} and \eqref{eq:GP-fiber}. On each $J_\ell$, $X$ is parametrized by $(r_\rho,\alpha_a)$, so that $X$ is locally absolutely continuous on $\widetilde\Omega_\rho$. Let $E_j:=\dot y_j-F_j$, where $F_j$ denotes the right-hand side of the $j$-th equation in~\eqref{eq:TDGP_coord} with $w_c=w_{c,a}$. Since the vector field in~\eqref{eq:TDGP_coord} is tangent to the sphere and $X$ takes values in $\bS^2$, $y_1E_1+y_2E_2+y_3E_3=0$ a.e. on $\widetilde\Omega_\rho$. Moreover, $E_3=0$ by~\eqref{eq:GP-fiber}, and $y_1E_2-y_2E_1=r_\rho^2\dot\alpha_a-2v_cr_\rho^2-2\tau y_1y_3=0$ by~\eqref{eq:angle_gp}. As $r_\rho>0$ on $\widetilde\Omega_\rho$, we get $E_1=E_2=0$, so that $X$ solves \eqref{eq:TDGP_coord} a.e. on $\widetilde\Omega_\rho$, hence a.e. on $(0,T)$ since $|\cT_\rho|=0$. Moreover, $|w_c|r_\rho\le \f{|\dot\rho_1|}{2r_\rho}+\tau$ on $\widetilde\Omega_\rho$, so that $w_cr_\rho\in L^1(0,T;\R)$ by the last condition in \eqref{eq:RGP-characterization}. It remains to check that $X\in W^{1,1}(0,T;\bS^2)$ across the poles. Since $y_3=\rho_1-1\in W^{1,1}(0,T;\R)$, it suffices to consider $y_1$ and $y_2$. On $\widetilde\Omega_\rho$, $\dot r_\rho=-(\rho_1-1)\dot\rho_1/r_\rho$ and $r_\rho|\dot\alpha_a|\le2r_\rho|v_c|+2\tau$ by \eqref{eq:angle_gp}, so that $|\dot y_1|+|\dot y_2|\le g:=2\1_{\widetilde\Omega_\rho}|\dot\rho_1|/r_\rho+4|v_c|+4\tau$, and $g\in L^1(0,T)$ by the last condition in~\eqref{eq:RGP-characterization} and since $v_c\in L^1(0,T;\R)$. Let $f\in\{y_1,y_2\}$ and $s<t$. If $[s,t]\subset\widetilde\Omega_\rho$, then $|f(t)-f(s)|\le\int_s^tg$. Otherwise, denoting by $s'\le t'$ the smallest and largest points of $\cT_\rho\cap[s,t]$, $f(s')=f(t')=0$ gives $|f(t)-f(s)|\le\int_s^{s'}g+\int_{t'}^tg\le\int_s^tg$. Thus $y_1,y_2\in W^{1,1}(0,T;\R)$. Moreover, $X(0)=X_0$, by~\eqref{eq:GP-alpha-initial} if $0\in\widetilde\Omega_\rho$, and since $X_0=(0,0,\rho_1(0)-1)^T=X(0)$ if $0\in\cT_\rho$. Hence $(w_c,X)$ solves \eqref{prob:GPc}, so that $\rho\in\cR_{\rm GPc}(X_0,v_c)$. Since the conditions in \eqref{eq:RGP-characterization} do not depend on $v_c$, $\cR_{\rm GPc}(X_0,v_c)=\cR_{\rm GP}(X_0)$. Finally, $X^{\rm GPc}_{v_c,a}\neq X^{\rm GPc}_{v_c,a'}$ for $a\neq a'$, since $\alpha_a(s_\ell)\neq\alpha_{a'}(s_\ell)$ and $r_\rho(s_\ell)>0$ for any $\ell$ with $a_\ell\neq a'_\ell$.

\subsection{Proof of \cref{lem:Du}}
\label{ssec:proof-Du}
Let $\Psi=(z_1,z_2,z_3)^T$ be a normalized vector of $\C^3$. We set $\xi_1:=\overline{z_3}z_1$ and $\xi_2:=\overline{z_3}z_2$ in the following. One can successively obtain the following lower bound using~\eqref{eq:density} and~\eqref{eq:curent}:
	\begin{align*}
		\f{\cD(\Psi)}{4\tau^2}
		&= \rho_1^\Psi \rho_2^\Psi - 2\Im(\xi_2-\xi_1)^2 \\
		&= \rho_1^\Psi \rho_2^\Psi - 2|\xi_1-\xi_2|^2 + 2\Re(\xi_2-\xi_1)^2 \\
		&= \lp 2|z_1|^2+|z_3|^2\rp \lp 2|z_2|^2+|z_3|^2\rp
			- 2|z_3|^2|z_2-z_1|^2 + 2\Re(\xi_2-\xi_1)^2 \\
		&= |z_3|^4 + 4|z_1|^2|z_2|^2 + 4\Re(\xi_1\overline{\xi_2})
			+ 2\Re(\xi_2-\xi_1)^2 \\
		&= |z_3|^4 + 4|z_1|^2|z_2|^2 + 2\Re(\xi_1)^2 + 2\Re(\xi_2)^2
			+ 4\Re(\xi_1\overline{\xi_2}) - 4\Re(\xi_1)\Re(\xi_2) \\
		&= |z_3|^4 + 4|z_1|^2|z_2|^2 + 2\Re(\xi_1)^2 + 2\Re(\xi_2)^2
			+ 4\Im(\xi_1)\Im(\xi_2) \\
		&\ge |z_3|^4 + 4|z_1|^2|z_2|^2 + 2\Re(\xi_1)^2 + 2\Re(\xi_2)^2
			- 4|\xi_1||\xi_2| \\
		&= |z_3|^4 + 4|z_1|^2|z_2|^2 - 4|z_3|^2|z_1||z_2|
			+ 2\Re(\xi_1)^2 + 2\Re(\xi_2)^2 \\
		&= \lp |z_3|^2 - 2|z_1||z_2| \rp ^2
			+ 2\Re(\xi_1)^2 + 2\Re(\xi_2)^2 ,
	\end{align*}
whence~\eqref{eq:ineq_D_improved}, and in particular $\cD(\Psi)\ge0$. Equality enforces $\Re(\xi_1)=\Re(\xi_2)=0$, with $\Im(\xi_1)\Im(\xi_2)\le0$ and $|z_3|^2=2|z_1||z_2|$. If $z_3=0$, the last condition gives $z_1z_2=0$, that is $\Psi\in\{e^{i\vartheta}(1,0,0)^T,e^{i\vartheta}(0,1,0)^T\}$. If $z_3\neq0$, it gives $z_1,z_2\neq0$, and these conditions are equivalent to $\arg z_1=\arg z_3\pm\tfrac\pi2$ and $\arg z_2=\arg z_3\mp\tfrac\pi2$ together with the modulus condition. Parametrizing $|z_1|=y$ gives the form of singular states given in~\eqref{eq:singular-state}, the case $z_3=0$ corresponding to $y\in\{0,1\}$. Finally, for $\Psi\in\cK_s$, one has $z_3^2=-2y(1-y)e^{2i\vartheta}=2z_1z_2$, so that $\Psi\in\cM^{\rm Slat}$. Hence $\cK_s=\left\{\Phi_X\;\middle|\;X\in\bS^2,\ \cD(\Phi_X)=0\right\}$ up to a global phase, and~\eqref{eq:D-Bloch} gives $\cD(\Phi_X)=4\tau^2x_1^2$, so that the Bloch representation of $\cK_s$ is $\cC_s=\{x_1=0\}$.

\subsection{Proof of \cref{thm:int-representability}}
\label{ssec:proof-int-rep}
\proofstep{Singular sets.} Since $\cD^u(t)=\cD(\Psi^u(t))$, \cref{lem:Du} gives $\Psi^u(t)\in\cK_s\subset\cM^{\rm Slat}$ for every $t\in\cS^u$, so that $\cS^u$ is finite by point (iii) of \cref{prop:slater_set}. On $\cT^u$, $\cD^u=-(\dot\rho^u_1)^2\le0$, hence $\cD^u=0$ since $\cD^u\ge0$ by \cref{lem:Du}, so that $\cT^u\subseteq\cS^u$ is finite. As a consequence, $M_{\rm cc}$ and $N_{\rm cc}$ are finite.

\proofstep{Proof of (i).} Assume that $j^{X_0}=j^{\Psi_0^{\rm int}}$. By \cref{thm:NIVIN}~(i), $\rho^u\in\cR_{\rm VP}(X_0)$, so that the description of the VP-trajectories follows from \cref{thm:VP-characterization}. Let $t_*\in\cS^u$. By \cref{lem:Du}, $\Psi^u(t_*)=e^{i\vartheta}\Phi_Y$ for some $\vartheta\in\R$ and $Y\in\cC_s$. Then $\rho^Y=\rho^u(t_*)$ and $j^Y_1=j_1^{\Psi^u(t_*)}=-\dot\rho^u_1(t_*)$, so that $Y=\lp0,-\dot\rho^u_1(t_*)/(2\tau),\rho^u_1(t_*)-1\rp^T$ by~\eqref{eq:interpretation_x2} and~\eqref{eq:interpretation_x3}. Since $\cD^u(t_*)=0$, this is $X^{\rm VP}_\eta(t_*)$ by~\eqref{eq:VP-fiber-X}. Finally, by \eqref{eq:VP-fiber} and \eqref{eq:N-over-D-bound}, $|v_\eta|=|\cN^u|/(2\sqrt{\cD^u})\le\sqrt2\,|u|+\f12\sqrt{U_1^2+U_2^2}$ a.e. on $\{\cD^u>0\}$, and $v_\eta=0$ on $\cS^u$, whence \eqref{eq:bound-vp-int}.

\proofstep{Proof of (ii).} By \cref{thm:NIVIN}~(ii), $\rho^u\in\cR_{\rm GPc}(X_0,v_c)$, so that the description of the GPc-trajectories follows from \cref{thm:GP-characterization}. Let $t_z\in\cT^u$, for instance with $\rho^u_1(t_z)=0$. Then $2|z_1(t_z)|^2+|z_3(t_z)|^2=0$, so that $\Psi^u(t_z)=e^{i\vartheta}(0,1,0)^T=e^{i\vartheta'}\Phi_{P_-}$ for some $\vartheta,\vartheta'\in\R$, and $X^{\rm GPc}_{v_c,a}(t_z)=P_-$ by~\eqref{eq:GP-fiber-X}. The case $\rho^u_2(t_z)=0$ is similar. Finally, by \eqref{eq:GP-fiber}, $|w_{c,a}|\sqrt{\rho^u_1\rho^u_2}\le\f{|\dot\rho^u_1|}{2\sqrt{\rho^u_1\rho^u_2}}+\tau$ a.e. on $\{\rho^u_1\rho^u_2>0\}$, and $|\dot\rho^u_1|\le2\tau\sqrt{\rho^u_1\rho^u_2}$ since $\cD^u\ge0$ by \cref{lem:Du}, whence \eqref{eq:bound-wc-int}.

\subsection{Proof of \cref{lem:a-expansion}}
\label{ssec:proof-a-expansion}
On $I^-\cup I^+$, we have $\rho_1\le\f12$, so that $\beta<0$.

Since $\rho_1(t_z)=2|z_1(t_z)|^2+|z_3(t_z)|^2=0$ and $\|\Psi^u(t_z)\|=1$, we have $z_1(t_z)=z_3(t_z)=0$ and $|z_2(t_z)|=1$. By~\eqref{explicit_schrodinger} with $\lambda=1$, $v=u$ and $w=0$,
\begin{equation*}
	\dot z_1=-i\lp U_1+2u\rp z_1+i\sqrt2\,\tau z_3, \quad \dot z_2=-i\lp U_2-2u\rp z_2+i\sqrt2\,\tau z_3, \quad \dot z_3=i\sqrt2\,\tau\lp z_1+z_2\rp .
\end{equation*}

\proofstep{First bounds.} Since $|z_1|+|z_2|\le2$, the third equation gives $|\dot z_3|\le2\sqrt2\,\tau$, hence, as $z_3(t_z)=0$, $|z_3(t_z+s)|\le2\sqrt2\,\tau|s|$. As $z_1(t_z)=0$, the Duhamel formula for the first equation gives
\begin{equation*}
	z_1(t)=i\sqrt2\,\tau\int_{t_z}^{t}e^{-i\int_\sigma^t\lp U_1+2u\rp}z_3(\sigma)\,\rd\sigma,
\end{equation*}
so that $|z_1(t_z+s)|\le\sqrt2\,\tau\big|\int_{t_z}^{t_z+s}|z_3|\big|\le2\tau^2s^2$. Finally, the second equation gives $|\dot z_2|\le U_2+2|u|+\sqrt2\,\tau|z_3|$, so that, by~\eqref{eq:omega}, $|z_2(t_z+s)-z_2(t_z)|\le U_2|s|+2\,\omega(|s|)+2\tau^2s^2=O(\hat\omega(|s|))$.

\proofstep{Expansion of $z_3$ and $\rho_1$.} By the above bounds, $\dot z_3(t_z+\sigma)-i\sqrt2\,\tau z_2(t_z)=i\sqrt2\,\tau(z_1(t_z+\sigma)+z_2(t_z+\sigma)-z_2(t_z))=O(\hat\omega(|\sigma|))$. Integrating over $\sigma$ between $0$ and $s$, as $z_3(t_z)=0$, and using $\hat\omega(|\sigma|)\le\hat\omega(|s|)$ for $|\sigma|\le|s|$, we get $z_3(t_z+s)=i\sqrt2\,\tau z_2(t_z)\,s+O(|s|\,\hat\omega(|s|))$. Since $|z_2(t_z)|=1$, $|z_1(t_z+s)|^2=O(s^4)$ and $s^2\le\hat\omega(|s|)$, this gives $\rho_1(t_z+s)=2|z_1(t_z+s)|^2+|z_3(t_z+s)|^2=2\tau^2s^2(1+O(\hat\omega(|s|)))$.

\proofstep{Expansion of $\dot\rho_1$.} By \cref{prop:reg_density} and~\eqref{eq:curent}, $\dot\rho_1=-j_1^{\Psi^u}=-2\sqrt2\,\tau\,\Im\lp z_3(\overline{z_1}-\overline{z_2})\rp$. By the above expansions and $|z_2(t_z)|=1$,
\begin{align*}
	\lp z_3\lp\overline{z_1}-\overline{z_2}\rp\rp(t_z+s)
	&=\lp i\sqrt2\,\tau z_2(t_z)\,s+O\lp|s|\hat\omega(|s|)\rp\rp\lp-\overline{z_2(t_z)}+O\lp\hat\omega(|s|)\rp\rp \\
	&=-i\sqrt2\,\tau s+O\lp|s|\hat\omega(|s|)\rp ,
\end{align*}
so that $\dot\rho_1(t_z+s)=4\tau^2s\lp1+O(\hat\omega(|s|))\rp$, which is the first expansion in~\eqref{eq:rho1-refined}.

\proofstep{Expansions of $\sqrt{\rho_1\rho_2}$ and $\beta$.} Since $\rho_2(t_z+s)=2-\rho_1(t_z+s)=2\lp1+O(s^2)\rp$, the expansion of $\rho_1$ gives $\sqrt{\rho_1\rho_2}\,(t_z+s)=2\tau|s|\lp1+O(\hat\omega(|s|))\rp$, which is the second expansion in~\eqref{eq:rho1-refined}. Together with $1-\rho_1(t_z+s)=1+O(s^2)$, this gives $|\beta(t_z+s)|=2\tau(1-\rho_1)/\sqrt{\rho_1\rho_2}\,(t_z+s)=|s|^{-1}(1+O(\hat\omega(|s|)))$, which is~\eqref{eq:b-refined}.
\subsection{Proof of \cref{prop:dichotomy}}
\label{ssec:proof-dichotomy}
Recall that $\Lambda^+(t)=-\int_t^{t_z+\delta}|\beta|$ for $t\in I^+$ and $\Lambda^-(t)=\int_{t_z-\delta}^t|\beta|$ for $t\in I^-$ (see~\eqref{eq:Lambda}), so that $\dot\Lambda^\pm=|\beta|$.

\proofstep{Proof of (i).} The exceptional branch should behave like the solution $c_1=0$ of the case $v_c=0$, that is like $e^{\Lambda^+/2}(0,1)^T$. We therefore write $\chi=e^{\Lambda^+/2}f$, so that, using $\beta=-|\beta|$, \eqref{eq:linear} reads
\begin{equation}\label{eq:f-system}
	\dot f_1=-|\beta|\,f_1+v_c\,f_2, \qquad \dot f_2=-v_c\,f_1 .
\end{equation}
The exceptional branch is obtained by solving~\eqref{eq:f-system} from the singular time $t_z$ with the datum $f(t_z)=(0,1)^T$. Let $s_0\in(0,\delta]$ such that $\omega(s_0)\le\f14$, and set $I_0:=(t_z,t_z+s_0]$. On the space of bounded continuous functions $f:I_0\to\R^2$, endowed with the norm $\|f\|:=\sup_{I_0}\max(|f_1|,|f_2|)$, consider the map
\begin{equation*}
	(\cF f)_1(t):=\int_{t_z}^te^{-(\Lambda^+(t)-\Lambda^+(\sigma))}v_c(\sigma)f_2(\sigma)\,\rd\sigma, \qquad (\cF f)_2(t):=1-\int_{t_z}^tv_c\,f_1 .
\end{equation*}
Since $\Lambda^+$ is non-decreasing, the exponential is bounded by $1$, and $\cF$ is a contraction with constant $\omega(s_0)\le\f14$. Its unique fixed point $f^\star$ satisfies $\|f^\star\|\le\f43$, $|f^\star_1(t_z+s)|\le\f43\omega(s)$ and $|f^\star_2(t_z+s)-1|\le\f43\omega(s)\le\f13$, and solves~\eqref{eq:f-system} a.e. on $I_0$. Extending $f^\star$ to $I^+$ by solving~\eqref{eq:f-system}, we obtain a solution $\chi^\star:=e^{\Lambda^+/2}f^\star$ of~\eqref{eq:linear} on $I^+$ such that $f^\star(t)\to(0,1)^T$ as $t\to t_z^+$ and $|\zeta^\star|=|f^\star_1|/|f^\star_2|\le2\,\omega(s)$ on $I_0$. Hence $\phi^\star:=2\arctan \zeta^\star-\f\pi2$ satisfies $|\phi^\star(t_z+s)+\f\pi2|\le2|\zeta^\star|\le4\,\omega(s)$, so that $\phi^\star(t)\to-\f\pi2$ as $t\to t_z^+$, since $\omega(s)\to0$.

Let now $\chi=(p,q)^T$ be a solution of~\eqref{eq:linear} on $I^+$. For $v_c=0$, every solution with $c_1\neq0$ behaves like $e^{-\Lambda^+/2}(c_1,0)^T$. We therefore write $\chi=e^{-\Lambda^+/2}g$, so that \eqref{eq:linear} reads
\begin{equation}\label{eq:g-system}
	\dot g_1=v_c\,g_2, \qquad \dot g_2=-v_c\,g_1+|\beta|\,g_2 .
\end{equation}
Integrating~\eqref{eq:g-system} backward from $t_+:=t_z+\delta$, where $\Lambda^+(t_+)=0$, gives
\begin{equation*}
	g_2(t)=e^{\Lambda^+(t)}g_2(t_+)+\int_t^{t_+}e^{\Lambda^+(t)-\Lambda^+(\sigma)}v_c(\sigma)g_1(\sigma)\,\rd\sigma, \qquad g_1(t)=g_1(t_+)-\int_t^{t_+}v_c\,g_2 .
\end{equation*}
The exponential being bounded by $1$, Grönwall's lemma shows that $g$ is bounded on $I^+$. As $v_c$ is integrable and the exponential and $g$ are bounded, these integral equations show that $g\in W^{1,1}(I^+)$. Hence $g$ is continuous up to $t_z$, with $g_1(t)\to\gamma$ for some $\gamma\in\R$ and, by dominated convergence, $g_2(t)\to0$ as $t\to t_z^+$, since $\Lambda^+(t)\to-\infty$. As $\zeta=p/q=g_1/g_2$, we get $\zeta\to\infty$, that is $\phi\to\f\pi2$, as soon as $\gamma\neq0$, so that it remains to determine the solutions for which $\gamma=0$. To identify $\gamma$, we use the Wronskian $W(\chi,\widetilde\chi):=p\widetilde q-q\widetilde p$ of two solutions $\chi=(p,q)^T$ and $\widetilde\chi=(\widetilde p,\widetilde q)^T$ of~\eqref{eq:linear}, which satisfies
\begin{equation*}
	\f{\rd}{\rd t}W(\chi,\widetilde\chi)=\lp\operatorname{tr}M\rp W(\chi,\widetilde\chi)=0 \qquad\text{a.e.},
\end{equation*}
since $M$ is traceless, so that $W(\chi,\widetilde\chi)$ is constant in time, and vanishes if and only if $\chi$ and $\widetilde\chi$ are collinear (that is, if they define the same solution $\zeta$ of~\eqref{eq:riccati}, hence the same longitude $\phi$). As $W(\chi,\chi^\star)=g_1f^\star_2-g_2f^\star_1$ tends to $\gamma$, we get $\gamma=W(\chi,\chi^\star)$. Hence $\gamma=0$ if and only if $\chi$ is collinear to $\chi^\star$, and every other solution satisfies $\phi\to\f\pi2$. Hence $\phi^\star$ is the only solution of~\eqref{eq:theta-ode} on $I^+$ tending to $-\f\pi2$, and every other solution tends to $\f\pi2$.

\proofstep{Proof of (ii).} The proof is similar, on $I^-$ and with the roles of $p$ and $q$ exchanged.
\subsection{Proof of \cref{thm:integrability}}
\label{ssec:proof-integrability}
The expansion~\eqref{eq:w-expansion} is~\eqref{eq:bracket}.

\proofstep{Proof of (i).} If $\phi$ is not the exceptional branch, then $|\sgn(s)+\sin\phi(t_z+s)|\to2$ by \cref{prop:dichotomy}. As $\hat\omega(|s|)\to0$, \eqref{eq:bracket} gives $|w_{c,a}(t_z+s)|=|s|^{-1}\lp1+o(1)\rp$, which is not integrable near $s=0$, and, by~\eqref{eq:rho1-refined}, $\sqrt{\rho_1\rho_2}\,|w_{c,a}|(t_z+s)\to2\tau$.

\proofstep{Proof of (ii).} If $\phi$ is the exceptional branch, then $|\sgn(s)+\sin\phi|=1-\cos\lp\phi+\sgn(s)\f\pi2\rp\le\f12\lp\phi+\sgn(s)\f\pi2\rp^2\le8\,\omega(|s|)^2$ by \cref{prop:dichotomy}, so that $w_{c,a}(t_z+s)=O\lp\hat\omega(|s|)/|s|\rp$ by~\eqref{eq:bracket}. Since $\int_0^\delta\hat\omega(s)/s\,\rd s=\delta+\int_0^\delta\omega(s)/s\,\rd s$, $w_{c,a}$ is integrable on $I^+$ (resp. $I^-$) under~\eqref{eq:dini}.
\subsection{Proof of \cref{cor:integrability}}
\label{ssec:proof-cor-integrability}
The function $w_{c,a}$ is bounded on every compact subset of $\widetilde\Omega_\rho$ by~\eqref{eq:bound-wc-int}, and $\cT^u$ is finite. Hence $w_{c,a}\in L^1(0,T;\R)$ if and only if $w_{c,a}$ is integrable on $I^-$ and $I^+$ at every $t_z\in\cT^u$, and the first assertion follows from \cref{thm:integrability}. If moreover $u,v_c\in L^\infty(0,T;\R)$, then $\omega(s)=O(s)$, so that, along the exceptional branches, $w_{c,a}(t_z+s)=O\lp\hat\omega(|s|)/|s|\rp=O(1)$ by \cref{thm:integrability}~(ii), and $w_{c,a}\in L^\infty(0,T;\R)$. Assume now that $0\notin\cT^u$, let $t_1:=\min\cT^u\in(0,T)$, and let $X_0'\in\bS^2$ with $\rho^{X_0'}=\rho(0)$ and $a\in\cA_\rho(X_0')$ be such that $w_{c,a}\in L^1(0,T;\R)$. By \cref{thm:integrability}~(i), the longitude of $X^{\rm GPc}_{v_c,a}$ must coincide on $J_1\cap I^-$, where $J_1=[0,t_1)$, with the exceptional incoming branch $\phi_\star$ at $t_1$, which is unique by \cref{prop:dichotomy}. By \cref{thm:GP-characterization}, this longitude is $\alpha_a+\f\pi2$ on $J_1$, where $\alpha_a$ is the solution of~\eqref{eq:angle_gp} with $\alpha_a(0)=a_0$, and, by uniqueness for~\eqref{eq:angle_gp}, two different values of $a_0$ give two different solutions on $J_1\cap I^-$. Hence there is at most one value of $a_0$ for which this holds, and, since $X_0'$ is determined by $a_0$ through~\eqref{eq:GP-alpha-initial}, at most one such initial condition $X_0'\in \bS^2$.

\section*{Acknowledgment} This work has benefited from French State support managed by ANR under the France 2030 program through the MaQuI CNRS Risky and High-Impact Research program (RI)2 (grant agreement ANR-24-RRII-0001).
We thank Jari van Gog, Asbj\o rn B\ae kgaard Lauritsen, Mathieu Lewin, Jakob Oldenburg and Julien Toulouse for stimulating discussions within this project.

\section*{AI Statement}
Large Language Models (LLMs) were used in the final stage of preparing the manuscript for proofreading. Claude Code was also used to write the Julia code used to generate the figures.

\printbibliography

\appendix

\section{Derivation of the model}\label{appendix:derivation}

The Hubbard dimer model describes a quantum system with two sites occupied by spin-$1/2$ fermions (recall that electrons are spin-$1/2$ fermions).

\medskip

We denote by $\mathcal H_1$ the one-particle state space of the Hubbard dimer, that is the four-dimensional complex vector space
\begin{equation*}
\mathcal H_1 = \C \; |\!\uparrow \! 0\rangle \oplus \C \; |\!\downarrow \! 0\rangle \oplus \C \; |0\!\uparrow  \rangle \oplus \C \; |0 \!\downarrow \rangle,
\end{equation*}
where the four orthonormal vectors $|\!\uparrow \! 0\rangle$, $|\!\downarrow \! 0\rangle$, $|0\!\uparrow  \rangle$ and $|0 \!\downarrow \rangle$ respectively represent one-particle states in which the particle is on the first site with spin up, on the first site with spin down, on the second site with spin up, and on the second site with spin down. Since two fermions cannot occupy the same quantum state by virtue of the Pauli principle, the Hubbard dimer can host between $0$ and $4$ particles. The zero-particle state space is
\begin{equation*}
\mathcal H_0=\C |{\rm vac}\rangle,
\end{equation*}
where $|{\rm vac}\rangle$ is the vacuum (a normalized quantum state with no particle), the $N$-particle state spaces, $2 \le N \le 4$, are given by
\begin{equation*}
\mathcal H_N=\bigwedge^N \mathcal H_1
\end{equation*}
(antisymmetrized tensor product of $N$ copies of the one-particle state space), and the Fock space is defined as
\begin{equation*}
\mathcal F := \mathcal H_0 \oplus  \mathcal H_1 \oplus \mathcal H_2 \oplus \mathcal H_3 \oplus \mathcal H_4.
\end{equation*}
We have $\dim(\mathcal H_N) = \begin{pmatrix} 4 \\ N \end{pmatrix}$ and $\dim(\mathcal F)=2^4$.

\medskip

The creation and annihilation operators $\widehat a_{i\sigma}^\dagger$ and $\widehat a_{i\sigma}$ are the linear operators on the Fock space $\mathcal F$ describing the creation and annihilation of a particle at site $i \in \{1,2\}$ with spin $\sigma \in \{\uparrow,\downarrow\}$. They satisfy the canonical anticommutation relations (CAR)
\begin{equation*}
\{\widehat a_{i\sigma}^\dagger,\widehat a_{j\sigma'}^\dagger\}=0, \qquad \{\widehat a_{i\sigma},\widehat a_{j\sigma'}\}=0, \qquad \{\widehat a_{i\sigma}^\dagger,\widehat a_{j\sigma'}\}=\delta_{ij}\delta_{\sigma\sigma'},
\end{equation*}
and are such that
\begin{equation*}
\widehat a_{1\uparrow}^\dagger |{\rm vac}\rangle=|\!\uparrow \! 0\rangle, \qquad \widehat a_{1\downarrow}^\dagger |{\rm vac}\rangle=|\!\downarrow \! 0\rangle, \qquad \widehat a_{2\uparrow}^\dagger |{\rm vac}\rangle=|0\!\uparrow  \rangle, \qquad  \widehat a_{2\downarrow}^\dagger |{\rm vac}\rangle=|0 \!\downarrow \rangle.
\end{equation*}
The spin-density operators $\widehat n_{j\sigma}$, the density operators $\widehat n_{j}$, and the number operator $\widehat N$ are the linear operators on $\mathcal F$ respectively defined by
$\widehat n_{j\sigma}:=\widehat a_{j\sigma}^\dagger\widehat a_{j\sigma}$, $\widehat n_{j}=\widehat n_{j\uparrow}+\widehat n_{j\downarrow}$, and $\widehat N =\widehat n_{1}+\widehat n_{2}$. The $N$-particle sector is $\widehat N$-invariant and it holds $\widehat N|_{\mathcal H_N}= N I_{\mathcal H_N}$. We finally introduce the spin operators $\widehat S_z$ (total $z$-component), $\widehat S_+$ (total spin raising), $\widehat S_-$ (total spin lowering), and $\widehat S^2$ (total spin square) defined as the linear operators on $\mathcal F$ given by
\begin{equation*}
\widehat S_z=\frac 12 \sum_{j=1}^2 (\widehat n_{j\uparrow} - \widehat n_{j\downarrow}), \enspace
\widehat S_+ = \sum_{j=1}^2 \widehat a_{j\uparrow}^\dagger \widehat a_{j\downarrow}, \enspace
\widehat S_- = \sum_{j=1}^2 \widehat a_{j\downarrow}^\dagger \widehat a_{j\uparrow}, \enspace
\widehat S^2= \widehat S_z^2+\frac 12 \left(\widehat S_+\widehat S_-+\widehat S_-\widehat S_+ \right).
\end{equation*}
Note that
\begin{equation*}
[\widehat N,\widehat S^2]=0, \quad [\widehat N,\widehat S_z]=0, \quad [\widehat S^2,\widehat S_z]=0.
\end{equation*}

The two-site Hubbard Hamiltonian with time-dependent external potential $(u_1(t),u_2(t))$ is the self-adjoint operator on $\mathcal F$ defined by
\begin{equation*}
\widehat H(t) = - \tau \sum_{\sigma \in \{\uparrow,\downarrow\}} (\widehat a_{1\sigma}^\dagger \widehat a_{2\sigma} + \widehat a_{2\sigma}^\dagger \widehat a_{1\sigma}) + u_1(t) \widehat n_1 + u_2(t) \widehat n_2 + U_1 \widehat n_{1\uparrow} \widehat n_{1\downarrow}+ U_2 \widehat n_{2\uparrow} \widehat n_{2\downarrow},
\end{equation*}
where $\tau > 0$ is the hopping parameter describing quantum tunneling between the two sites, $u_j(t) \in \R$ the external potential on site $j$, and $U_j$ the on-site interaction parameter on site~$j$.

\medskip

The Hamiltonian $\widehat H(t)$ conserves both particle number and spin, in the sense that
\begin{equation*}
[\widehat H(t),\widehat N]=0, \qquad [\widehat H(t),\widehat S^2]=0, \qquad [\widehat H(t),\widehat S_z]=0.
\end{equation*}
It follows that the two-particle spin-singlet sector of the Fock space defined as
\begin{equation*}
\mathcal H_{2,0,0}:= \left\{ \Psi \in \mathcal F \; | \; \widehat N \Psi=2\Psi, \; \widehat S^2\Psi=0, \; \widehat S_z\Psi=0 \right\}
\end{equation*}
(the second condition actually implies the third one) is $\widehat H(t)$-invariant. A convenient orthonormal basis of $\mathcal H_{2,0,0}$ is
\begin{equation*}
	|20\rangle := \widehat a_{1\uparrow}^\dagger \widehat a_{1\downarrow}^\dagger |{\rm vac}\rangle, \quad
|02\rangle := \widehat a_{2\uparrow}^\dagger \widehat a_{2\downarrow}^\dagger |{\rm vac}\rangle, \quad
|11\rangle := \frac{\widehat a_{1\uparrow}^\dagger \widehat a_{2\downarrow}^\dagger - \widehat a_{1\downarrow}^\dagger\widehat a_{2\uparrow}^\dagger }{\sqrt 2} |{\rm vac}\rangle.
\end{equation*}
In state $|20\rangle$ (resp. $|02\rangle$), the two particles are on site $1$ (resp. $2$) with opposite spins, while in state $|11\rangle$ there is one particle on each site and the spin state is chosen in such a way that the total spin is zero.

\medskip

As $\mathcal H_{2,0,0}$ is $\widehat H(t)$-invariant for all $t$, it is also invariant by the Schr\"odinger dynamics
\begin{equation}\label{eq:TDSE_Fock}
i \dot\Psi(t) = \widehat  H(t) \Psi(t).
\end{equation}
Restricting ourselves to $\mathcal H_{2,0,0}$, the $16$-dimensional Schr\"odinger dynamics~\eqref{eq:TDSE_Fock} reduces to a $3$-dimensional Schr\"odinger dynamics, which can be written in the basis $(|20\rangle,|02\rangle,|11\rangle)$ as
\begin{equation*}
i \dot\Psi_{2,0,0}(t) = H(t) \Psi_{2,0,0}(t),
\end{equation*}
where the matrix $H(t)$ (representation in the above basis of the restriction of $\widehat H(t)$ to the invariant subspace $\mathcal H_{2,0,0}$) is given by
\begin{equation*}
H(t) := \begin{pmatrix} 2u_1(t)+U_1 & 0 & -\sqrt 2 \tau \\ 0 & 2u_2(t)+U_2 & -\sqrt 2 \tau \\  -\sqrt 2 \tau  &  -\sqrt 2 \tau  & u_1(t)+u_2(t) \end{pmatrix}.
\end{equation*}
Recall that replacing $H(t)$ by $H(t)-E(t)I_3$ amounts to performing a gauge transform allowed by the fundamental laws of quantum mechanics. Instead of working with $H(t)$, we can therefore work with $H(t)-(u_1(t)+u_2(t))I_3$, that is with
\begin{equation*}
H_u(t) := \begin{pmatrix} 2u(t)+U_1 & 0 & -\sqrt 2 \tau \\ 0 & -2u(t)+U_2 & -\sqrt 2 \tau \\  -\sqrt 2 \tau  &  -\sqrt 2 \tau  & 0 \end{pmatrix},
\end{equation*}
where $u(t):=(u_1(t)-u_2(t))/2$ is half the potential offset between the two sites. With this gauge transform, the restriction of the dynamics~\eqref{eq:TDSE_Fock} to the invariant subspace $\mathcal H_{2,0,0}$ reads as~\eqref{eq:interacting_dynamics}.

\medskip

We have just shown that the quantum dynamics for the Hubbard dimer in the two-particle spin-singlet sector reduces to a time-dependent Schr\"odinger equation on $\C^3$. Let us now proceed to the description of the two-particle spin-singlet Slater states. In the general case, a two-particle spin-singlet Slater state is described by a single one-particle orbital, that is a normalized vector $\varphi$ of the spinless one-particle state space, representing the state of the electron pair composed of a spin-up particle and a spin-down particle. In the case of the Hubbard dimer, the spinless one-particle state space is $\C^2$ and a normalized orbital $\varphi = (\varphi_1,\varphi_2)^T \in \C^2$ represents a state in which the density at site $j$ is $2|\varphi_j|^2$. More precisely, the wavefunction $\Phi$ associated to $\varphi$ is
\begin{equation}\label{eq:Phi}
\Phi = \left( \varphi_1 \widehat a_{1\uparrow}^\dagger + \varphi_2 \widehat a_{2\uparrow}^\dagger \right) \left( \varphi_1 \widehat a_{1\downarrow}^\dagger + \varphi_2 \widehat a_{2\downarrow}^\dagger \right) |{\rm vac}\rangle
= \varphi_1^2 |20\rangle + \varphi_2^2 |02\rangle + \sqrt 2 \varphi_1 \varphi_2 |11\rangle,
\end{equation}
where we used the anticommutation relation $\{\widehat a_{2\uparrow}^\dagger,\widehat a_{1\downarrow}^\dagger\}=0$ to get the last term. Taking into account the gauge invariance $\Psi \mapsto e^{i\vartheta}\Psi$, we can always assume that $\varphi_1 \ge 0$, and rewrite $\varphi$ as
\begin{equation}\label{eq:varphi}
\varphi = \begin{pmatrix} \cos \frac \theta 2 \\ e^{i\phi} \sin \frac \theta 2 \end{pmatrix},
\end{equation}
with $\theta \in [0,\pi]$ and $\phi \in \R/(2\pi\Z)$. The writing is unique if $0 < \varphi_1 < 1$ that is if $\theta \in (0,\pi)$. In other words, the set of two-particle spin-singlet Slater states of the Hubbard dimer can be identified with the sphere $\bS^2$, called in this context the Bloch sphere. From a more abstract viewpoint, this result says that the Grassmann manifold ${\rm Gr}_{\C}(1,2)$ can be identified with the sphere $\bS^2$, a well-known result in differential geometry. In summary, if $X = \begin{pmatrix} \sin\theta \cos\phi \\  \sin\theta \sin\phi \\ \cos\theta \end{pmatrix} \in \bS^2$ is a point of the Bloch sphere, the representation in the basis $(|20\rangle,|02\rangle,|11\rangle)$ of the unique two-particle spin-singlet Slater state associated to $X$ via~\eqref{eq:Phi}-\eqref{eq:varphi} is given by
\begin{equation*}
\Phi_X = \begin{pmatrix} \frac{1+\cos\theta}2 \\ \frac{1-\cos\theta}2 e^{2i\phi} \\  \frac{\sin\theta}{\sqrt 2} e^{i\phi}\end{pmatrix}.
\end{equation*}
Let us finally write down the GP Kohn--Sham dynamics introduced in~\cite{geoPRA}. It is convenient to write this dynamics in terms of one-body reduced density matrices (1-RDM). The 1-RDM associated with the two-particle spin-singlet Slater state $\Phi_X$ is the matrix
\begin{equation*}
\gamma_X = \begin{pmatrix} \cos \frac \theta 2 \\ e^{i\phi} \sin \frac \theta 2 \end{pmatrix} \begin{pmatrix} \cos \frac \theta 2 \\ e^{i\phi} \sin \frac \theta 2 \end{pmatrix}^*
= \begin{pmatrix} \cos^2\left( \frac \theta 2 \right) &   e^{-i\phi} \sin \frac \theta 2  \cos \frac \theta 2 \\ e^{i\phi} \sin \frac \theta 2  \cos \frac \theta 2  & \sin^2\left( \frac \theta 2 \right) \end{pmatrix}.
\end{equation*}
The above expression can be rewritten in terms of the Cartesian coordinates of $X$ as
\begin{equation*}
\forall X = \begin{pmatrix} x_1 \\ x_2 \\ x_3 \end{pmatrix} \in \bS^2, \qquad \gamma_X=
 \begin{pmatrix} \frac{1+x_3}2 & \frac{x_1-ix_2}2 \\ \frac{x_1+ix_2}2  & \frac{1-x_3}2 \end{pmatrix} \in {\rm Gr}_{\C}(1,2).
\end{equation*}
The GP Kohn--Sham dynamics then reads, up to the sign convention for $w$ (see~\cite{geoPRA}),
as
\begin{equation}\label{eq:NIdyn-W}
i \dot \gamma(t) =  \left[ {\mathcal K} + v(t) \mathcal V - i [\mathcal W(t), \gamma(t)], \gamma(t)\right] , \quad \gamma(0)=\gamma_{X_0},
\end{equation}
with
\begin{equation*}
{\mathcal K} := \begin{pmatrix} 0 & -\tau \\ - \tau & 0 \end{pmatrix}, \qquad {\mathcal V} := \begin{pmatrix} 1 & 0 \\ 0 & -1 \end{pmatrix}, \qquad {\mathcal W}(t) := \begin{pmatrix} 2w_1(t) & 0 \\ 0 & 2w_2(t) \end{pmatrix},
\end{equation*}
where $w_j(t)$ is the geometric potential at site $j$. Since $\mathcal W(t)$ is only involved through the commutator $[\mathcal W(t),\gamma(t)]$, we can replace $\mathcal W(t)$ with the traceless matrix
\begin{equation*}
\widetilde{\mathcal W}(t) :=  \mathcal W(t)-\frac{{\rm Tr}(\mathcal W(t))}2 I_2
= w(t) \mathcal V , \quad \mbox{with} \quad w(t):=w_1(t)-w_2(t).
\end{equation*}
Finally, the dynamics~\eqref{eq:NIdyn-W} is equivalent to
\begin{equation}\label{eq:NIdyn}
i \dot \gamma(t) =  \left[ {\mathcal K} + v(t) \mathcal V - i w(t) [\mathcal V, \gamma(t)], \gamma(t)\right] , \quad \gamma(0)=\gamma_{X_0}.
\end{equation}
Since $\gamma_{X_0} \in {\rm Gr}_{\C}(1,2)$ and ${\mathcal K} + v(t) \mathcal V - i w(t) [\mathcal V, \gamma(t)]$ is Hermitian, the solution to~\eqref{eq:NIdyn} at time $t$ belongs to
${\rm Gr}_{\C}(1,2)$ at any time $t$, and can thus be written as $\gamma(t)=\gamma_{X(t)}$ with $X(t)\in\bS^2$. We then have
\begin{equation*}
i \begin{pmatrix} \frac{\dot x_3}2 & \frac{\dot x_1-i\dot x_2}2 \\ \frac{\dot x_1+i\dot x_2}2  & -\frac{\dot x_3}2 \end{pmatrix} =
\left[ \begin{pmatrix} v & -\tau - w(x_2+ix_1) \\ - \tau-w (x_2-ix_1)  & -v \end{pmatrix},  \begin{pmatrix} \frac{1+x_3}2 & \frac{x_1-ix_2}2 \\ \frac{x_1+ix_2}2  & \frac{1-x_3}2 \end{pmatrix} \right],
\end{equation*}
or equivalently
\begin{align*}
 & \dot x_1 = -2vx_2 + 2 w x_1x_3, \\
 & \dot x_2 = 2v x_1 +2 \tau x_3 + 2w x_2x_3, \\
   & \dot x_3 = -2 \tau x_2 - 2 w( x_1^2+x_2^2).
\end{align*}
The above dynamics can be rewritten in the more intrinsic form
\begin{equation*}
\dot X(t) = (-2\tau e_1 + 2v(t) e_3 + 2w(t)  (e_3 \times X(t))) \times X(t),
\end{equation*}
which is~\eqref{eq:gpc_X} with $v_c=v$ and $w_c=w$, and~\eqref{eq:gp_X} for $v=0$.

\section{Controllability of the interacting Hubbard dimer}
\label{appendix:controllability}

For completeness, we provide a proof that the interacting Hubbard dimer is pure-state controllable, which implies in particular that there exists $T > 0$ such that for any normalized initial state $\Psi_0^{\rm int} \in \C^3$ and any normalized state $\Psi_*$ belonging to the set of singular states for either VP or GP, there exist $0<t_* \le T$ and $u \in L^1(0,T;\R)$ (we could actually require $u \in C^\infty_{\rm c}(0,T;\R)$) such that $\Psi^u(t_*)=\Psi_*$. Recall that $\Psi_*$ belongs to the set of singular states for VP (resp. GP) if $\Psi_*$ is a Slater state and if the representative of $\Psi_*$ on the Bloch sphere is on the great circle $\cC_s$, that is $\Psi_*\in\cK_s$ by \cref{lem:Du} (resp. is one of the poles $P_\pm$, that is $\Psi_*=(1,0,0)^T$ or $\Psi_*=(0,1,0)^T$ up to a global phase). The proof only uses the fact that the interaction matrix $U$ has a positive trace, so that we state the result for the more general dynamics~\eqref{eq:schrodinger} with $\lambda>0$, $v=u$ and $w=0$, the interacting dynamics~\eqref{eq:interacting_dynamics} corresponding to $\lambda=1$.

\medskip

\begin{proposition}\label{prop:control}
	Let $\lambda>0$ and $M>0$. There exists $T>0$ such that for all normalized states $\Psi_0,\Psi_1\in \C^3$, there exist $0<t_* \le T$ and $u\in \cU_{\rm bb}^M(0,T)$, where $\cU_{\rm bb}^M(0,T)$ is the class of \textit{bang--bang} functions defined by
	\begin{align*}
		\cU^M_{\rm bb}(0,T) := & \Bigg\{ u=\sum_{i=0}^{N}u_{i}\1_{[t_i,t_{i+1})}:[0,T]\to \R, \quad N\in \N, \quad u_0,\dots,u_N \in\{-M,M\},\\ & \qquad \quad 0=t_0<t_1<\dots<t_N<t_{N+1}=T\Bigg\},
	\end{align*}
	such that the unique solution $\Psi\in W^{1,1}(0,T;\C^3)$ to~\eqref{eq:schrodinger} with $v=u$ and $w=0$ satisfies $\Psi(0)=\Psi_0$ and $\Psi(t_*)=\Psi_1$.
\end{proposition}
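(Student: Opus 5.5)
We plan to prove \cref{prop:control} by the standard Lie-algebraic route for bilinear control systems on compact groups, together with a density argument for bang--bang controls. The dynamics is $i\dot\Psi=(A+uV)\Psi$ with $A:=K+\lambda U$, so the two available constant Hamiltonians are $H_\pm:=A\pm MV$. We consider the reachable set of the right-invariant system on $SU(3)$ (after removing traces, which only contribute global phases): the semigroup $\cS$ generated by $\{e^{-itH_+},e^{-itH_-}\}_{t\ge0}$. Since $SU(3)$ is compact and each one-parameter subgroup $e^{-itH_\pm}$ has closure a torus, negative times are approximated by positive ones. The closure of $\cS$ is therefore the closed subgroup generated by these two subgroups, which is the connected Lie subgroup with Lie algebra $\fL:=\mathrm{Lie}\{iH_+,iH_-\}=\mathrm{Lie}\{iA,iV\}$. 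A standard result (Jurdjevic--Sussmann) then gives $\cS=\exp(\fL)$-group exactly, not only its closure, for compact groups. Hence it suffices to show that $\fL$ acts transitively on the unit sphere of $\C^3$ modulo phase, i.e. that $\fL\supseteq\mathfrak{su}(3)$, or at least that $\fL$ equals $\mathfrak{su}(3)$ or $\mathfrak{u}(3)$ up to the identity direction.

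The key step is the rank computation. We take traceless parts $A_0:=A-\frac{\tr A}{3}I$ and $V$ (already traceless), and compute iterated commutators $[V,A]$, $[V,[V,A]]$, $[A,[V,A]]$, and so on. Since $V=\mathrm{diag}(2,-2,0)$ has distinct eigenvalues with distinct pairwise differences $4,2,-2$, taking $\mathrm{ad}_V$ repeatedly on $K$ separates the matrix elements $K_{13}$ and $K_{23}$. Because $\mathrm{ad}_V$ multiplies the $(1,3)$ entry by $2$ and the $(2,3)$ entry by $-2$, while $\mathrm{ad}_V^2$ multiplies both by $4$, combining $\mathrm{ad}_V K$ and $\mathrm{ad}_V^3K$ does not directly separate them. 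We therefore also use $U$: $\mathrm{ad}_{A}$ brings in diagonal differences $U_1\neq$ generic. We plan to show that the elementary generators $E_{13}$ and $E_{23}$, in their Hermitian and anti-Hermitian combinations, lie in $\fL$. From these, $E_{12}$-type generators follow by commutators, and together they span $\mathfrak{su}(3)$. This is where $\lambda\,\tr U=\lambda(U_1+U_2)>0$ enters, as the paper announces. Without $U$ there is the conserved Slater structure (\cref{prop:slater_set}(i)), and $\fL$ is the image of $\mathfrak{su}(2)$, which is not transitive. We would exhibit the breaking explicitly: the element $[[V,K],K]$ compared with its $U$-counterpart $[[V,K],U]$ produces a diagonal matrix not proportional to $V$, whose entries involve $U_1+U_2$. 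With $V$ and $\mathrm{diag}(1,1,-2)$ we get the full Cartan subalgebra, and the root vectors then follow from the distinct roots.

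Next we handle the uniform time and the bang--bang structure. Transitivity gives, for each pair $\Psi_0,\Psi_1$, a finite concatenation $e^{-it_NH_{\pm}}\cdots e^{-it_1H_\pm}$ mapping $\Psi_0$ to $\Psi_1$ up to phase, with $t_i\ge0$. Since $\cS=SU(3)$ and by compactness, a standard argument (the reachable sets at time $\le T$ form an increasing open cover of the compact group) gives a uniform bound $T$ on the total time. A short interval with either control value pads the time up to $T$ after $t_*$. The phase is irrelevant since states are defined up to phase. If one insists on an exact vector, one absorbs it using $\mathfrak{u}(3)$ directions, or simply accepts equality of states, as in the statement's use. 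We expect the main obstacle to be the explicit Lie bracket rank verification for all $U_1,U_2>0$ and $\tau>0$. If the direct bracket computation becomes messy, we would instead show that the only matrices commuting with both $A$ and $V$ are scalars, and that $\fL$ preserves no nontrivial real or symplectic structure except for the Slater one, which is broken when $U_1+U_2>0$. Classification of subalgebras of $\mathfrak{su}(3)$ acting irreducibly ($\mathfrak{so}(3)$ or all of $\mathfrak{su}(3)$) then reduces the problem to excluding $\mathfrak{so}(3)$-type algebras, since $\mathfrak{so}(3)\cong\mathfrak{su}(2)$ would be the Slater-preserving case.
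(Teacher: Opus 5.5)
You have the same framework as the paper: a right-invariant system on a compact group, the Jurdjevic--Sussmann theorem, compactness for a uniform $T$, and padding the bang--bang control after $t_*$. There is a genuine gap, however. The one step that carries the proof, namely $\mathrm{Lie}(iA,iV)=\mathfrak{u}(3)$, is only announced, and the mechanism you sketch for it fails as written.

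\emph{What goes wrong.} Your generators are only $iA=i(K+\lambda U)$ and $iV$, yet your symmetry-breaking brackets use $K$ and $U$ separately, which you never place in $\fL$. The brackets also do not do what you claim:
\begin{itemize}
\item $[[V,K],K]=4\tau^2V$ is proportional to $V$, as it must be in the $\lambda=0$, $\mathfrak{su}(2)$ situation.
\item $[[V,K],U]=2\sqrt2\,\tau\bigl(U_1(E_{13}+E_{31})-U_2(E_{23}+E_{32})\bigr)$ is not diagonal at all.
\end{itemize}

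\emph{The missing idea} is precisely the degeneracy you noticed and then set aside. Since $\mathrm{ad}_V$ kills diagonal matrices and $\mathrm{ad}_V^2$ multiplies both the $(1,3)$ and $(2,3)$ entries by $4$, you get $[V,[V,A]]=[V,[V,K]]=4K$. Hence $iK\in\fL$, and then $iU\in\fL$ because $\lambda>0$.

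Only then can $U$ separate the two root directions. The elements $iK$ and $i[[V,K],U]$ are independent combinations of $i(E_{13}+E_{31})$ and $i(E_{23}+E_{32})$ exactly when $U_1+U_2\neq0$. This is the paper's comparison of $[-iK,-iV]$ and $[-iK,-iU]$. The rest then goes through:
\begin{itemize}
\item $\mathrm{ad}_V$ produces $E_{13}-E_{31}$ and $E_{23}-E_{32}$.
\item $[i(E_{13}+E_{31}),E_{13}-E_{31}]=2i(E_{33}-E_{11})$ gives the second Cartan direction you wanted.
\item Brackets of the $(1,3)$ and $(2,3)$ root vectors give the $(1,2)$ ones.
\item $\tr(iA)=i\lambda(U_1+U_2)\neq0$ upgrades $\mathfrak{su}(3)$ to $\mathfrak{u}(3)$. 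This is what yields the exact equality $\Psi(t_*)=\Psi_1$ demanded by the statement, not merely equality up to a phase.
\end{itemize}

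\emph{Your fallback} would also close the argument, by a route genuinely different from the paper's, but it too must be carried out. Set $\tilde A:=A-\frac{\tr A}{3}I_3$.
\begin{itemize}
\item The commutant of $\{\tilde A,V\}$ is scalar, since $V$ has simple spectrum and $K_{13},K_{23}\neq0$. So $\mathrm{Lie}(i\tilde A,iV)\subset\mathfrak{su}(3)$ acts irreducibly.
\item It is therefore either $\mathfrak{su}(3)$ or a $U(3)$-conjugate of $\mathfrak{so}(3)$, and the latter preserve a nondegenerate symmetric bilinear form $\Psi^TS\Psi'$.
\item The condition $VS+SV=0$ forces $S=s_{33}E_{33}+s_{12}(E_{12}+E_{21})$.
\item The condition $\tilde AS+S\tilde A=0$ then forces $s_{12}=-s_{33}$ (the Slater form $z_3^2-2z_1z_2$) together with $\lambda(U_1+U_2)s_{33}=0$. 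Hence $S=0$, excluding the $\mathfrak{so}(3)$ case.
\end{itemize}

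This version explains why $U_1+U_2$ is the relevant quantity: it is the coefficient of the source term in the equation for $\Delta^{\rm Slat}$ in the proof of \cref{prop:slater_set}. It also never needs to separate $K$ from $U$. The price is invoking the classification of irreducible subalgebras of $\mathfrak{su}(3)$. The paper instead uses explicit brackets plus the irreducibility of the adjoint action of $\mathfrak{so}(3)$ on $i\,\mathrm{Sym}_0(\R^3)$. As submitted, neither computation is done, so the central claim remains unproved.
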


\begin{proof}
Following the terminology of \cite{albertiniNotionsControllabilityBilinear2003}, \cref{prop:control} is concerned with the pure-state controllability of the system, that is a controllability result at the level of the wavefunction $\Psi(t)$. We can in fact prove a stronger result, namely the operator-controllability of the system, a property which can be formulated as follows. Consider the propagator $\cU^u(t)$ associated with the time-dependent Hamiltonian $H^\lambda_{u,0}(t)=K+\lambda U+u(t)V$, defined by
\begin{equation*}
	\begin{cases}
		i\,\dot \cU^u(t)  = (K+\lambda U+ u(t) V) \,\cU^u(t), & t \in (0,T], \\
		\cU^u(0) = I_3.
	\end{cases}
\end{equation*}
We have that for all $t \in [0,T]$, $\Psi(t)=\cU^u(t)\Psi(0)$. The dynamics defined by $H^\lambda_{u,0}(t)$ is called operator-controllable if there exists $T > 0$ such that for all $U_* \in U(3)$, there exist $0<t_* \le T$ and $u \in L^1(0,T;\R)$ such that $\cU^u(t_*)=U_*$. If we can prove operator-controllability for $u$ in the class $\cU_{\rm bb}^M(0,T)$, then the pure-state controllability result in \cref{prop:control} readily follows. In fact, by \cite[Thm~4]{albertiniNotionsControllabilityBilinear2003}, operator-controllability and pure-state controllability are actually equivalent in our case since the dimension $d=3$ of the system under consideration is odd.

By the standard results in \cite[Thm~7.1]{jurdjevicControlSystemsLie1972}, it follows from the fact that the Lie group $U(3)$ is connected and compact that the system is operator-controllable if and only if the Lie algebra $\fL:=\textrm{Lie}(-i(K+\lambda U),-iV)$ generated by $-i(K+\lambda U)$ and $-iV$ is equal to the full Lie algebra $\mathfrak{u}(3)$. An upper bound of the minimal time $T>0$ for which the system is operator-controllable is then given by \cite[Thm~7.2]{jurdjevicControlSystemsLie1972}. Since
\begin{equation*}
	[-iV,[-i(K+\lambda U),-iV]] = i[V,[K,V]] = -4iK,
\end{equation*}
we first obtain that $-iK\in\fL$, and therefore, since $\lambda\neq0$,
\begin{equation*}
	\fL=\textrm{Lie}(-iK,-iU,-iV).
\end{equation*}

Checking that $\fL = \mathfrak{u}(3)$ can be done by computing several other Lie brackets of $3\times 3$ matrices. To keep computations to a minimum, we use the two following decompositions
\begin{equation} \label{eq:dec_u(3)}
	\mathfrak{u}(3) = i\R I_3 \oplus\mathfrak{su}(3) \quad \text{ and } \quad  \mathfrak{su}(3) = \mathfrak{so}(3) \oplus i \textrm{Sym}_0(\R^3)
\end{equation}
where $\textrm{Sym}_0(\R^3)$ is the space of the $3 \times 3$ real symmetric matrices with zero trace. These decompositions come from the fact that any matrix $M \in \mathfrak{u}(3)$ is of the form
\begin{equation*}
	M = \begin{pmatrix} ib_{11} & a_{12}+ib_{12} & a_{13}+ib_{13} \\
	-a_{12}+ib_{12}  & ib_{22} & a_{23}+ib_{23} \\ -a_{13}+ib_{13} &
	-a_{23}+ib_{23} & ib_{33} \end{pmatrix}\quad \text{with} \quad  a_{ij}, b_{ij} \in \R,
\end{equation*}
and can thus be decomposed as
\begin{equation*}
	\setlength{\arraycolsep}{3.5pt}
	M = i \f{b_{11}+b_{22}+b_{33}}3 I_3 + \begin{pmatrix} 0 & a_{12}  & a_{13}  \\
-a_{12}   & 0 & a_{23} \\ -a_{13} &
-a_{23} & 0 \end{pmatrix} + i \begin{pmatrix} \f{2b_{11}-b_{22}-b_{33}}3  & b_{12} & b_{13} \\ b_{12} &
\f{2b_{22}-b_{33}-b_{11}}3  & b_{23} \\ b_{13} &
	b_{23} & \f{2b_{33}-b_{11}-b_{22}}3 \end{pmatrix}.
\end{equation*}
The second decomposition in~\eqref{eq:dec_u(3)} is a Cartan decomposition of the simple Lie algebra $\mathfrak{su}(3)$. With $\mathfrak{l}:=\mathfrak{so}(3)$  and $\mathfrak{p}:= i \textrm{Sym}_0(\R^3)$, this means $[\mathfrak{l},\mathfrak{l}]\subseteq \mathfrak{l}$, $[\mathfrak{l},\mathfrak{p}]\subseteq \mathfrak{p}$ and $[\mathfrak{p},\mathfrak{p}]\subseteq \mathfrak{l}$.

\medskip

\proofstep{Proof that $\mathfrak{so}(3)\subset \fL$.}
Considering the basis
\begin{equation*}
	A_1:=\begin{pmatrix} 0 & -1 & 0 \\ 1 & 0 & 0 \\ 0 & 0 & 0 \end{pmatrix}, \qquad
A_2:=\begin{pmatrix} 0 & 0 & -1 \\ 0 & 0 & 0 \\ 1 & 0 & 0 \end{pmatrix}, \qquad
A_3:=\begin{pmatrix} 0 & 0 & 0 \\ 0 & 0 & -1 \\ 0 & 1 & 0 \end{pmatrix}
\end{equation*}
of $\mathfrak{so}(3)$ (seen as a vector space), we note that for a diagonal matrix $D=\operatorname{diag}(d_1,d_2,0)$ with $d_1,d_2\in \R$, we have $[-iK,-iD]=\sqrt{2}\tau(d_1 A_2 + d_2A_3)$, so that
\begin{equation*}
	[-iK,-iV] = 2\sqrt{2}\tau( A_2 - A_3), \qquad [-iK,-iU] = \sqrt{2}\tau(U_1 A_2 + U_2A_3).
\end{equation*}
As $\tau > 0$, it follows that $A_2-A_3$ and $U_1 A_2 + U_2A_3$ are in $\fL$. As $U_1+U_2 > 0$, this implies that both $A_2$ and $A_3$ are in $\fL$. Lastly, we get $A_1 =[A_2,A_3]\in \fL$, and conclude that $\mathfrak{so}(3) \subset \fL$.

\medskip

\proofstep{Proof that $\mathfrak p:=i{\rm Sym}_0(\R^3)\subset \fL$.} Let us first show that the adjoint representation
\begin{equation*}
	{\rm ad}: \mathfrak{so}(3) \ni A \mapsto [A,\bullet] \in  {\cL}({\mathfrak p})
\end{equation*}
is irreducible. By Weyl's complete reducibility theorem, this representation can be decomposed as a direct sum of irreducible representations of $\mathfrak{so}(3)$. This leads to a decomposition $\mathfrak{p} = \bigoplus_{k=1}^K V_k$, where $V_1,\dots,V_K$ are ad-invariant subspaces of $\mathfrak{p}$, with
\begin{equation}\label{eq:irreps}
	\sum_{k=1}^K\dim V_k =  \dim \mathfrak{p} = 5.
\end{equation}
Since the real irreducible representations of $\mathfrak{so}(3)$ of dimension at most $5$ have dimension $1$, $3$, $4$ or $5$, the only possibilities to get~\eqref{eq:irreps} are either $K=1$ (in which case ${\rm ad}$ is irreducible), or $K=2$ ($4+1=5$), or $K=3$ ($1+1+3=5$), or $K=5$ ($1+1+1+1+1=5$). To rule out the possibility of ${\rm ad}$ being reducible, it suffices to show that $\mathfrak{p}$ does not contain a one-dimensional ${\rm ad}$-invariant subspace. Assume that there is one, that is that there exists a nonzero matrix $P\in \mathfrak{p}$ such that for every $A\in \mathfrak{so}(3)$, there exists $\mu(A)\in \R$ such that $[A,P]=\mu(A)P$. Then, $\mu:\mathfrak{so}(3)\to \R$ would be a Lie algebra homomorphism, and since $\mathfrak{so}(3)$ is simple, we would have $\mu(\mathfrak{so}(3))=\mu([\mathfrak{so}(3),\mathfrak{so}(3)])=[\mu(\mathfrak{so}(3)),\mu(\mathfrak{so}(3))]=0$. This shows that $P$ would commute with every element of $\mathfrak{so}(3)$. As ${\rm SO}(3)$ is compact and connected, $\exp:\mathfrak{so}(3)\to {\rm SO}(3)$ is surjective so that the matrix $P$ would commute with every rotation. As a consequence, it would be a multiple of the identity. However, since its trace is zero, we would have $P=0$. We reach a contradiction. Therefore ${\rm ad}$ is irreducible.

Now, we observe that $-iV\in \fL\cap \mathfrak{p}$, so that $\mathfrak{p}\cap \fL\neq \{0\}$. Therefore, the vector space $W:=\mathfrak{p}\cap \fL$ is a non-trivial ${\rm ad}$-invariant subspace of $\mathfrak{p}$ and is therefore equal to $\mathfrak{p}$. Therefore, $\mathfrak{p}\subset \fL$.

\medskip

\proofstep{Proof that $iI_3\in \fL$.}
As $iU-i \f{(\tr U)}3 I_3 \in {\mathfrak p} \subset \fL$, $iU\in \fL$, and $\tr(U) = U_1+U_2 > 0$, we finally obtain that $i I_3 \in \fL$. Hence, by~\eqref{eq:dec_u(3)}, $\fL=\mathfrak u(3)$.

\medskip

\noindent
This concludes the proof.
\end{proof}

\end{document}